\let\cl@chapter\undefined
\newtheorem{assumption}{Assumption}
\g@addto@macro{\@algocf@init}{\SetKwInOut{Parameter}{Parameters}} 
\DeclareMathOperator*{\argmin}{argmin}
\newcommand{\Var}{\operatorname{Var}}
\DeclareMathOperator*{\Vol}{Vol}
\DeclareMathOperator*{\supp}{supp}
\newcommand{\EE}{\mathbb{E}}
\DeclareMathOperator*{\Order}{O}
\DeclareMathOperator*{\tand}{and}
\DeclareMathOperator*{\mean}{mean}
\DeclareMathOperator*{\twhen}{when}
\newcommand{\prob}{P}
\newcommand{\unif}{\normalfont\textsc{unif}}
\DeclareMathOperator*{\increment}{\mathrel{+}=}
\DeclareMathOperator*{\prodincr}{\mathrel{*}=}
\DeclareMathOperator*{\append}{\mathrel{\cup}=}
\newcommand{\asc}{\overset{a.s}{=}}
\newcommand{\bdot}{\left(\cdot\right)}
\newcommand{\iid}{i.i.d.\ \xspace}
\newcommand{\ind}[1]{{\bf 1}{\left\{#1\right\}}}
\newcommand{\ig}[1]{\ind{#1 \sim H}}
\newcommand{\EEX}[1]{\EE\left[#1\right]}
\newcommand{\rvrSize}{{\normalfont\textsc{m}}\xspace}
\newcommand{\hrvrSize}[2]{\hat{\rvrSize}_{#1,#2}}
\newcommand{\RR}{\mathbb{R}}
\newcommand{\name}{Ripple\xspace}
\newcommand{\CC}{{\em CC}\xspace}
\newcommand{\MCMC}{{\em MCMC}\xspace}
\newcommand{\Motivo}{{Motivo}\xspace}
\newcommand{\RGPM}{{RGPM}\xspace}
\newcommand{\IMPRG}{{IMPRG}\xspace}
\newcommand{\RSS}{{RSS}\xspace}
\newcommand{\GUISE}{{GUISE}\xspace}
\newcommand{\PSRW}{{PSRW}\xspace}
\newcommand{\RWT}{{\em RWT}\xspace}
\newcommand{\RWTE}{\RWT Estimate\xspace}
\newcommand{\hmu}{\hat{\mu}}
\newcommand{\hmus}{\hmu_{*}}
\newcommand{\hsigma}{\hat{\sigma}}
\newcommand{\optionalprefix}[1]{\ifthenelse{\isempty{#1}}{}{$#1$-}}
\newcommand{\optionalsuperscript}[1]{\ifthenelse{\isempty{#1}}{}{^{(#1)}}}
\newcommand{\HON}[1][]{\optionalprefix{#1}{\em HON}\xspace}
\newcommand{\CIS}[1][]{\optionalprefix{#1}{\em CIS}\xspace}
\newcommand{\cG}[1][]{\mathcal{G}\optionalsuperscript{#1}}
\newcommand{\cS}[1][]{\mathcal{S}\optionalsuperscript{#1}}
\newcommand{\cI}[1][]{\mathcal{I}\optionalsuperscript{#1}}
\newcommand{\cJ}[1][]{\mathcal{J}\optionalsuperscript{#1}}
\newcommand{\cV}[1][]{\mathcal{V}\optionalsuperscript{#1}}
\newcommand{\cC}[1][]{\mathcal{C}\optionalsuperscript{#1}}
\newcommand{\cE}[1][]{\mathcal{E}\optionalsuperscript{#1}}
\newcommand{\bPhi}[1][]{{\boldsymbol \Phi}\optionalsuperscript{#1}}
\newcommand{\hPhi}[1][]{\widehat{\boldsymbol \Phi}\optionalsuperscript{#1}}
\newcommand{\dg}[1][]{{\bf d}\optionalsuperscript{#1}}
\newcommand{\bdg}[1][]{\bar{\bf d}\optionalsuperscript{#1}}
\newcommand{\tdg}[1][]{\widetilde{\bf d}\optionalsuperscript{#1}}
\newcommand{\N}[1][]{{\bf N}\optionalsuperscript{#1}}
\newcommand{\vo}{\zeta}
\newcommand{\cH}{\mathcal{H}}
\newcommand{\compoperator}{\circ}
\newcommand{\comp}[2]{#1\compoperator #2}
\newcommand{\xo}{x_{0}}
\newcommand{\epart}{Ergodicity-Preserving Stratification\xspace}
\newcommand{\EPART}{{\em EPS}\xspace}
\newcommand{\sg}[2]{#1\left(#2\right)} \newcommand{\cA}{\mathcal{A}}
\newcommand{\cB}[2]{\mathcal{B}_{#1,#2}}
\newcommand{\hbU}[2]{\widehat{{\bf U}}_{#1,#2}}
\newcommand{\hdeg}[1]{\widehat{\deg}_{#1}}
\newcommand{\hdg}{\widehat{\dg}}
\newcommand{\dgvo}[1]{\dg(\vo_{#1})}
\newcommand{\hdgvo}[1]{\hdg(\vo_{#1})}
\newcommand{\hpvo}[1]{\widehat{p}_{\bPhi_{#1}}(\vo_{#1},\cdot)}
\newcommand{\pvo}[1]{p_{\bPhi_{#1}}(\vo_{#1},\cdot)}
\newcommand{\hbeta}[2]{\widehat{\beta}_{#1,#2}}
\newcommand{\kdesc}{k-1}
\newcommand{\cT}{\mathcal{T}}
\newcommand{\dT}{\mathcal{T}^{\dagger}}
\newcommand{\dist}{{\normalfont\textsc{dist}}}
\newcommand{\rejec}{C_{\normalfont\textsc{rej}}}
\newcommand{\diam}{D}
\newcommand{\dds}{\ddot{s}}
\newcommand{\Vstar}{V^{*}}
\newcommand{\bX}{{\bf X}}
\newcommand{\bias}{{\normalfont\textsc{bias}}}
\newcommand{\cX}{\mathcal{X}}
\newcommand{\bY}{{\bf Y}}
\setlist[itemize]{leftmargin=*}
 \newcommand*{\affmark}[1][*]{\textsuperscript{#1}}
\begin{document}

\titlerunning{Sequential Stratified Regenerations}        \title{Sequential Stratified Regeneration: \MCMC for Large State Spaces with an Application to Subgraph Count Estimation}

\author{Carlos H. C. Teixeira\protect\affmark[1]\affmark[*]   \and
        Mayank  Kakodkar\protect\affmark[2]\affmark[*] \and 
        Vin\'{i}cius Dias\protect\affmark[1,3] \and
        Wagner Meira Jr.\protect\affmark[1] \and 
        Bruno Ribeiro\protect\affmark[2]
}

\authorrunning{Teixeira et al.} 

\institute{
            \Letter \xspace Carlos H. C. Teixeira \\
            \email{carlos@dcc.ufmg.br}\\
             \protect\affmark[1] Universidade Federal de Minas Gerais, Belo Horizonte, Brazil\\
              \protect\affmark[2]  Purdue University, West Lafayette, USA \\
              \protect\affmark[3]  Universidade Federal de Ouro Preto, Ouro Preto, Brazil\\
              \protect\affmark[*]  Equal contribution.
    }

\maketitle
\begin{abstract}
This work considers the general task of estimating the sum of a bounded function over the edges of a graph, given neighborhood query access and where access to the entire network is prohibitively expensive.
To estimate this sum, prior work proposes Markov chain Monte Carlo (\MCMC) methods that use random walks started at some seed vertex and whose equilibrium distribution is the uniform distribution over all edges, eliminating the need to iterate over all edges. 
Unfortunately, these existing estimators are not scalable to massive real-world graphs.
In this paper, we introduce \name, an \MCMC-based estimator that achieves unprecedented scalability by stratifying the Markov chain state space into ordered strata with a new technique that we denote {\em sequential stratified regenerations}.
We show that the \name estimator is consistent, highly parallelizable, and scales well.

We empirically evaluate our method by applying \name to the task of estimating connected, induced subgraph counts given some input graph.
Therein, we demonstrate that \name is accurate and can estimate counts of up to $12$-node subgraphs, which is a task at a scale that has been considered unreachable, not only by prior \MCMC-based methods but also by other sampling approaches.
For instance, in this target application, we present results in which the Markov chain state space is as large as $10^{43}$, for which \name computes estimates in less than $4$ hours, on average.

\end{abstract} 
\keywords{
	Markov Chain Monte Carlo,
	Random Walk,
	Regenerative Sampling,
	Motif Analysis,
	Subgraph Counting,
	Graph Mining
	}

\section{Introduction}
This work considers the following general task:
Let $\cG = (\cV,\cE)$ be a simple graph, where $\cV$ is the set of vertices, $\cE$ is the set of edges, and $\cE$ contains at most a single edge between any pair of vertices and no self-loops.
Our goal is to efficiently estimate the sum of a bounded function over all the edges of $\cG$,
\begin{equation}
	\label{eq.edge.sum.task}
	\mu(\cE) = \sum_{(u,v)\in\cE} f(u,v)\,
\end{equation}
where $f\colon \cE \to \RR$, $f(\cdot)<B$ is a bounded function for some constant $B \in \RR$ under the following query model from \citet{Avrachenkov2016}.

\begin{assumption}[Query Model]
\label{a.query.model}
Assume we are given arbitrary seed vertices and can query the neighborhood $\N(u) \triangleq \{v \in \cV : (u,v) \in \cE\}$ for any vertex $u \in \cV$ such that accessing the entire graph $\cG$ is prohibitively expensive.
\end{assumption}
This setting arises naturally in the subgraph counting problem, which we study in \Cref{sec.subgraph.counting}.
Simple Monte Carlo procedures are not useful because random vertex and edge queries are not directly available, and reservoir sampling would require iteration over all edges. 
Standard Markov chain Monte Carlo (\MCMC) methods cannot estimate the quantity in \Cref{eq.edge.sum.task} and are limited to estimate $\nicefrac{\mu(\cE)}{|\cE|}$, because $|\cE|$ in our task is unknown~\citep{ribeiro2012estimation}.
Generally, under \Cref{a.query.model}, \Cref{eq.edge.sum.task} is estimated using specialized \MCMC estimators that use a random-walk-like Markov chain that has a uniform distribution over the edges $\cE$ as its equilibrium distribution. 
However, these estimators~\citep{Avrachenkov2016} are impractical in large graphs because their running time is $O(|\cE|)$.

Traditional \MCMC methods are limited by their reliance on the Markov chain on $\cG$ reaching equilibrium or {\em burning in}.
Because the rate of convergence to equilibrium depends on the spectral gap~\citep{aldous-fill-2014}, a significant number of Markov chain steps is needed to {\em burn in} in order to produce accurate estimates of \Cref{eq.edge.sum.task}, particularly in large graphs.
Parallel approaches that divide the state space into disjoint ``chunks'', which are to be processed in parallel~\citep{wilkinson2006parallel,neiswanger2013asymptotically}, offer no respite because we cannot access the entire graph. In fact, $\cG$ may not even have disconnected components (i.e., disjoint chunks) that can be parallelized.
Therefore, traditional \MCMC on $\cG$ offers no meaningful parallelization opportunities and running times may be arbitrarily long.

{\em\bf Contributions.} This work introduces {\em sequential stratified regeneration} (\name), a novel parallel \MCMC technique that expands the application frontier of \MCMC to large state-space graphs $\cG$.
\name stratifies the underlying Markov chain state space into ordered strata that need not be disjoint chunks, rather, they need to be connected.
Markov chain regeneration~\citep{nummelin1978splitting} is then used to compute estimates in each stratum sequentially, using a recursive method, which improves regeneration frequencies and reduces variance. 
\name offers an unprecedented level of efficiency and parallelism for \MCMC sampling on large state-space graphs while retaining the benefits of \MCMC-based algorithms, such as low memory demand (polynomial w.r.t.\ output).

Surprisingly, the parallelism of \name comes from the regeneration rather than the stratification: the strata's job is to keep regeneration times short. 
We demonstrate that the estimates obtained by \name are consistent, among other theoretical guarantees. 
In addition, we empirically show the power of \name in a real-world application by specializing \Cref{eq.edge.sum.task} to subgraph counting in multi-million-node attributed graphs–to the best of our knowledge, a task at a scale that has been thought unreachable by any other \MCMC method. 
Our specific contributions to the subgraph counting problem include 
streaming-based optimizations coupled with a parallel reservoir sampling algorithm, 
novel efficiency improvements to the random walk on the \HON~\citep{Wang:2014} and
a theoretical analysis of scalability in terms of running time and memory w.r.t. the subgraph size, verified empirically on large datasets.

 \section{Background and Prior Work}
\label{sec.preliminaries}
The \MCMC random-walk-like Markov chain over the graph $\cG$ is defined as:
\begin{definition}[Random Walk on $\cG$] 
	\label{def.simplerw}	
	Given a simple graph $\cG = (\cV,\cE)$, a simple random walk is a time-homogenous Markov chain $\bPhi$ with state space $\cV$ and transition probability
	$p_{\bPhi}(u,v) = \nicefrac{1}{\dg(u)}$, when $(u,v) \in \cE$ and $0$ otherwise,
	where $\dg(u) = |\N(u)|$ is the degree of $u$ in $\cG$ and $\N(u) = \{v \colon (u,v) \in \cE\}$ is the neighborhood of $u$. 
\end{definition}
It is easy to check that the above random walk can be sampled under \Cref{a.query.model} and that on a connected graph, this walk samples edges uniformly at random in a steady state (check \Cref{sec.ht.mcmc} for details). 
Our notation is summarized in \Cref{sec.notation}.

\subsection{Regenerations in Discrete Markov Chains}
\label{sec.regen.mc}
The rate of convergence to stationarity of the random walk $\bPhi$ from \Cref{def.simplerw} depends on the spectral gap\footnote{The spectral gap is defined as $\delta = 1-\max\{|\lambda_2|, |\lambda_{|\cV|}|\}$, where $\lambda_i$ denotes the $i$-th eigenvalue of the transition probability matrix of $\bPhi$.}~\citep{aldous-fill-2014}.
As such, practitioners are encouraged to run a single, long sample path, which prevents them from splitting the task among multiple cores.
Usually, because the spectral gap is unknown or loosely bounded, practitioners use various diagnostics to \emph{eyeball} if the chain has mixed~\citep{rosenthal1995minorization}. 
 The variance of an estimate computed from a stationary chain~\citep{ribeiro2012estimation} also depends on the spectral gap.

A solution to the above problems is to {\em split}~\citep{nummelin1978splitting} the Markov chain using regenerations. 
Discrete Markov chains regenerate every time they enter a fixed state, which is referred to as a regeneration point.
This naturally yields the definition of a {\em random walk tour} (\RWT).

\begin{definition}[\RWT over $\bPhi$]
	\label{def:rwt}
	Given a time-homogenous Markov chain $\bPhi$ over finite state space $\cV$ and a fixed point $\xo\in \cV$, an \RWT $\bX =(X_{i})_{i=1}^{\xi}$ is a sequence of states visited by $\bPhi$ between two consecutive visits to $\xo$, that is, $X_{1} = \xo$ and $\xi = \min\{i>1 \colon X_{i+1} = \xo \}$ is the first return time to $\xo$.
\end{definition}
Because of the strong Markov property \citep[Chap-2,Thm-7.1]{bremaud2001markov}, {\RWT}s started at $\xo$ are \iid and can be used to estimate $\mu(\cE)$ from \Cref{eq.edge.sum.task} when $|\cE|$ is unknown~\citep{Avrachenkov2016,avrachenkov2018revisiting,teixeira2018graph,savarese2018monte,cooper2016fast,massoulie2006peer}. 
\begin{lemma}[{\RWTE}]
	\label{lem.rwte}
	Given the graph $\cG = (\cV,\cE)$ and the random walk $\bPhi$ from \Cref{def.simplerw},
	consider $f\colon \cE\to\RR$ bounded by $B$, and $\cT$, a set of $m$ {\RWT}s  started at $\xo \in \cV$ (\Cref{def:rwt}) sampled in a parallel $z$ core environment assuming each core samples an equal number of tours.
	Then,
	\begin{equation}
		\label{eq.rwte}
		\hmus(\cT; f, \cG) 
		= \frac{\dg(\xo)}{2m} 
		\sum_{\bX \in \cT} \sum_{j=1}^{|\bX|} 
		f(X_j,X_{j+1})\,,
	\end{equation}
	is an unbiased and consistent estimator of $\mu(\cE) = \sum_{(u,v)\in\cE} f(u,v)$ if $\cG$ is connected, where each $X_j$ refers to the $j$th state in the \RWT $\bX\in \cT$.

	The expected running time for sampling $m$ tours is $\Order\left(m/z \frac{2|\cE|}{\dg(\xo)}\right)$, and when $\cG$ is non-bipartite, the variance of the estimate is bounded as 
	\begin{equation}
	\label{eq.vb.rwt}
	\Var\left(\hmus(\cT)\right) 
	\leq
	\frac{3 B^2}{m} \frac{{|\cE|}^{2}}{\delta(\bPhi)}
	\,,
	\end{equation}
	where $\delta(\bPhi)$ is the spectral gap as defined in the beginning of this section.
\end{lemma}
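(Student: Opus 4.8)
The argument splits into a qualitative part (unbiasedness, consistency) and a quantitative part (running time, variance). Since $\cG$ is connected, $\bPhi$ is irreducible on the finite set $\cV$, hence positive recurrent with stationary distribution $\pi(u)=\dg(u)/(2|\cE|)$, and the strong Markov property makes the tours in $\cT$ i.i.d., so the plan is to reduce everything to a single tour $\bX=(X_1,\dots,X_\xi)$ with $X_1=\xo$ and first-return time $\xi$. For unbiasedness I would invoke the regenerative (renewal--reward) identity $\EEX{\sum_{j=1}^{\xi} h(X_j,X_{j+1})}=\pi(\xo)^{-1}\sum_{u\in\cV}\pi(u)\sum_{v}p_{\bPhi}(u,v)\,h(u,v)$, which follows by matching the ergodic average $\tfrac1n\sum_{j\le n}h(X_j,X_{j+1})\to\sum_u\pi(u)\sum_v p_{\bPhi}(u,v)h(u,v)$ against the $\approx n\pi(\xo)$ tours completed in $n$ steps. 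Taking $h=f$, using $\pi(u)=\dg(u)/(2|\cE|)$ and the bookkeeping $\sum_{u\in\cV}\sum_{v\in\N(u)}f(u,v)=2\mu(\cE)$ (each undirected edge traversed in both orientations, with $f$ symmetric on $\cE$) gives $\EEX{\sum_{j=1}^\xi f(X_j,X_{j+1})}=\tfrac{2|\cE|}{\dg(\xo)}\cdot\tfrac{\mu(\cE)}{|\cE|}=\tfrac{2\mu(\cE)}{\dg(\xo)}$, hence $\EEX{\hmus(\cT)}=\mu(\cE)$. Consistency then follows from the strong law of large numbers applied to the i.i.d.\ tours (the single-tour reward has finite mean since the chain is finite and positive recurrent), and the split over $z$ cores is immaterial because it only partitions the same i.i.d.\ sample.

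For the running time, each tour costs $\xi$ transitions, and Kac's formula gives $\EEX{\xi}=1/\pi(\xo)=2|\cE|/\dg(\xo)$; distributing the $m$ tours evenly over $z$ cores yields expected wall-clock time $\Order\!\big(\tfrac{m}{z}\,\tfrac{2|\cE|}{\dg(\xo)}\big)$.

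The variance bound is the core of the argument. By independence of tours, $\VarX{\hmus(\cT)}=\tfrac{\dg(\xo)^2}{4m}\,\VarX{S}$ with $S=\sum_{j=1}^{\xi}f(X_j,X_{j+1})$ a single-tour reward. Since $|f|<B$ we have $|S|\le B\xi$, so $\VarX{S}\le\EEX{S^2}\le B^2\,\EEX{\xi^2}$, and the whole bound reduces to controlling the second moment of the regeneration time by the spectral gap. I would use the deviation matrix $D=\sum_{t\ge0}(P^t-\Pi)$ of the reversible chain $\bPhi$ (where $\Pi$ has every row equal to $\pi$) together with the identity $\EEX{\xi^2}=\pi(\xo)^{-1}\big(1+2\,\pi(\xo)^{-1}D_{\xo\xo}\big)$ for the return-time second moment, which one verifies via the fundamental matrix. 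Non-bipartiteness of $\cG$ makes $\bPhi$ aperiodic, so its non-trivial eigenvalues lie in $(-1,1)$, and a reversible spectral decomposition gives $D_{\xo\xo}=\sum_{i\ge2}\tfrac{w_i}{1-\lambda_i}$ with $w_i\ge0$, $\sum_{i\ge2}w_i=1-\pi(\xo)$, and $\tfrac1{1-\lambda_i}\le\tfrac1{\delta(\bPhi)}$, so $D_{\xo\xo}\le 1/\delta(\bPhi)$. Since $\pi(\xo)\le1$ and $\delta(\bPhi)\le1$, this yields $\EEX{\xi^2}\le 3/(\pi(\xo)^2\,\delta(\bPhi))$; substituting $\pi(\xo)=\dg(\xo)/(2|\cE|)$ and collecting constants in $\VarX{\hmus(\cT)}=\tfrac{\dg(\xo)^2}{4m}\VarX{S}$ produces exactly $\tfrac{3B^2}{m}\tfrac{|\cE|^2}{\delta(\bPhi)}$.

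The main obstacle is this last step: pinning down the return-time second-moment identity and the bound $D_{\xo\xo}\le 1/\delta(\bPhi)$ carefully enough for the leading constant to land on $3$ — precisely where reversibility of the simple random walk and the non-bipartiteness hypothesis (which guarantees aperiodicity, hence a genuine two-sided spectral gap) enter. Everything else is routine once the regenerative identity and Kac's formula are available.
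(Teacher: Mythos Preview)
Your proposal is correct and mirrors the paper's proof essentially step for step: the renewal--reward identity for unbiasedness, the SLLN over i.i.d.\ tours for consistency, Kac's formula for the expected tour length and hence the running time, and the reduction of the variance to a bound on $\EEX{\xi^{2}}$ via the spectral gap. Your deviation-matrix identity $\EEX{\xi^{2}}=\pi(\xo)^{-1}\bigl(1+2\,\pi(\xo)^{-1}D_{\xo\xo}\bigr)$ together with the spectral bound $D_{\xo\xo}\le 1/\delta(\bPhi)$ is exactly the content of the Aldous--Fill results (their Eq.~2.21, Lemma~2.11 and Eq.~3.41) that the paper invokes in its auxiliary second-moment lemma.
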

The \RWTE can be considered a Las Vegas transformation of \MCMC, which takes random time but yields unbiased estimates of objectives, such as \Cref{eq.edge.sum.task}. The parallelism in the expected running time in \Cref{lem.rwte} is directly due to the independence of {\RWT}s.
Moreover, confidence intervals for the \RWTE can be computed, because $\sqrt{m}\frac{\hmus(\cT)-\mu(\cE)}{\hsigma(\cT)}$ approaches the standard normal distribution for sufficiently large $m$, where $\hsigma(\cT)^{2}$ is the empirical variance of $\hmus(\bX)$, the \RWTE computed using an individual tour $\bX \in \cT$. 

\subsection{Improving the Regeneration Frequency}
From \Cref{lem.rwte}, it is clear that increasing the degree of the regeneration point $\dg(\xo)$ and spectral gap $\delta(\bPhi)$ and decreasing $|\cE|$ reduces the variance as well as the running time of the \RWTE.
\citet{Avrachenkov2016} showed that using the {\em supernode} in a contracted graph as a regeneration point achieves the above reductions.
\begin{definition}[Contracted Graph] \citep{Avrachenkov2016}
    \label{def.contracted}
	Given a graph $\cG = (\cV,\cE)$ from \Cref{def.simplerw} and a set of vertices $I\subset\cV$, a contracted graph is a multigraph $\cG_I$ formed by collapsing $I$ into a single node $\vo_{I}$.
	The vertex set of $\cG_I$ is then given by $\cV\backslash I \cup \{\vo_{I}\}$, and its edge multiset is obtained by conditionally replacing each endpoint of each edge with $\vo_{I}$ if it is a member of $I$ and removing self-loops on $\vo_{I}$. 
	We refer to the set $I$ and the vertex $\vo$ as the supernode.
\end{definition}
Contractions benefit {\RWT}s because the supernode degree $\dg_{\cG_I}(\vo_{I})$ in $\cG_I$ and the spectral gap $\delta(\bPhi_{I})$ of the random walk on the contracted graph increase monotonically with $|I|$~\citep{Avrachenkov2016}. Moreover, {\RWT}s can be sampled on $\cG_{I}$ without explicit construction, as we see next.
\begin{remark}
	\label{rk.contracted}
	Let the multi-set $\N(\vo_{I}) \triangleq \uplus_{u \in I} \N_{\cG}(u) \backslash I$ be the neighborhood of the supernode in $\cG_I$ from \Cref{def.contracted}.
	Let $\bPhi_{I}$ be the simple random walk on $\cG_{I}$.
	An \RWT $(X_{i})_{i=1}^{\xi}$ on $\bPhi_{I}$ from $\vo_{I}$ is sampled by setting $X_{1} = \vo_{I}$, sampling $X_2$ u.a.r.\ from $\N(\vo_{I})$ and subsequently sampling transitions from $\bPhi$ until the chain enters $I$, i.e.,\ $\xi = \min\{i>1 \colon X_{\xi+1} \in I\}$.
\end{remark}

This construction naturally stratifies $\cE$ and decomposes $\mu(\cE)$ as $\mu(\cE_{*})+\mu(\cE\backslash\cE_{*})$, where we can exactly compute the $\mu(\cE_{*})$ and compute an \RWTE of $\mu(\cE\backslash\cE_{*})$ on the contracted graph. 
However, to compute the supernode degree, $\dg_{\cG_I}(\vo_{I})$; furthermore, to sample from $\N(\vo_{I})$, we need to enumerate the set of the edges incident on $I$ in $\cG$ given by $\cE_{*} \subset \cE$.
As such, a massive supernode $I$ (which is crucial when $|\cE|$ is large) makes enumerating $\cE_{*}$ prohibitively expensive. 
We overcome these issues and gain additional control over regenerations by further stratifying $|\cE|$.

\section{Sequential Stratified Regenerations}
\label{sec.estimator}

\name controls regeneration times through a {\em sequential stratification} of the vertices and edges of $\cG$ into ordered strata as illustrated in \Cref{fig.strata}, which allows us to control the regeneration frequency and the \RWTE variance.
For each stratum, we then construct a graph in which the supernode is created by collapsing all prior strata, from which {\RWT}s can be sampled.
We use the {\RWT}s from the previous strata to estimate the degree of and sample transitions from the supernode.
The core idea is described in two steps: \Cref{ripplestep.strat} details the stratification and conditions that it needs to satisfy and \Cref{sec.recursion} describes the recursion.
Finally, we show that the estimator bias converges to zero asymptotically in the number of tours.
Particularly for subgraph counting, we show that {\name}'s time complexity is independent of the (higher-order) graph size ($|\cE|$) and only depends polynomially on the diameter and maximum degree of the {\em input} graph and the subgraph size (\Cref{sec.subgraph.counting}).

\begin{figure}\centering
    \begin{minipage}[t]{0.35\linewidth}
        \centering
        \includegraphics[width=\linewidth]{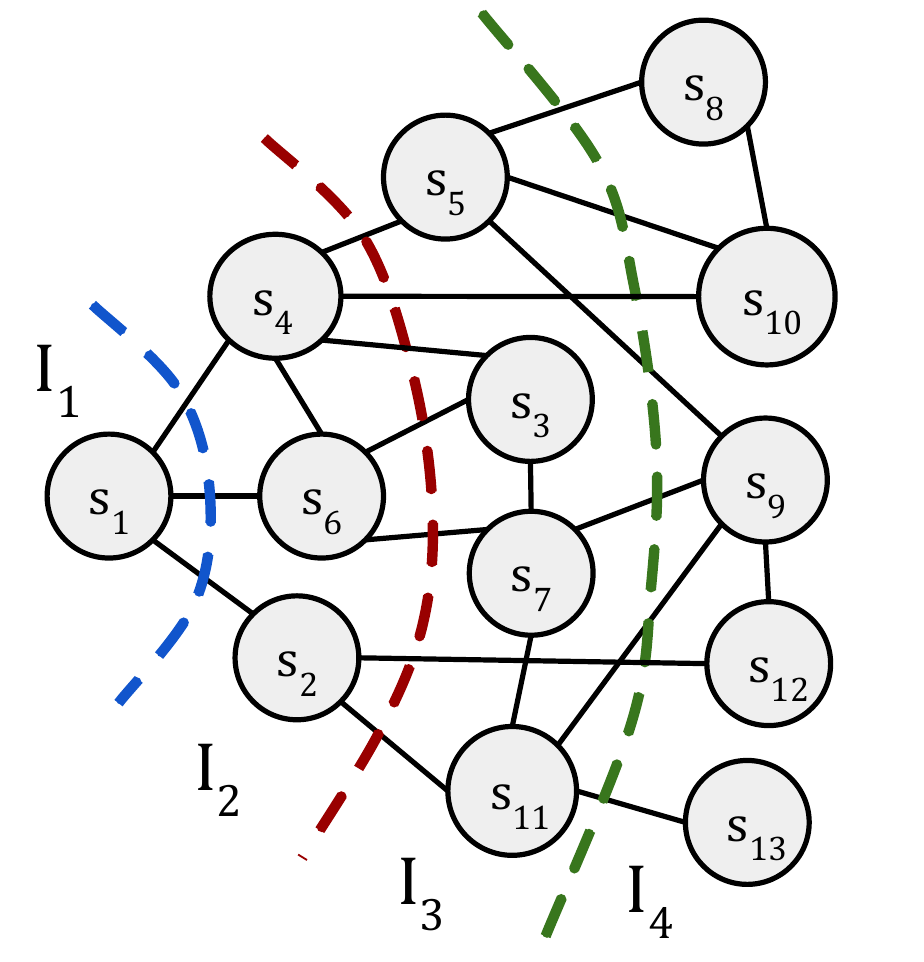}	
\subcaption{Stratification $\cV \equiv \cI_{1:4}$}
        \label{fig.strata.full}
    \end{minipage}
    \hfill
    \begin{minipage}[t]{0.25\linewidth}
        \centering
        \includegraphics[width=\linewidth]{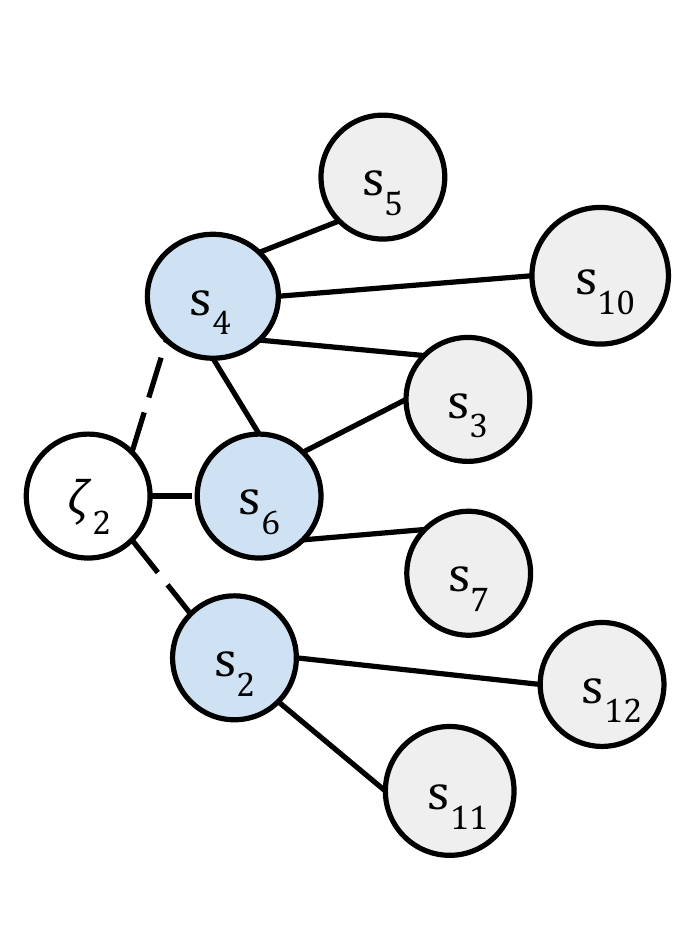}	
        \subcaption{$\cG_{2}$}
        \label{fig.strata.g2}
    \end{minipage}
    \hfill
    \begin{minipage}[t]{0.21\linewidth}
        \centering
        \includegraphics[width=\linewidth]{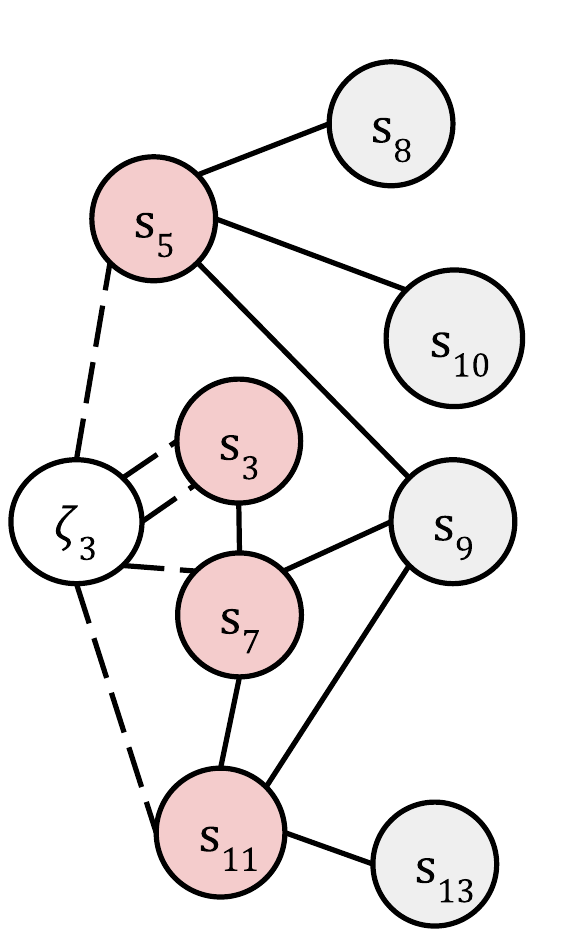}	
        \subcaption{$\cG_{3}$}
        \label{fig.strata.g3}
    \end{minipage}
    \hfill
    \begin{minipage}[t]{0.14\linewidth}
        \centering
        \includegraphics[width=\linewidth]{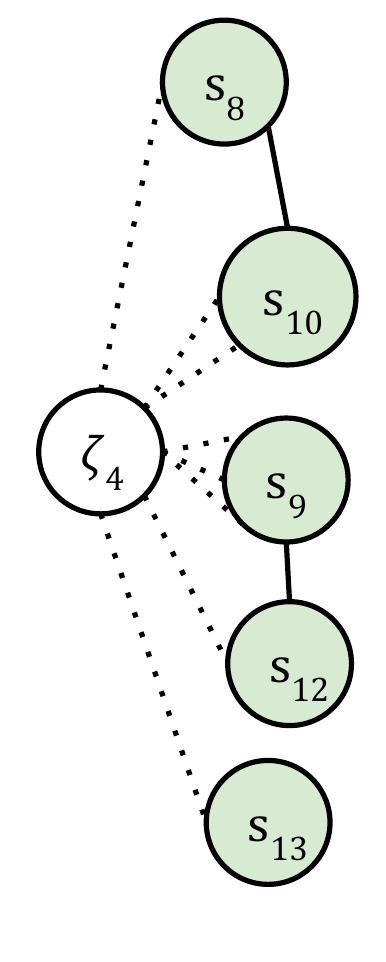}	
        \subcaption{$\cG_{4}$}
        \label{fig.strata.g4}
    \end{minipage}
    \caption{
        \Cref{fig.strata.full} shows a simple graph $\cG$ that is stratified into four strata $\{\cI_1, \cI_2, \cI_3, \cI_4\}$.  
        \Cref{fig.strata.g2,fig.strata.g3,fig.strata.g4} show the second, third and fourth graph strata constructed by \Cref{def.gr}.
        In the multi-graph $\cG_{2}$ (\Cref{fig.strata.g2}), vertices in $\cI_{1}$ are collapsed into $\vo_{2}$ and only edges incident on $\cI_{2}$ are preserved.
        The edge set therefore contains $\cJ_{2}$ and the edges between $\vo_{2}$ and $\cI_{2}$.
        Consequently, self-loops on $\vo_{2}$ and edges between $\cI_{3:4}$ are absent.
        \Cref{fig.strata.g3,fig.strata.g4} follow suit. 
        In each stratum $\cG_{r}$, {\RWT}s from $\vo_{r}$ are started by sampling u.a.r.\ from the dotted edges and estimates are computed over the solid edges.
    }
    \label{fig.strata}
\end{figure}

\subsection{Sequential Stratification}
\label{ripplestep.strat}
Consider the following vertex and edge stratification procedure. 
\begin{definition}[Sequential Stratification]
    \label{def.dependent.strat}
    Given $\cG = (\cV,\cE)$ from \Cref{def.simplerw}, a function $\rho\colon \cV \to \{1, \ldots, R\}$ induces the stratification $(\cI_{r},\cJ_{r})_{r=1}^{R}$ if $s \in \cI_{\rho(s)}$, for each $s\in \cV$, and $(u,v) \in \cJ_{\min\left( \rho(u),\rho(v) \right)}$, for each $(u,v) \in \cE$.
\end{definition}
Note that these strata are pairwise disjoint and their union is the set of vertices and edges of the graph.
Next, we describe the contracted graph over which {\RWT}s are to be sampled in each stratum.
\begin{definition}[$r$-th Graph Stratum]
    \label{def.gr}
    Let $\cA_{i:j} \triangleq \cup_{x=i}^{j} \cA_{x}$ be defined for any ordered tuple of sets.
    Let $(\cI_{r},\cJ_{r})_{r=1}^{R}$ be the stratification induced by $\rho$ from \Cref{def.dependent.strat} on $\cG = (\cV,\cE)$.
    The $r$-th graph stratum $\cG_{r} = (\cV_{r}, \cE_{r})$, $r>1$, is obtained by removing all edges not incident on $\cI_{r}$ and vertices that do not neighbor vertices in $\cI_{r}$ and subsequently contracting $\cI_{1:r-1}$ into $\vo_{r}$ according to \Cref{def.contracted}.
    Further, let $\bPhi_{r}$ denote the simple random walk on $\cG_{r}$.
\end{definition} 
It can be shown that the vertex set $\cV_{r}$ contains the $r$-th stratum $\cI_{r}$, the $r$-th supernode $\vo_{r}$, obtained by collapsing $\cI_{1:r-1}$, and vertices from subsequent strata neighboring $\cI_{r}$, $\cup_{u\in\cI_{r}}\N(u)\cap \cI_{r+1:R}$.
The edge multiset $\cE_{r}$ is the union of $\cJ_{r}$ and edges that connect $\vo_{r}$ to vertices in $\cI_{r}$ resulting from the graph contraction.
A detailed example is shown in \Cref{fig.strata}.  
Note that when $R=2$, \name reduces to the estimator from \citet{Avrachenkov2016}.

\paragraph{\epart.}
Because the \RWTE is consistent only if the underlying graph is connected according to \Cref{lem.rwte}, we have the following definition:
\begin{definition}[\epart (\EPART)]
    \label{def:epart}
    The stratification due to $\rho$ from \Cref{def.dependent.strat} is an \epart if each graph stratum from \Cref{def.gr} is connected, i.e.,\ $\bPhi_{r}$, $r>1$, is irreducible.
\end{definition}
We propose necessary and sufficient conditions on $\rho$ that yield an \EPART.
\begin{proposition}
    \label{prop.epart.suff}
    $\rho$ yields an \EPART if the following three conditions are satisfied:
    \begin{enumerate}[label=(\alph*), ref=\Cref{prop.epart.suff}~(\alph*)]
    \item \label{component.cond} for at least one vertex in each connected component of $\cG$, $\rho$ evaluates to $1$;
    \item \label{interconnect.cond} for each $u \colon \rho(u) = r$, there exists $v \in \N(u)$ such that $\rho(v) \leq r$; and
    \item \label{preconnect.cond} there exists $(u_0,v_0) \in \cE$ such that $\rho(u_0) = r$ and $\rho(v_0) < r$.
    \end{enumerate}
\end{proposition}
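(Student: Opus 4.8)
The plan is to verify directly the single requirement in \Cref{def:epart}, namely that every graph stratum $\cG_{r}$ with $r>1$ is connected (so that $\bPhi_{r}$ is irreducible and \Cref{lem.rwte} applies inside each stratum). I would first record the structure of $\cG_{r}$ spelled out just after \Cref{def.gr}: its vertex set is $\{\vo_{r}\}\cup\cI_{r}\cup\bigl(\cup_{u\in\cI_{r}}\N(u)\cap\cI_{r+1:R}\bigr)$ and its edge multiset is $\cJ_{r}$ together with the edges from $\vo_{r}$ to those $v\in\cI_{r}$ that have a $\cG$-neighbor in $\cI_{1:r-1}$. Two reductions then shrink the task. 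By \ref{component.cond}, $\cI_{1}$ meets every connected component of $\cG$, so $\cI_{1:r-1}\neq\emptyset$, $\vo_{r}$ is an actual vertex of $\cG_{r}$, and it suffices to connect every vertex of $\cG_{r}$ to $\vo_{r}$. Moreover a vertex $w$ of the third type above lies in $\cI_{r+1:R}$ and by construction has a $\cG$-neighbor $u\in\cI_{r}$; since $\min(\rho(w),\rho(u))=r$, \Cref{def.dependent.strat} places $(w,u)$ in $\cJ_{r}\subseteq\cE_{r}$, so $w$ is adjacent in $\cG_{r}$ to a vertex of $\cI_{r}$. Hence it remains only to connect each $u\in\cI_{r}$ to $\vo_{r}$ inside $\cG_{r}$.

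For that I would follow the ``downhill'' chains furnished by \ref{interconnect.cond}. Given $u\in\cI_{r}$, condition \ref{interconnect.cond} yields a neighbor $v$ with $\rho(v)\leq r$: if $\rho(v)<r$ the contracted edge from $\vo_{r}$ to $u$ lies in $\cE_{r}$ and we are done; if $\rho(v)=r$ then $(u,v)$ is an edge of $\cG[\cI_{r}]\subseteq\cJ_{r}$ and we repeat from $v$. Condition \ref{preconnect.cond} guarantees at least one vertex of $\cI_{r}$ with a strictly earlier neighbor, hence at least one contracted edge incident on $\vo_{r}$, so $\vo_{r}$ is not isolated. I would organize the remainder either as strong induction on $r$, using connectivity of $\cG_{2},\dots,\cG_{r-1}$ to control the collapsed set $\cI_{1:r-1}$, or, more cleanly, by contradiction: assume $\cG_{r}$ is disconnected, pick a connected component of the subgraph of $\cG$ spanned by the edges incident on $\cI_{r}$ that avoids $\cI_{1:r-1}$, and derive a contradiction from \ref{component.cond}, \ref{interconnect.cond} and \ref{preconnect.cond} taken together.

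The step I expect to be the real obstacle is the termination of this descent when \ref{interconnect.cond} returns only a same-stratum neighbor: one must rule out a ``trapped'' set $S\subseteq\cI_{r}$ all of whose $\leq r$-neighbors stay inside $S$ yet which is severed from $\vo_{r}$ in $\cG_{r}$. Excluding such an $S$ is exactly where \ref{component.cond}, \ref{interconnect.cond}, \ref{preconnect.cond} and the connectedness of $\cG$ have to be combined, the aim being to show that a trapped $S$ would force a connected component of $\cG$ carrying no vertex mapped to $1$, contradicting \ref{component.cond}. The converse implications implicit in the ``necessary'' claim of the surrounding text are quicker and I would treat them by contraposition: a component of $\cG$ with no stratum-$1$ vertex splits off in $\cG_{r^{\star}}$ at its minimal stratum $r^{\star}$; a failure of \ref{preconnect.cond} at some $r$ isolates $\vo_{r}$ in $\cG_{r}$; and a failure of \ref{interconnect.cond} at a vertex of some stratum $r>1$ disconnects that vertex from $\vo_{r}$.
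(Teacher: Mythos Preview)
Your decomposition is correct and considerably more detailed than the paper's own argument, which dismisses sufficiency in a single line (``it is easily seen that these conditions sufficiently guarantee that each stratum is connected''). You have also put your finger on the genuine difficulty: ruling out a ``trapped'' $S\subseteq\cI_{r}$ all of whose $\le r$-neighbors lie in $S$ while $S$ is cut off from $\vo_{r}$ in $\cG_{r}$. The gap is that this step cannot be completed, because under (a)--(c) as stated such trapped sets exist and the proposition is false as written. Take $\cG$ to be the path on $a',a,b,g,f,c,d,e$ in that order, with $\rho(a')=\rho(a)=1$, $\rho(b)=\rho(c)=\rho(d)=\rho(e)=2$ and $\rho(g)=\rho(f)=3$. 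Conditions (a)--(c) all hold (for (b) at $c$ the witness is $d$; for (c) at $r=2$ take the edge $(a,b)$), yet in $\cG_{2}$ the edge $(g,f)$ is deleted because neither endpoint lies in $\cI_{2}$, and $\cG_{2}$ splits into $\{\vo_{2},b,g\}$ and $\{f,c,d,e\}$. Your intended contradiction with \ref{component.cond} therefore does not materialize: $\cG$ is connected and carries a $\rho=1$ vertex, yet $\{c,d,e\}$ is trapped. Neither the induction on $r$ nor the component argument helps, since the failure already occurs at $r=2$.

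What the paper actually verifies in its application (\Cref{prop.subgraph.epart}) is the strict form of \ref{interconnect.cond}: every $u$ with $\rho(u)=r>1$ has a neighbor $v$ with $\rho(v)<r$. Under that hypothesis your downhill chain terminates in one step, every vertex of $\cI_{r}$ is adjacent to $\vo_{r}$, and $\cG_{r}$ is trivially connected (with \ref{preconnect.cond} then redundant). The sufficiency claim is salvageable only with that strengthening; with the non-strict inequality the paper's ``easily seen'' is not a proof, and no proof exists.
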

Although the optimal stratification would depend on $\cG$ and the quantity being estimated, an ideal stratification would yield graph strata wherein the supernode degree and connectivity are maximized (\Cref{lem.rwte}) while minimizing the number of strata (because of the bias propagation described in \Cref{thm.bias.propagation}).
$\rho$ needs to be efficient as well because we will see that it is evaluated at each step of the random walk and the \name estimators from \Cref{def.recursive.step,def.estimator} heavily depend on it.
In \Cref{prop.ripple.complexity} we show that return times to the supernode are inversely proportional to the fraction of vertices in $\cI_{r}$ connected to $\cI_{1:r-1}$.

\subsection{Recursive Regenerations}
\label{sec.recursion}
Assume for the moment that in each stratum, $r = 2, \ldots, R$, we know the degree of the supernode $\dg(\vo_{r})$ and can sample directly from $p_{\bPhi_{r}}(\vo_{r}, \cdot)$, which is the transition probability out of $\vo_{r}$ in the graph stratum $\cG_{r}$.
We could then sample {\RWT}s $\cT_{r}$ and compute stratumwise {\RWTE}s, which when combined as $\mu(\cJ_{1}) + \sum_{r=2}^{R} \hmu_r(\cT_{r})$ provide an unbiased estimate of $\mu(\cE)$ as a direct consequence of \Cref{lem.rwte} and the linearity of expectations.
Unfortunately, the impracticality of this assumption, especially under \Cref{a.query.model} (when $R>2$), necessitates the following relaxation.

\begin{definition}[Supernode Estimates, $\hdgvo{r}$ and $\hpvo{r}$]
   \label{def.sn.estimates}
   Given an \EPART of $\cG$ (\Cref{def:epart}), the supernode estimates in the $r$-th graph stratum $\cG_{r}$ consist of the estimate of the degree $\hdgvo{r}$ and a sample from some approximate transition probability out of the supernode $\hpvo{r}$.
   Let $\hPhi_{r}$ be the random walk on $\cG_{r}$, where transitions are sampled according to $\bPhi_{r}$ everywhere except $\vo_{r}$, where they are sampled from $\hpvo{r}$.
\end{definition}
Although $\hPhi_{r}$ may not be reversible, {\RWT}s on $\hPhi_{r}$ retain pairwise independence and the benefits stated after \Cref{lem.rwte}. 
We leverage this fact in the following recursive solution that 
computes {\em supernode estimates} in the current stratum using supernode estimates and tours sampled in the previous strata.

\begin{definition}[\name's Recurrence Relation]
    \label{def.recursive.step}
    Given a graph $\cG$ stratified according to $\rho$ (\Cref{def:epart}) and some stratum $r$, $1 < r \leq R$, assume access to the result of previous recursive steps, i.e.,\ the set of $m_{q}$ {\RWT}s ($\dT_{q}$), supernode degree estimates $\hdgvo{q}$ and estimated transition probabilities out of the supernode $\hpvo{q}$ (\Cref{def.sn.estimates}) for all $2\leq q<r$. The estimate of the number of edges between $\cI_{q}$ and $\cI_{r}$ is given by
    \begin{equation}\label{eq:prwt.deg}
        \hbeta{q}{r} = 
		\frac{\hdgvo{q}}{|\dT_{q}|} \sum_{\bX \in \dT_{q}}\sum_{j=2}^{|\bX|} \ind{\rho(X_{j}) = r} 
		\,,
    \end{equation}
    where $X_{j}$ is the $j$-th state visited in tour $\bX$, and 
    by convention, $\hbeta{1}{r} = |\cE\cap \cI_{1} \times \cI_{r}|$ is exactly computed.
    The $r$-th supernode degree is then estimated as 
    \begin{equation}
        \label{eq:deg.full}
        \hdgvo{r} = \sum_{q=1}^{r-1}\hbeta{q}{r} \,.
    \end{equation}
    Transitions from $\hpvo{r}$ are sampled by sampling $q \in \{1, \ldots, r-1\}$ with probability $\hbeta{q}{r}$ and then sampling u.a.r.\ from $\hbU{q}{r}$, which is defined as
    \begin{equation}
        \label{eq.prwt.stitch.tour}
		\hbU{q}{r} = 
		\uplus_{\bX \in \dT_{q}}\uplus_{j=2}^{|\bX|}
		\left\{X_{j}  \colon \rho(X_{j}) = r \right\}
		\,,\, \twhen q>1\,,
	\end{equation}
	and as $\uplus_{u \in \cI_{1}} \N(u) \cap \cI_{r}$ by convention when $q=1$, where $\uplus$ is the multi-set union.
	$\hbU{q}{r}$, $q > 1$, is thus the multi-set of all states in $\cI_{r}$ visited by {\RWT}s on $\hPhi_{q}$.
	An \RWT so started stops when it reaches some state $X'$, where $\rho(X')=r$.
\end{definition}
\Cref{prop.rwt.phi.r} (\Cref{proofs.sec.estimator}) contains additional details for sampling {\RWT}s on $\bPhi_{r}$.
The above recursion therefore allows us to estimate supernode degrees and sample {\RWT}s to compute an estimate of $\mu(\cE)$ from \Cref{eq.edge.sum.task} as follows:
\begin{definition}[\name's $\mu$ Estimator]
    \label{def.estimator}
    Given the supernode degree estimates $\hdgvo{r}$ and {\RWT}s $\dT_{r}$ sampled in each graph stratum from \Cref{def.recursive.step} and the edge strata $\cJ_{r}$, $2\leq r \leq R$ based on an \EPART of $\cG$ from \Cref{def:epart}, the \name estimate is defined as
    \begin{align}\label{eq:prwt.full}
        \hmu_{\name}
        =&
        \mu(\cJ_{1}) +
        \sum_{r=2}^{R}\hmu\left(\dT_{2:r}; f \right) \,,\\
        \label{eq:prwt.partial}
        \text{where},\,
        \hmu\left(\dT_{2:r}; f \right)
        =&
        \frac{\hdgvo{r}}{2 |\dT_{r}|}\sum_{\bX \in \dT_{r}}\sum_{j=2}^{|\bX|-1} f(X_{j},X_{j+1})  \,,
    \end{align}	
    and $X_{j}$ is the $j$th state visited by the \RWT $\bX \in \dT_{r}$. 
    The dependence of $\dT_{r}$ and $\hdgvo{r}$ on $\dT_{2:r-1}$ is suppressed for brevity.
\end{definition}
This estimate of $\mu(\cE)$ is unbiased when the number of tours is infinite.
\begin{theorem}
    \label{thm.estimator}
    The \name estimate from \Cref{def.estimator} is a consistent estimator of $\mu(\cE)$ (asymptotically unbiased in the number of tours), that is,
    \begin{equation*}
        \lim_{|\dT_{2}| \to\infty} \ldots \lim_{|\dT_{R}| \to\infty} 
        \mu(\cJ_{1}) +
        \sum_{r=2}^{R}\hmu\left(\dT_{2:r}; f \right)
        \asc
        \mu(\cE)\,.
    \end{equation*}
\end{theorem}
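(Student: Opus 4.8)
The plan is to prove, by strong induction on the stratum index $r$, that the supernode estimates of \Cref{def.sn.estimates} converge to the true quantities attached to $\cG_{r}$ in \Cref{def.gr} once the tour counts of the earlier strata are driven to infinity in the prescribed nested order, and then to read the theorem off the unbiasedness statement in \Cref{lem.rwte}. Write $m_{r}=|\dT_{r}|$, let $\dgvo{r}$ be the true supernode degree and $\pvo{r}$ the true out-transition of $\bPhi_{r}$ at $\vo_{r}$, and let $\beta_{q,r}$ denote the number of edges of $\cG$ between $\cI_{q}$ and $\cI_{r}$, which is what $\hbeta{q}{r}$ estimates. The inductive claim is
\[
  P(r):\qquad \lim_{m_{2}\to\infty}\cdots\lim_{m_{r-1}\to\infty}\hdgvo{r}\ \asc\ \dgvo{r}
  \qquad\text{and}\qquad
  \lim_{m_{2}\to\infty}\cdots\lim_{m_{r-1}\to\infty}\hpvo{r}\ \asc\ \pvo{r},
\]
equivalently, that in this limit the perturbed walk $\hPhi_{r}$ of \Cref{def.sn.estimates} coincides with the genuine simple random walk $\bPhi_{r}$ on $\cG_{r}$. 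The base case $r=2$ holds exactly: by the conventions in \Cref{def.recursive.step}, $\hbeta{1}{2}=|\cE\cap\cI_{1}\times\cI_{2}|$ and $\hbU{1}{2}=\uplus_{u\in\cI_{1}}\N(u)\cap\cI_{2}$ are computed without sampling, so $\hdgvo{2}=\dgvo{2}$, $\hpvo{2}=\pvo{2}$, $\hPhi_{2}=\bPhi_{2}$, and \Cref{eq:prwt.partial} for $r=2$ is a bona fide \RWT estimate on the connected multigraph $\cG_{2}$.

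For the inductive step, fix $q\in\{1,\dots,r-1\}$; the $q=1$ contributions are exact, so take $q\ge 2$. Conditioned on $\dT_{2:q-1}$, the quantities $\hdgvo{q}$ and $\hPhi_{q}$ are fixed and the tours in $\dT_{q}$ are \iid with finite expected length, the latter because $\cG_{q}$ is connected by the \EPART hypothesis of \Cref{def:epart}, so $\vo_{q}$ is hit in finite mean time from every state of $\cG_{q}$. The strong law of large numbers applied to \Cref{eq:prwt.deg} then gives, as $m_{q}\to\infty$,
\[
  \hbeta{q}{r}\ \xrightarrow{\text{a.s.}}\ \hdgvo{q}\,\EE_{\hPhi_{q}}\!\Bigl[\,\textstyle\sum_{j=2}^{|\bX|}\ind{\rho(X_{j})=r}\ \Big|\ \dT_{2:q-1}\,\Bigr],
\]
and the same law applied to \Cref{eq.prwt.stitch.tour} shows that the empirical frequency of each $v$ in the multiset $\hbU{q}{r}$ converges a.s.\ to the ratio of the expected per-tour number of $\hPhi_{q}$-visits to $v$ to the expected per-tour number of $\hPhi_{q}$-visits to $\cI_{r}$. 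Both limits are rational functions of expected tour functionals of a finite irreducible chain whose only free parameter is the out-transition at $\vo_{q}$ (so reversibility of $\hPhi_q$ is not needed), hence continuous in $(\hdgvo{q},\hpvo{q})$. Applying the remaining limits $\lim_{m_{q-1}\to\infty}\cdots\lim_{m_{2}\to\infty}$, the induction hypothesis $P(q)$ forces $(\hdgvo{q},\hpvo{q})\to(\dgvo{q},\pvo{q})$, and by continuity the limits become the corresponding functionals of $\bPhi_{q}$. Reversibility of $\bPhi_{q}$ on $\cG_{q}$ now yields the standard per-tour identities: the expected number of visits to $v$ equals $\dg_{\cG_{q}}(v)/\dgvo{q}$ and the expected number of visits to $\cI_{r}$ equals $\beta_{q,r}/\dgvo{q}$, where, crucially, $\dg_{\cG_{q}}(v)$ for $v\in\cI_{r}$ counts only the edges of $\cG$ joining $v$ to $\cI_{q}$, because \Cref{def.gr} deletes every edge not incident on $\cI_{q}$. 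Hence $\hbeta{q}{r}\to\beta_{q,r}$ and the $q$-block of $\hbU{q}{r}$ tends to the law $v\mapsto\dg_{\cG_{q}}(v)/\beta_{q,r}$. Summing \Cref{eq:deg.full} over $q$ and combining the blocks of \Cref{eq.prwt.stitch.tour} with weights $\hbeta{q}{r}/\hdgvo{r}$ (limits distribute over these finite sums), and using $\dgvo{r}=\sum_{q}\beta_{q,r}$ together with the fact that $\pvo{r}(v)$ equals the number of edges of $\cG$ joining $v$ to $\cI_{1:r-1}$ divided by $\dgvo{r}$ (\Cref{def.gr}, \Cref{def.contracted}), delivers $P(r)$.

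Granting $P(r)$ for every $r$, the theorem follows. For each $r$, conditioning on $\dT_{2:r-1}$ and applying the strong law to \Cref{eq:prwt.partial} gives, as $m_{r}\to\infty$,
\[
  \hmu\bigl(\dT_{2:r};f\bigr)\ \xrightarrow{\text{a.s.}}\ \tfrac{1}{2}\,\hdgvo{r}\,\EE_{\hPhi_{r}}\!\Bigl[\,\textstyle\sum_{j=2}^{|\bX|-1}f(X_{j},X_{j+1})\ \Big|\ \dT_{2:r-1}\,\Bigr],
\]
again a continuous function of $(\hdgvo{r},\hpvo{r})$, which the outer limits together with $P(r)$ send to $\tfrac{1}{2}\dgvo{r}\,\EE_{\bPhi_{r}}[\sum_{j=2}^{|\bX|-1}f(X_{j},X_{j+1})]$. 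Extend $f$ by $0$ on the edges of $\cG_{r}$ incident on $\vo_{r}$; since $\vo_{r}$ occurs only at the endpoints of a tour these are exactly the first and last edges of every tour, while the remaining edges of $\cG_{r}$ are precisely $\cJ_{r}$, so this quantity is the single-tour mean of an \RWT estimate on $\cG_{r}$ for the zero-extended function, which equals $\mu(\cJ_{r})$ by \Cref{lem.rwte}. Because $\{\cJ_{r}\}_{r=1}^{R}$ partitions $\cE$ by \Cref{def.dependent.strat}, summing over $r$ and adding the exact term $\mu(\cJ_{1})$ yields $\mu(\cJ_{1})+\sum_{r=2}^{R}\hmu(\dT_{2:r};f)\ \asc\ \mu(\cJ_{1})+\sum_{r=2}^{R}\mu(\cJ_{r})=\mu(\cE)$, which is the assertion.

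The main obstacle is making the $R-1$ layers of nested almost-sure limits rigorous and legitimating the continuity and interchange steps. One must: (i) check that for $m_{2},\dots,m_{r-1}$ large enough the estimates are well defined — in particular $\hdgvo{r}>0$, which holds in the limit because the third condition of \Cref{prop.epart.suff} guarantees an edge between $\cI_{r}$ and $\cI_{1:r-1}$ and hence $\dgvo{r}\ge 1$, and $\supp\hpvo{r}$ eventually equals $\N(\vo_{r})\cap\cI_{r}$ — so that $\hPhi_{r}$ is irreducible with finite mean return time to $\vo_{r}$ and the tours of \Cref{prop.rwt.phi.r} terminate a.s.; (ii) control the null sets across the layers of conditioning, via a tower/Fubini argument over the countably many events on which the successive strong laws hold; and (iii) establish uniform integrability of the per-tour functionals in a neighbourhood of the true supernode parameters, which is what makes the expected-tour-functional maps genuinely continuous there and lets the limits pass through. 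The reversibility identities used to identify the limits are routine once the edge deletion in \Cref{def.gr} is taken into account, and I would isolate them in a preliminary lemma.
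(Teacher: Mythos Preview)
Your proposal is correct and follows essentially the same route as the paper: a strong-induction argument over the strata (the paper's \Cref{prop.inductive.claim}) showing that, in the nested limit, the supernode degree and out-transition estimates coincide with the true quantities so that $\hPhi_{r}$ becomes $\bPhi_{r}$, after which the per-stratum estimate is identified with $\mu(\cJ_{r})$ via the \RWTE consistency of \Cref{lem.rwte} applied to $f$ zero-extended on edges touching $\vo_{r}$ (the paper's \Cref{prop.perfect.sampling}). Your treatment is in fact more explicit than the paper's about the mechanism by which the nested limits interact---you pass the inner SLLN limit first and then push the outer limits through by continuity of expected tour functionals in $(\hdgvo{q},\hpvo{q})$, whereas the paper simply asserts that once the inductive claim holds the tours are ``perfectly sampled'' and invokes \Cref{lem.rwte} directly; the technical caveats you flag (well-definedness, null sets, uniform integrability) are genuine and are glossed over in the paper's argument.
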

In the finite regime, however, there exists a bias in each stratum that depends on the estimation bias in the previous strata, which we quantify as follows:
\begin{theorem}
    \label{thm.bias.propagation}
    Given the random walk $\bPhi_{r}$ on the \EPART-stratum $\cG_{r}$ from \Cref{def.gr,def:epart}, the estimates of the degree and transition probability at the supernode $\hdgvo{r}$ and $\hpvo{r}$ from \Cref{def.sn.estimates}, and assuming aperiodic $\bPhi_{r}$, the bias of the \name estimate in the $r$th stratum  from \Cref{eq:prwt.partial} is given by 
\begin{equation*}
        \left| \EEX{\hmu\left(\dT_{2:r}; f \right) \big| \dT_{2:r-1} } - \mu(\cJ_{r})\right|
    \leq 
    \left( \lambda_{r}\nu_{r} + |1-\lambda_{r}| \right)
    \frac{\sqrt{3} B|\cE_{r}|}{\sqrt{\delta_{r}}}\,,
    \end{equation*}
    where $\delta_{r}$ is the spectral gap of $\bPhi_{r}$, $B$ is the upper bound of $f$, $\nu_{r} = \|\hpvo{r} - \pvo{r}\|_2$ is the $L^{2}$ distance between transition probabilities out of $\vo_{r}$ \citep{aldous-fill-2014}(\Cref{def.l2.dist}) and $\lambda_{r} = \nicefrac{\hdgvo{r}}{\dgvo{r}}$.
\end{theorem}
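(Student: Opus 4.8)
The plan is to condition on the tours of the earlier strata and then compare two excursions from the supernode $\vo_{r}$ that share the same law after their first step. First, condition on $\dT_{2:r-1}$: this freezes $\hdgvo{r}$, $\hpvo{r}$, the chain $\hPhi_{r}$, and the scalars $\lambda_{r},\nu_{r}$, and makes $\dT_{r}$ a family of \iid {\RWT}s on $\hPhi_{r}$ from $\vo_{r}$ that is independent of $\dT_{2:r-1}$. Writing $S(\bX)\triangleq\sum_{j=2}^{|\bX|-1}f(X_{j},X_{j+1})$ for the per-tour statistic in \Cref{eq:prwt.partial}, linearity of expectation gives $\EEX{\hmu(\dT_{2:r};f)\mid\dT_{2:r-1}}=\tfrac{\hdgvo{r}}{2}\EEX{S(\widehat\bX)}$ for a single \RWT $\widehat\bX$ on $\hPhi_{r}$ (a.s.-finite since $\hPhi_{r}$ agrees with the irreducible $\bPhi_{r}$ off $\vo_{r}$). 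Next, introduce the idealized tour $\bX^{*}$, an \RWT on the true simple random walk $\bPhi_{r}$ from $\vo_{r}$, and extend $f$ to $\cE_{r}$ by $\widetilde f=f$ on $\cJ_{r}$ and $\widetilde f=0$ on the contracted edges incident to $\vo_{r}$; since every tour on $\cG_{r}$ begins and ends with a contracted edge, on which $\widetilde f$ vanishes, $\sum_{j=1}^{|\bX^{*}|}\widetilde f(X_{j}^{*},X_{j+1}^{*})=S(\bX^{*})$. Because $\cG_{r}$ is connected by \Cref{def:epart}, applying \Cref{lem.rwte} on $\cG_{r}$ with $\widetilde f$ yields $\hmus(\{\bX^{*}\};\widetilde f,\cG_{r})=\tfrac{\dgvo{r}}{2}S(\bX^{*})$ and $a^{*}\triangleq\EEX{S(\bX^{*})}=\tfrac{2\mu(\cJ_{r})}{\dgvo{r}}$. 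Put $\widehat a\triangleq\EEX{S(\widehat\bX)}$.

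Using $\hdgvo{r}=\lambda_{r}\dgvo{r}$, the conditional bias $|\EEX{\hmu(\dT_{2:r};f)\mid\dT_{2:r-1}}-\mu(\cJ_{r})|$ equals $\tfrac{\dgvo{r}}{2}|\lambda_{r}\widehat a-a^{*}|$, which I split as $\tfrac{\dgvo{r}}{2}\lambda_{r}|\widehat a-a^{*}|+\tfrac{\dgvo{r}}{2}|1-\lambda_{r}|\,|a^{*}|$. The second summand is $|1-\lambda_{r}|\,|\mu(\cJ_{r})|\le|1-\lambda_{r}|B|\cE_{r}|\le|1-\lambda_{r}|\tfrac{\sqrt3 B|\cE_{r}|}{\sqrt{\delta_{r}}}$, the last inequality because $\delta_{r}\le1$ for any reversible chain. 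For the first summand, note that $\widehat\bX$ and $\bX^{*}$ evolve under $\bPhi_{r}$ identically once the first step is fixed, so conditioning on it gives $\widehat a-a^{*}=\sum_{v}(\widehat p(v)-p(v))g(v)$, where $p\triangleq\pvo{r}$, $\widehat p\triangleq\hpvo{r}$, and $g(v)\triangleq\EEX{S(\bX)\mid X_{1}=\vo_{r},\,X_{2}=v}$ is the expected $f$-reward accumulated by the $\bPhi_{r}$-excursion from $v$ up to its absorption at $\vo_{r}$. Since $\sum_{v}\widehat p(v)=\sum_{v}p(v)$, one may recenter $g$ by the constant $a^{*}=\EEX{g(X_{2}^{*})}$ (its mean under $p$), so that Cauchy--Schwarz in the $p$-weighted ($\chi^{2}$-divergence-type) inner product underlying \Cref{def.l2.dist} yields $|\widehat a-a^{*}|\le\nu_{r}\bigl(\Var(g(X_{2}^{*}))\bigr)^{1/2}$, with $X_{2}^{*}$ the first step of $\bX^{*}$.

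The crux is the spectral estimate $\tfrac{\dgvo{r}}{2}\bigl(\Var(g(X_{2}^{*}))\bigr)^{1/2}\le\tfrac{\sqrt3 B|\cE_{r}|}{\sqrt{\delta_{r}}}$, which then closes everything. I would obtain it from the single-tour variance of the \RWTE on $\cG_{r}$: by the law of total variance $\Var(g(X_{2}^{*}))=\Var\bigl(\EEX{S(\bX^{*})\mid X_{2}^{*}}\bigr)\le\Var(S(\bX^{*}))$, while \Cref{eq.vb.rwt} applied to $\bPhi_{r}$ with $m=1$ --- legitimate precisely because the assumed aperiodicity of $\bPhi_{r}$ makes $\cG_{r}$ non-bipartite --- together with $\hmus(\{\bX^{*}\};\widetilde f,\cG_{r})=\tfrac{\dgvo{r}}{2}S(\bX^{*})$ gives $\bigl(\tfrac{\dgvo{r}}{2}\bigr)^{2}\Var(S(\bX^{*}))\le 3B^{2}|\cE_{r}|^{2}/\delta_{r}$. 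Hence $\tfrac{\dgvo{r}}{2}\lambda_{r}|\widehat a-a^{*}|\le\lambda_{r}\nu_{r}\tfrac{\sqrt3 B|\cE_{r}|}{\sqrt{\delta_{r}}}$, and adding the $|1-\lambda_{r}|$ term from the previous paragraph gives the stated bound.

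The main obstacle is this last step together with its interface with Cauchy--Schwarz: one must make sure the weighting used to define $\nu_{r}$ in \Cref{def.l2.dist} is exactly the one for which the recentered $\|g-a^{*}\|$ equals the standard deviation of $g(X_{2}^{*})$ (i.e.\ the $\chi^{2}$-divergence-type distance with reference $\pvo{r}$); a different convention forces one to reconcile the two norms. If one prefers not to route through \Cref{eq.vb.rwt}, the alternative is to identify $g$ with the solution of the Dirichlet--Poisson system $(I-\bPhi_{r})\phi=\widetilde h$ off $\vo_{r}$ (with $\widetilde h$ the one-step expected reward), carefully absorbing the off-by-one caused by excluding the final contracted edge, and to bound $\|\phi\|$ via $\|(I-\bPhi_{r})^{-1}\|_{2}\le 1/\delta_{r}$ on the orthogonal complement of the constants together with $1/\pi_{r}(\vo_{r})=O(|\cE_{r}|/\dgvo{r})$, essentially re-deriving \Cref{eq.vb.rwt} under the conditioning. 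The remaining ingredients --- the conditioning in the first step, the reduction via $\widetilde f$ and \Cref{lem.rwte}, and the elementary splitting --- are routine.
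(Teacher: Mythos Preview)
Your argument is correct, but it takes a different route from the paper's. The paper does not split via the triangle inequality; instead it casts the whole thing as an importance-sampling identity (its Lemma~\ref{lemma.p.ratio}): writing $F(\bX)=\sum_{j}f'(X_{j},X_{j+1})$ with $f'$ your $\widetilde f$, one has $\EE_{\hPhi_{r}}[\tfrac{\hdgvo{r}}{2}F]=\EE_{\bPhi_{r}}[\tfrac{\hdgvo{r}}{2}\tfrac{\hpvo{r}(X_{2})}{\pvo{r}(X_{2})}F]$ and $\mu(\cJ_{r})=\EE_{\bPhi_{r}}[\tfrac{\dgvo{r}}{2}F]$, so the bias is a single expectation $\big|\EE_{\bPhi_{r}}[(\lambda_{r}\tfrac{\hat p}{p}-1)\cdot\tfrac{\dgvo{r}}{2}F]\big|$. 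One Cauchy--Schwarz then factors this into $\EE[(\lambda_{r}\tfrac{\hat p}{p}-1)^{2}]^{1/2}\cdot\EE[(\tfrac{\dgvo{r}}{2}F)^{2}]^{1/2}$; the first factor is computed \emph{exactly} as $\lambda_{r}^{2}\nu_{r}^{2}+(1-\lambda_{r})^{2}$ (then loosened to $(\lambda_{r}\nu_{r}+|1-\lambda_{r}|)^{2}$), and the second is bounded by the second moment of the tour length via \Cref{lemma.tour.length.sq}, yielding $3B^{2}|\cE_{r}|^{2}/\delta_{r}$.

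Compared with this, your decomposition separates the two error sources up front and uses the \emph{variance} of $S(\bX^{*})$ rather than its second moment: via the law of total variance and \Cref{eq.vb.rwt} you land on the same spectral factor, but your intermediate bound $\lambda_{r}\nu_{r}\tfrac{\sqrt{3}B|\cE_{r}|}{\sqrt{\delta_{r}}}+|1-\lambda_{r}|\,|\mu(\cJ_{r})|$ is sharper in the degree-error direction (no $1/\sqrt{\delta_{r}}$ on the second term before you loosen it), whereas the paper's $\sqrt{\lambda_{r}^{2}\nu_{r}^{2}+(1-\lambda_{r})^{2}}\cdot\tfrac{\sqrt{3}B|\cE_{r}|}{\sqrt{\delta_{r}}}$ is Pythagorean rather than triangle and hence tighter when both errors are present. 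Your caveat about the $L^{2}$ convention is well placed: the paper's \Cref{def.l2.dist} is the $\chi^{2}$ form, and both proofs implicitly use $\nu_{r}$ as its square root; once that is fixed, your recentered Cauchy--Schwarz in the $p$-weighted inner product is exactly the right pairing.
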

Therefore, the bias in each stratum affects the bias in subsequent strata.
Consequently, we control the empirical variance in each stratum by increasing the number of tours sampled (we detail this for subgraph counting in \Cref{sec.subgraph.counting}).

\section{Applying \name to Count Subgraphs}
\label{sec.subgraph.counting}
We now focus on a concrete implementation of \name to count subgraphs on a given simple input graph $G=(V,E,L)$ with vertices $V$, edges $E$, and attribute function $L$, which is assumed to be finite and undirected.
In general, a subgraph induced by any $V'\subset V$ on $G$ is given by $\sg{G}{V'} = (V', E\cap (V'\times V'), L)$.
However, in this work, we are interested in subgraphs $\sg{G}{V'}$ that are connected and where $|V'| = k$, referred to as a connected, induced subgraph (\CIS) of size $k$ or \CIS[k].
As such, the task is defined as
\begin{definition}[Subgraph Count]
	\label{def.problem}
	Let $\cV[k]$ be the set of all {\CIS[k]}s of graph $G$, let $\sim$ denote the graph isomorphism equivalence relation (or any equivalence relation), and let $\cH$ be an arbitrary set of pairwise nonequivalent {\CIS[k]}s.
	The subgraph count is defined as the $|\cH|$-dimensional vector $\cC[k] = (\cC[k]_H)_{H \in \cH}$, where $\cC[k]_H = \sum_{s \in \cV[k]} \ig{s}$, and $\ind{\cdot}$ is the indicator function.
\end{definition}
Therefore, $\cC[k]$ contains the count of subgraphs in $\cV[k]$ equivalent to each subgraph in $\cH$. 
We suppress the dependence of $\cC[k]$ on $\cH$ for simplicity.

Subgraph counting is challenging when $k>3$ in real-world input graphs because $\cV[k]$ is not tractably enumerable and naively sampling $k$ vertices to obtain {\CIS}s is challenging because $\nicefrac{|\cV[k]|}{|V|^{k}} \to 0$ (as evidenced by \Cref{tab:datasets}).
Next, we address this issue by reducing the subgraph counting problem to an edge sum (\Cref{eq.edge.sum.task}) over a higher-order graph that only provides neighborhood query access for large-real-world input graphs.
We also propose a stratification strategy compatible with the access model and introduce novel solutions to improve speed and memory requirements.
We defer the straightforward aspects to \Cref{sec.name.summary.alg}, wherein we summarize the entire algorithm (\Cref{alg:summary}).

\subsection{\MCMC on the Subgraph Space}\label{sec.mcmc.subgraph}
\citet{Wang:2014} proposed a network over subgraphs called the \HON, which exposes neighborhood query access from \Cref{a.query.model} and is therefore amenable to \MCMC solutions (which we optimize in \Cref{alg:r-sampling}).
\begin{definition}[Higher-Order Network ({\HON[k]}) \citep{Wang:2014}] \label{def:hon}	
The higher-order network or \HON $\cG[k] = (\cV[k], \cE[k])$ is a graph whose vertices are the set of all \CIS[k] contained in the input graph $G$, and $(u,v)$ form an edge in $\cE[k]$ if they share all but $k-1$ vertices, that is, $|V(u) \cap V(v)| = k-1$.
\end{definition}
In the \HON[k-1], the subgraph induced by an edge $(u,v) \in \cE[k-1]$, i.e., $\sg{G}{V(u) \cup V(v)}$, is a \CIS[k].
Thus, the subgraph counts from \Cref{def.problem} can then be expressed as an edge sum over $\cG \equiv \cG[k-1]$ as
\begin{equation}
	\label{eq.mu.edge.sum}
	\cC[k] = \mu(\cE[k-1]) =  \sum_{(u,v) \in \cE[k-1]} \left(\frac{\ig{\sg{G}{V(u) \cup V(v)}}}{\gamma(u,v)}\right)_{H \in \cH} \,,
\end{equation}
where $\gamma(u,v) = |\{(\ddot{u},\ddot{v}) \in \cE \colon 
V(u) \cup V(v) \equiv V(\ddot{u}) \cup V(\ddot{v})\}|$ is the number of edges that represent the same subgraph as $(u,v)$.
The set of edges sampled by a random walk on $\cG[k-1]$ is called the {\em pairwise subgraph random walk} (\PSRW). 
Having reduced the subgraph counting task to \Cref{eq.edge.sum.task}, we proceed with implementing \name.

\subsection{\epart for Subgraph Counting}
\label{eps.subgraph}
Toward using \name, we propose an \epart of $\cG$ via the stratification function $\rho$.
\begin{proposition}[\EPART for subgraphs]
	\label{prop.subgraph.epart}
	Consider the set of $n_1$ {\em seed} subgraphs $\cI_{1}$ whose vertex sets in $G$ are pairwise non-intersecting. Let $V(\cI_{1}) \triangleq \cup_{\dds\in\cI_{1}} V(\dds)$ be the set of all vertices in $G$ forming subgraphs in $\cI_{1}$.
	Let $\dist(u)$ be the shortest path distance from $u \in V$ to any vertex in $V(\cI_{1})$.
Define $\rho$ as 
	\begin{equation*}
		\rho(s) =  1  + \sum_{u\in V(s)} \left( \dist(u) + \ind{u \in V(\cI_{1}) \backslash \Vstar)}\right)\,,
	\end{equation*}
	where $\Vstar$ is the largest connected subset of $V(s)$ such that $\Vstar \subseteq V(\dds)$ for some seed vertex $\dds \in \cI_{1}$ with ties broken arbitrarily.
	If $\cI_{1}$ contains a subgraph from each connected component of $G$, the stratification from \Cref{def.dependent.strat} generated using $\rho$ is an \epart (\Cref{def:epart}).
\end{proposition}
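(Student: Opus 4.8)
The plan is to establish the three sufficient conditions of \Cref{prop.epart.suff} for the stated $\rho$; essentially all of the work sits in a single ``descent'' lemma. As preliminaries: since $\cI_{1}$ meets every connected component of $G$ containing at least $k-1$ vertices, the distance $\dist(u)$ is finite for every $u$ in such a component, so $\rho$ is a well-defined map into $\{1,\dots,R\}$ with $R=\max_{s}\rho(s)$ (reindex to drop empty strata). Writing $Z(s) \triangleq V(s)\cap V(\cI_{1}) = \{u\in V(s):\dist(u)=0\}$ and using that $u\in\Vstar(s)\subseteq V(\cI_{1})$ implies $\dist(u)=0$ and $\Vstar(s)\subseteq Z(s)$, one has $\rho(s) = 1 + \sum_{u\in V(s)}\dist(u) + |Z(s)\setminus\Vstar(s)|$, a sum of nonnegative terms (this also shows $\rho$ does not depend on the arbitrary tie-breaking, since $|Z(s)\setminus\Vstar(s)|=|Z(s)|-|\Vstar(s)|$). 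Hence $\rho(s)=1$ forces $\dist(u)=0$ for all $u\in V(s)$ and $\Vstar(s)=V(s)$; since $|\Vstar(s)|\le|V(\dds)|=k-1=|V(s)|$ for the host seed $\dds$, this means $V(s)=V(\dds)$, i.e.\ $s\in\cI_{1}$. This gives \Cref{component.cond}: each component of $\cG\equiv\cG[k-1]$ lies over a component $C$ of $G$ with $|C|\ge k-1$ (the standard connectivity property of the higher-order network), and the seed contained in $C$ is a vertex of that component with $\rho=1$.

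The crux is the following. \emph{Descent claim: if $\rho(s)\ge 2$, then $s$ has a neighbor $s'$ in $\cG[k-1]$ — that is, $s'=(V(s)\setminus\{w\})\cup\{x\}$ with $w\in V(s)$, $x\notin V(s)$ and $\sg{G}{V(s')}$ connected — satisfying $\rho(s')<\rho(s)$.} I would prove this by an explicit one-vertex swap, splitting on whether $s$ meets a seed. If $\Vstar(s)=\emptyset$ (equivalently $Z(s)=\emptyset$, i.e.\ every $\dist(u)\ge 1$), pick $a\in V(s)$ of minimum distance and a $G$-neighbor $a'$ of $a$ with $\dist(a')=\dist(a)-1$; minimality gives $a'\notin V(s)$. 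Remove a leaf $w\ne a$ of a spanning tree of $\sg{G}{V(s)}$ chosen to maximize $\dist(w)$ among such leaves (so $w$ is non-cut and $\dist(w)\ge\dist(a)$) and set $s'=(V(s)\setminus\{w\})\cup\{a'\}$: it is connected, the distance sum drops by $\dist(w)-\dist(a')\ge 1$, and the correction term $|Z(\cdot)\setminus\Vstar(\cdot)|$ cannot increase, because the only seed vertex that can appear in $s'$ is $a'$ (when $\dist(a)=1$), in which case $\Vstar(s')=\{a'\}$ absorbs it. So $\rho(s')<\rho(s)$.

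If instead $\Vstar(s)\ne\emptyset$ with host seed $\dds$, then $\rho(s)\ge 2$ forces $\Vstar(s)\subsetneq V(s)$ and hence $\Vstar(s)\subsetneq V(\dds)$ (equality would force $V(s)=V(\dds)$ and $\rho(s)=1$). Since $\sg{G}{V(\dds)}$ is connected, pick $b'\in\Vstar(s)$ and $x\in V(\dds)\setminus\Vstar(s)$ adjacent in $G$; maximality of $\Vstar(s)$ then forces $x\notin V(s)$. Remove a non-cut vertex $w\in V(s)\setminus\Vstar(s)$ — one exists by contracting $\Vstar(s)$ to a point and taking a tree leaf — and set $s'=(V(s)\setminus\{w\})\cup\{x\}$; it is connected and $\Vstar(s')\supseteq\Vstar(s)\cup\{x\}$. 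The distance sum changes by $-\dist(w)$, and a short check of $|Z(\cdot)\setminus\Vstar(\cdot)|$ shows the correction term changes by $\le 0$ if $\dist(w)>0$ and by $\le -1$ if $\dist(w)=0$ (then $w$ was a seed vertex outside $\Vstar(s)$ that leaves, while $x$ enters $\Vstar(s')$). Either way $\rho(s')<\rho(s)$. Given the descent claim, \Cref{interconnect.cond} and \Cref{preconnect.cond} follow at once: any $s$ with $\rho(s)=r\ge 2$ has a neighbor with $\rho<r$, which is stronger than the $\rho\le r$ demanded by \Cref{interconnect.cond} and supplies the edge required by \Cref{preconnect.cond} for every nonempty stratum; hence \Cref{prop.epart.suff} yields the \EPART. (Alternatively, and without even needing that the components of the higher-order network correspond to those of $G$: in each $\cG_{r}$ the descent claim makes every vertex of $\cI_{r}$ directly adjacent to $\vo_{r}$, while by \Cref{def.gr} every remaining vertex of $\cV_{r}$ neighbors $\cI_{r}$, so $\cG_{r}$ is connected.)

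The main obstacle is the descent claim, and within it the bookkeeping of the two competing contributions to $\rho$ — the distance sum and the $\Vstar$-correction — under a single swap: one must choose the deleted vertex $w$ to be simultaneously non-cut in $\sg{G}{V(s)}$ and outside $\Vstar(s)$, and the inserted vertex $x$ so that connectedness survives while the gain in the distance term is not cancelled by a change in the correction term. The non-cut-vertex selection (contract $\Vstar(s)$, take a spanning-tree leaf) and the case split on $\dist(w)$ are the delicate points; everything else is routine set arithmetic.
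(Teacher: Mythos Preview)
Your proposal is correct and follows essentially the same architecture as the paper: verify the three conditions of \Cref{prop.epart.suff}, with the substantive work being the same descent claim (every $s$ with $\rho(s)\ge 2$ has a \HON-neighbor of strictly smaller $\rho$), handled by the same two-case split on whether $s$ meets a seed. Your bookkeeping is in fact more careful than the paper's---you explicitly verify $x\notin V(s)$ via maximality of $\Vstar(s)$, you justify the existence of a non-cut vertex in $V(s)\setminus\Vstar(s)$ cleanly by contraction, and you track the change in the correction term $|Z|-|\Vstar|$ precisely---whereas the paper's Case~2 is terse to the point of leaving these checks implicit; one small imprecision in your write-up is the set containment $\Vstar(s')\supseteq\Vstar(s)\cup\{x\}$ (the new $\Vstar$ could in principle live in a different seed), but since you already observed $\rho$ depends only on $|\Vstar|$, the needed inequality $|\Vstar(s')|\ge|\Vstar(s)|+1$ follows immediately and your conclusion stands.
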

$\dist$ can be precomputed for all $u\in V$ using a single BFS in $\Order(|V| + |E|)$, and $\rho$ can be computed in $\Order(k)$.
Although $R$ is unknown a priori, it is upper bounded as $(k-1) \cdot \diam_{G}$, where $\diam_{G}$ is the diameter of $G$ and the \name estimator simply ignores {\em empty} strata, i.e., strata in which the estimated degree of the supernode $\dgvo{r} = 0$.
To control bias, we aim to reduce $\max_{u\in V} \dist(u)$ by recruiting seed subgraphs in $\cI_{1}$, which are far apart in $G$.

\subsection{Miscellaneous Optimizations}
\label{sec.implementation}
\paragraph{Controlling Memory through Streaming.}
In each pair of strata $r<t$, \Cref{def.recursive.step} uses tours $\dT_{r}$ to compute $\hmu(\dT_{2:r};f)$, $\hbeta{r}{t}$ and $\hbU{r}{t}$, which are, respectively, the estimates of $\mu(\cJ_{r})$ and the size of and sample from the set of vertices in $\cI_{t}$ connected to $\cI_{r}$.
Although $\hmu(\dT_{2:r};f)$ and $\hbeta{r}{t}$ can be computed as running sums, storing $\hbU{r}{t}$ requires memory on the order of the sum of all tour lengths, which is random. 
Our solution is to use \emph{Algorithm R}~\citep{vitter1985random}, to sample a fixed-size ($\rvrSize$) sample without replacement from all the tours in $\dT_{r}$ (See \Cref{sec.res.sampl}).
We note that although the hyperparameter $\rvrSize$ controls memory, it may introduce bias when the number of tours $|\dT_{r}| > \rvrSize$ due to (possible) oversampling, which we observe in \Cref{fig:acc-rvr-5} (\Cref{app:additional-results}).

\paragraph{Speeding up Subgraph Random Walks.}
To sample a random walk in the \HON, naively sampling u.a.r.\ from the neighborhood of a \CIS[k-1] requires $\Order(k^{4} \Delta_{G})$ operations, where $\Delta_{G}$ is the maximum degree in the input graph (see \Cref{sec.nei.alt.def}).
In \Cref{alg:r-sampling}, we propose a rejection sampling algorithm that does so efficiently using {\em articulation points}~\citep{hopcroft1973algorithm}. 

\begin{algorithm}\caption{Efficient Neighborhood Sampling in $\cG[\kdesc]$} \label{alg:r-sampling}
	\KwIn{\CIS[\kdesc] $s$, Graph $G$}
	\KwOut{$x\sim\unif(\N_{\cG[\kdesc]}(s))$}
	\SetKw{KwAnd}{and}
	\SetKw{KwOr}{or}
	\SetKw{KwContinue}{continue}
	\SetKw{KwInl}{Input:}
	\SetKw{KwOutl}{Output:}
	
	Let $\deg_s = \sum_{u \in V(s)} \dg(u)$ and $\cA_{s}$ be the \emph{articulating points} of $s$\label{alg:line.ap}

	\While {True}{
		Sample $u$ from  $V(s)$ w.p.\ $\propto \deg_s - \dg(u)$
		\tcp*{$u$ is the vertex to remove} \label{alg:line:remove}
		
		Sample $a$ from $V(s) \backslash \{u\}$ w.p.\ $\propto \dg(a)$ \label{alg:line:anchor}
		
		Sample $v \sim \unif(\N(a))$
		\tcp*{$v$ is the vertex to add} \label{alg:line:add}
		
$\bias = |N(v) \cap V(s) \backslash \{u\}|$\tcp*{$v$'s sampling bias} \label{alg:line:bias2}
		\If{ $\unif(0,1) \leq \nicefrac{1}{\bias}$ } { \label{alg:line:rej1}
			$x = \sg{G}{V(s)\cup \{v\} \backslash \{u\}}$ \label{alg:line:x}

			\If { $u \neq v$ \KwAnd ($u \notin \cA_{s}$ \KwOr $x$ is connected)} { \label{alg:line:rej2}
				\Return{$x$}
				\tcp*{Connectivity Check}
			}
}
}
\end{algorithm}

\begin{proposition}
	\label{prop:r-sampling}
	Given a subgraph $s \in \cV[\kdesc]$, \Cref{alg:r-sampling} samples u.a.r.\ from $\N_{\cG[\kdesc]}(s)$ in $\Order(k^2 \frac{\Delta_s + k |\cA_{s}|}{k-|\cA_{s}|} )$ expected time, where $\Delta_s \triangleq \max_{u \in V(s)} \dg_{G}(u)$ is the maximum degree of vertices in $s$, and $\cA_{s}$ contains articulation points of $s$.
\end{proposition}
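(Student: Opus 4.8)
For correctness, the plan is to treat a single pass through the \textbf{while} loop as a rejection‑sampling step whose output, conditioned on the loop returning, is uniform over $\N_{\cG[\kdesc]}(s)$. First I would record that two subgraphs in $\cV[\kdesc]$ are adjacent in $\cG[\kdesc]$ exactly when they share $k-2$ vertices, so every neighbour of $s$ is of the form $x \triangleq \sg{G}{V(s)\cup\{v\}\setminus\{u\}}$ for a unique \emph{valid swap} $(u,v)$ --- $u\in V(s)$, $v\in\cV\setminus V(s)$, and $x$ connected --- and $u$ (the dropped vertex) and $v$ (the added vertex) are recoverable from the neighbour, so valid swaps biject with $\N_{\cG[\kdesc]}(s)$. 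Next I would multiply the three sampling probabilities of the loop for a fixed admissible triple $(u,a,v)$ with $a\in V(s)\setminus\{u\}$ and $v\in\N(a)$; using $\sum_{w\in V(s)}(\deg_s-\dg(w)) = (k-2)\deg_s$ (because $|V(s)| = k-1$), the $\deg_s-\dg(u)$ and $\dg(a)$ factors cancel and the product equals $\tfrac{1}{(k-2)\deg_s}$ regardless of $(u,a,v)$, i.e.\ the proposal is uniform over all $(k-2)\deg_s$ admissible triples. Since a swap $(u,v)$ is reached from exactly $\bias = |\N(v)\cap(V(s)\setminus\{u\})|$ of those triples, it is proposed with probability $\bias/((k-2)\deg_s)$, and the acceptance test $\unif(0,1)\le 1/\bias$ flattens this back to $\tfrac{1}{(k-2)\deg_s}$ for every proposed swap. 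I would then check that the guard ``$u\neq v$ and ($u\notin\cA_s$ or $x$ connected)'', together with the requirement that the returned $x$ lie in $\cV[\kdesc]$ (so $v\notin V(s)$), keeps precisely the valid swaps: for $u\notin\cA_s$ the set $V(s)\setminus\{u\}$ is connected and $v$ is attached to it, so $x$ is automatically connected, while for $u\in\cA_s$ the explicit DFS decides it. Hence every valid swap is returned in a given iteration with the common probability $\tfrac{1}{(k-2)\deg_s}$; as the iterations are i.i.d.\ and the return probability $|\N_{\cG[\kdesc]}(s)|/((k-2)\deg_s)$ is positive whenever $s$ has a neighbour, the loop terminates almost surely and delivers a sample uniform on $\N_{\cG[\kdesc]}(s)$.

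For the running time, the number of iterations is geometric with mean $(k-2)\deg_s/|\N_{\cG[\kdesc]}(s)|$ by the computation above. The articulation points $\cA_s$ and the value $\deg_s$ are computed once in $\Order(k^{2})$ (Hopcroft--Tarjan) and $\Order(k)$ respectively; inside an iteration, drawing $u$, $a$, $v$ and evaluating $\bias$ cost $\Order(k)$, and only on passing the acceptance test do we form $x$ and, when $u\in\cA_s$, run the connectivity DFS, each $\Order(k^{2})$ --- so the expected per‑iteration work is $\Order(k^{2})$. It then remains to lower‑bound $|\N_{\cG[\kdesc]}(s)|$: I would count valid swaps by summing, over the $k-1-|\cA_s|$ non‑articulation vertices $u$ (each attachment of an external vertex to $V(s)\setminus\{u\}$ yielding a connected --- hence valid --- neighbour, and different $u$ contributing disjoint neighbours), the number $|\N(V(s)\setminus\{u\})\setminus V(s)|$, and bound this sum below in terms of $\deg_s$, $\Delta_s$ and $|\cA_s|$; substituting into the mean iteration count and multiplying by $\Order(k^{2})$ produces $\Order(k^{2}\,\tfrac{\Delta_s + k|\cA_s|}{k-|\cA_s|})$.

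The main obstacle is this last combinatorial lower bound on $|\N_{\cG[\kdesc]}(s)|$: one must cleanly separate the contribution of non‑cut vertices, whose removal always leaves a connected subgraph (so every external attachment is a genuine neighbour), from that of cut vertices, most of whose candidate swaps are killed by the connectivity check --- and it is exactly this split that yields the $1/(k-|\cA_s|)$ blow‑up and the additive $k|\cA_s|$ correction. The remaining steps are bookkeeping, the one delicate point being the correct treatment of the degenerate proposals ($v=u$, or $v\in V(s)\setminus\{u\}$, which would produce a subgraph of the wrong size) so that the post‑acceptance distribution is supported on exactly the set of valid swaps.
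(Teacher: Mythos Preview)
Your correctness argument tracks the paper's exactly: both compute the triple probability, marginalise over the anchor $a$ to obtain $P(u,v)\propto\bias$, use the $1/\bias$ acceptance to flatten, and let the final guard restrict to valid swaps. Your treatment is, if anything, more careful about the degenerate proposals than the paper's own proof.

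For the running time you take a genuinely different route, and the decisive step is missing. The paper never lower-bounds $|\N_{\cG[\kdesc]}(s)|$. Instead it \emph{decouples the two rejection stages}, bounding each acceptance probability on its own: $P(\text{pass rej1})\ge 1/\kdesc$ (since $\bias\le k-2$) and $P(\text{pass rej2}\mid\text{pass rej1})\ge(\kdesc-|\cA_s|)/\kdesc$, giving at most $\kdesc^{2}/(\kdesc-|\cA_s|)$ expected iterations with no reference to $\deg_s$ or to $|\N_{\cG[\kdesc]}(s)|$. The paper also charges the $O(k^{2})$ DFS only to the $\approx|\cA_s|/\kdesc$ fraction of iterations that actually invoke it, so the \emph{expected} per-iteration cost is $O(\Delta_s+k|\cA_s|)$, not your uniform $O(k^{2})$; the product of these two refinements is what yields the stated bound.

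Your plan instead requires the combinatorial inequality
\[
|\N_{\cG[\kdesc]}(s)|\;=\;\Omega\!\left(\frac{k\,\deg_s\,(k-|\cA_s|)}{\Delta_s+k|\cA_s|}\right),
\]
and this is false. Take $s$ to be a $(k-1)$-clique with a single pendant external edge: then $|\cA_s|=0$, $\Delta_s=\Theta(k)$, $\deg_s=\Theta(k^{2})$, yet $|\N_{\cG[\kdesc]}(s)|=k-2$, so the required lower bound is $\Theta(k^{3})$ against an actual value of $\Theta(k)$. The obstruction is structural: $\deg_s$ can be $\Theta(k^{2})$ purely from internal clique edges while the higher-order neighbour count stays $\Theta(k)$, and no bound on $|\N_{\cG[\kdesc]}(s)|$ can absorb that ratio. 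The paper sidesteps this by letting the $1/\bias$ acceptance eat the $\deg_s$ factor directly, without ever translating it into a statement about $|\N_{\cG[\kdesc]}(s)|$.
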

Therefore, the running time of \Cref{alg:r-sampling} is $\in \Order(k\Delta_s + k^2)$ when $s$ is dense ($|\cA_{s}| \approx 0$) and increases to $\Order(k^{2}\Delta_s + k^4)$ for sparse subgraphs, which is faster than the naive algorithm.

\paragraph{From Error Bounds to Tour Counts.}
\name auto-decides the number of {\RWT}s required in each stratum based on an approximate error bound $\epsilon$ provided as input such that the number of tours $\to\infty$ as $\epsilon \to 0$, and the \name estimate converges to the ground truth (\Cref{thm.estimator}).
Specifically, {\RWT}s are sampled until we satisfy 
\begin{equation}
	\label{eq.auto.tours}
	\nicefrac{\hsigma(\dT_{r}; f_{1})}{\sqrt{|\dT_{r}|}} \leq \epsilon \, \hmu(\dT_{r}; f_{1}) \,,
\end{equation}
where $\hmu(\dT_{r}; f_{1})$ is the \name estimate from \Cref{eq:prwt.partial} of the number of edges in the $r$-th graph stratum $\cG_{r}$ (i.e., $f_{1}\bdot = 1$), and $\hsigma^{2}(\dT_{r}; f_{1}) = \widehat{\Var}_{\bX\sim\dT_{r}}(\hmu(\bX; f_{1}))$ is the former's empirical variance over tours.

\paragraph{Performance Guarantees.}
Ignoring the complexity of loading the input graph into memory, we show that for subgraph counting, the memory and time requirements of \name are a polynomial in $k$.
In \Cref{sec.name.summary.alg}, we state and prove a detailed version in which the complexity also depends polynomially on the diameter and maximum degree of $G$ and is invariant to $|V|$ and $|E|$.
\begin{proposition}
	\label{prop.ripple.complexity.short} 
	Assuming a constant $m$ {\RWT}s sampled per stratum and ignoring graph loading,
	the \name estimator for \CIS[k] counts detailed in \Cref{sec.name.summary.alg}-\Cref{alg:summary} has total memory and time complexity in $\widehat{\Order}(k^3 + |\cH|)$ and $\widehat{\Order}(k^7 + |\cH|)$, respectively, when 
	all factors other than $k$ and $|\cH|$ are ignored.
\end{proposition}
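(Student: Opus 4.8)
The plan is to account separately for the memory and time consumed by the three nested loops implicit in \Cref{def.recursive.step,def.estimator}: the outer loop over the $R \le (k-1)\diam_G$ strata, the middle loop over the $m$ tours sampled in each stratum, and the inner loop over the steps of each tour. Since the statement asks only for the dependence on $k$ and $|\cH|$ with ``all other factors ignored,'' I would first observe that every quantity that is \emph{not} a function of $k$ (namely $m$, $R$, $\diam_G$, $\Delta_G$, $|V|$, $|E|$, the per-stratum return time) is to be treated as an $\Order(1)$ constant hidden inside the $\widehat{\Order}(\cdot)$ notation; this is exactly the convention announced right before the statement and made rigorous in the longer version in \Cref{sec.name.summary.alg}. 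The argument then reduces to bounding (i) the per-tour-step cost, (ii) the number of steps per tour, (iii) the storage per tour, and (iv) the output/bookkeeping overhead, each as a polynomial in $k$ and $|\cH|$, and composing them.

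For the time bound, the dominant per-step operation is sampling the next \CIS[k-1] in the \HON, which by \Cref{prop:r-sampling} costs $\Order(k^2(\Delta_s + k|\cA_s|)/(k-|\cA_s|)) \subseteq \Order(k^4)$ in the worst (sparse-subgraph) case once $\Delta_s$ is absorbed into the hidden constant. At each visited state we additionally evaluate $\rho$, which \Cref{prop.subgraph.epart} and the remark after it tell us costs $\Order(k)$, and we update the running sums for $\hmu(\dT_{2:r};f)$, $\hbeta{q}{r}$, and (via Algorithm~R) the reservoir $\hbU{q}{r}$; computing the function value $f(X_j,X_{j+1})$ in \Cref{eq.mu.edge.sum} requires checking isomorphism of a $k$-node subgraph against each of the $|\cH|$ patterns, contributing $\Order(\poly(k) + |\cH|)$ per step, which I would fold into an $\Order(k^c + |\cH|)$ term. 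Multiplying the per-step cost $\Order(k^4 + |\cH|)$ by the number of steps per tour — the expected return time to the supernode, which by \Cref{lem.rwte}/\Cref{prop.ripple.complexity} is $\Order(1)$ under the ``ignore all non-$k$ factors'' convention, up to a $\poly(k)$ slack from the stratification depth — and then by $m = \Order(1)$ tours and $R = \Order(k)$ strata, yields the claimed $\widehat{\Order}(k^7 + |\cH|)$; the extra $k^3$ over $k\cdot k^4 = k^5$ I expect to come from the stratum count $R \le (k-1)\diam_G$ together with the articulation-point connectivity check and a $\poly(k)$ factor in the return time, and I would track these powers carefully rather than hand-wave them.

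For the memory bound, the key point is that \name never materializes $\cG$, $\cG[k-1]$, or $\cE_*$: at any instant it stores only the current \CIS[k-1] (size $\Order(k^2)$, recording $k$ vertices and up to $\binom{k}{2}$ edges), the constant-size accumulators for $\hmu$, the $\Order(R) = \Order(k)$ partial degree estimates $\hbeta{q}{r}$ and running means, and the reservoirs $\hbU{q}{r}$, each of which Algorithm~R caps at $\rvrSize = \Order(1)$ subgraphs of size $\Order(k^2)$, across $\Order(R) = \Order(k)$ predecessor strata — giving $\Order(k^3)$ — plus the $|\cH|$-dimensional output vector $\cC[k]$ and the per-pattern isomorphism certificates, contributing $\Order(\poly(k)|\cH|)$ which collapses to the stated $\widehat{\Order}(k^3 + |\cH|)$. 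I would present this as a line-by-line audit of \Cref{alg:summary} (deferred to \Cref{sec.name.summary.alg}) showing every data structure has size $\poly(k)$ or $\Order(|\cH|)$.

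The main obstacle is \textbf{bounding the expected tour length} (the number of random-walk steps between consecutive visits to $\vo_r$) by a polynomial in $k$ alone. \Cref{lem.rwte} gives the expected return time as $2|\cE_r|/\dgvo{r}$, and naively $|\cE_r|$ is enormous; the whole point of the \EPART construction and of \Cref{prop.ripple.complexity} is that this ratio is controlled by the \emph{fraction} of vertices in $\cI_r$ attached to $\cI_{1:r-1}$, not by the absolute size of the higher-order graph. Making this precise for the subgraph-counting stratification of \Cref{prop.subgraph.epart} — i.e., showing that the distance-based $\rho$ guarantees each stratum's supernode is ``well-connected'' enough that return times are $\poly(k)$ once $\diam_G$ and $\Delta_G$ are treated as constants — is the delicate step, and I would lean on the detailed statement and proof promised in \Cref{sec.name.summary.alg} (the long version of this proposition) to supply that per-stratum return-time bound, then simply assemble the pieces here.
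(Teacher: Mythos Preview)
Your approach is the paper's approach: audit \Cref{alg:summary} line by line, bound the per-step cost via \Cref{prop:r-sampling}, bound the number of steps via the supernode return time (the paper's \Cref{lem.stratum.ret.time} and \Cref{prop.subgraph.alpha}), and multiply. The structure is right; the specific accounting has a few slips.

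\emph{Memory.} The reservoir data structure is an upper-triangular \emph{matrix} $[\hbU{q}{t}]_{2\le q<t\le R}$ (\Cref{sec.res.sampl}), so there are $\Order(R^2)$ reservoirs, not $\Order(R)$. Conversely, each stored \CIS[k-1] is recorded by its $k{-}1$ vertex labels, i.e.\ $\Order(k)$ space, not $\Order(k^2)$. Your two errors happen to cancel: the paper's count is $\Order(R^2)\cdot\rvrSize\cdot\Order(k)=\Order(k^3\diam_G^2\,\rvrSize)\in\widehat{\Order}(k^3)$.

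\emph{Time.} You correctly isolate the tour-length bound as the crux, but then leave it as ``$\Order(1)$ up to a $\poly(k)$ slack.'' The paper pins this down: the stratification of \Cref{prop.subgraph.epart} satisfies $\alpha_r=1$ (every $s\in\cI_r$ has a neighbor in $\cI_{1:r-1}$, as shown in the proof of \Cref{prop.subgraph.epart}), so \Cref{lem.stratum.ret.time} gives $\EE[\xi_r]\le 2\bdg_r\le 2\Delta_{\cG[k-1]}\in\Order(k^2\Delta_G)$. The $k^7$ therefore factors cleanly as $R\in\widehat{\Order}(k)$ strata, times $\widehat{\Order}(k^2)$ steps per tour, times $\widehat{\Order}(k^4)$ per step from \Cref{prop:r-sampling}; this is the bookkeeping you flagged as needing to be tracked.

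\emph{The $|\cH|$ term.} You put $|\cH|$ into the per-step cost and then claim the product is $\widehat{\Order}(k^7+|\cH|)$; that arithmetic does not work, since $k^3$ steps times $(k^4+|\cH|)$ gives $k^7+k^3|\cH|$. The paper avoids this by hashing: the per-step update to $\hmu$ touches a single key of a hashmap in amortized $\Order(1)$, and the additive $|\cH|$ appears only once, when writing out the final count vector.
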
 
More details for subgraph counting with \name are provided in \Cref{sec.name.summary.alg}.

\section{Experiments and Results}
\label{sec:results}
We now evaluate the \name estimator for $k$-node subgraph (\CIS[k]) counts on large-real-world networks. 
We show that \name outperforms the state-of-the-art method in terms of time and space and that \name converges to the ground truth for various pattern sizes as hyperparameters are varied. 
Additional experiments that evaluate the parallelism, etc.,\ are deferred to \Cref{app:additional-results}.
Our code is available at \texttt{https://github.com/dccspeed/ripple}.

\begin{itemize}
    
\item {\it Execution environment.}
Our experiments were performed on a dual Intel Xeon Gold 6254 CPU with 72 virtual cores (total) at 3.10 GHz and 392 GB of RAM. In addition, this machine is equipped with a fast SSD NVMe PCIex4 with 800 GB of free space available.

\item {\it Baselines.}
We use Motivo~\citep{Bressan:2019}, a fast and parallel C++ system for subgraph counting, as the baseline because it is the only method capable of counting large patterns (k>6), to the best of our knowledge.
Additionally, notice that existing \MCMC methods for subgraph counting, such as \IMPRG~\citep{chen2018mining} and \RGPM~\citep{teixeira2018graph}, cannot count beyond $k=5$ in practice.

\item {\it Datasets.} 
We use large networks from SNAP~\citep{snapnets}, representing diverse domains, which have been used to evaluate many subgraph counting algorithms~\citep{Bressan:2018, Bressan:2019}. \Cref{tab:datasets} presents the basic features of these datasets, including the order of magnitude of the \name estimates of the subgraph counts $|\cV[k]|$, $k =6, 8, 10, 12$. 

\item {\it Hyper-parameters $\cI_1$, $\rvrSize$ and $\epsilon$.} 
Finally, we evaluate the trade-off between accuracy and resource consumption by varying the aforementioned hyperparameters, detailed in \Cref{eps.subgraph,sec.implementation}. ($\rvrSize$ is evaluated in \Cref{app:additional-results}.)
\end{itemize}

\newcommand{\ra}[1]{\renewcommand{\arraystretch}{#1}}
\begin{table*}
\centering
\scalebox{0.9}{
 \begin{tabular}{lrrrr|cccc}
\bottomrule
 \multirow{2}{*}{\textbf{Graph}}  & \multirow{2}{*}{$\mathbf{|V|}$} &
 \multirow{2}{*}{$\mathbf{|E|}$} & \multirow{2}{*}{$\mathbf{\diam_{G}}$} & \multirow{2}{*}{$\mathbf{\Delta_{G}}$} &
 \multicolumn{4}{c}{\textbf{Magnitude of Est. \# of {\CIS}s}}\\
    &  &  & & & $|\cV[6]|$ & $|\cV[8]|$ & $|\cV[10]|$ & $|\cV[12]|$\\
\bottomrule 
Amazon   & $334,863$ & $925,872$ & 
44 & 549 & 
$10^{11}$ & $10^{15}$ & $10^{19}$ & $10^{22}$\\
DBLP   & $317,080$ &  $1,049,866$ & 
21 & 343 & 
$10^{12}$ &  $10^{16}$ & $10^{19}$ & $10^{23}$ \\
Cit-Pat.   & $3,774,768$ &  $16,518,948$  &
22 & 793 &
$10^{14}$ & $10^{18}$ & $10^{22}$ & $10^{26}$ \\
Pokec  &  $1,632,803$ & $30,622,564$ &  
11 & 14,854 & 
$10^{18}$ & $10^{25}$  & $10^{32}$ & $10^{38}$  \\
LiveJ.  & $3,997,962$  & $34,681,189$	& 
17 & 14,815 & 
$10^{19}$ &  $10^{25}$ & $10^{32}$ & $10^{38}$\\
Orkut  & $3,072,441$ & $117,185,083$ & 
9 & 33,313 & 
$10^{21}$ & $10^{28}$ & $10^{35}$ & $10^{43}$ \\
\bottomrule
 \end{tabular}
}

\caption{The graphs that we used along with their diameter $\diam_{G}$, maximum degree $\Delta_G$ and the estimated orders of magnitude of {\CIS[k]} counts,
$|\cV[k]|$.
}
 \label{tab:datasets}
\end{table*}

\subsection{Scalability Assessment}
\label{sec.scalability}
We start by assessing the scalability of the methods when estimating 
 \CIS[k] counts for $k\geq6$. 
To the best of our knowledge, Motivo is the only existing method capable of estimating these patterns. Motivo has two phases: a build-up phase, which constructs an index table in the disk, and a sampling phase that queries this table. 
We only measure the time taken by the build-up phase and the out-of-core (disk) usage because
this is a bottleneck for Motivo.
As such, we report the best-case scenario for Motivo, and the reported values are lower bounds for the actual time and space requirement.
For \name, we report the total time and the RAM usage as the space cost because 
our method works purely in memory.
Both methods were executed using all threads available.

In \Cref{tab:sys.comp.time,tab:sys.comp.mem}, we compare the running time and space usage of \name and Motivo.
We also report their rate of increase in terms of the subgraph size $k$ in columns $\nicefrac{\text{Time}^{(k)}}{\text{Time}^{(k-2)}}$ and $\nicefrac{\text{Space}^{(k)}}{\text{Space}^{(k-2)}}$.
We fix $\epsilon=0.003$, $|\cI_1|=10^{4}$ and $\rvrSize=10^7$ based on the analysis in \Cref{sec:conv-analysis} and \Cref{app:additional-results}.
For Motivo, we follow the authors' suggestions. 
In \Cref{app:additional-results}-\Cref{tab:disp-ripple}, we report the dispersion $\frac{\max-\min}{\mean}$ of the \name estimates generated in the measured runs to ensure that the results are not arbitrary.

\begin{table}\centering
	\scalebox{0.9}{
\begin{tabular}{llrrrrr}
\toprule

           &   &      \multicolumn{2}{c}{\textbf{\Motivo} Build-up only} & \multicolumn{2}{c}{\textbf{\name} ($\epsilon=0.003$)} & 
           \multirow{2}{*}{\bf \shortstack[c]{\\ \name \\gain (hrs)}}
           \\ 
           \cmidrule(lr){3-4}  
           \cmidrule(lr){5-6}
\textbf{Dataset} & \textbf{k} & \multicolumn{1}{c}{\bf Time (hrs)} &
\multicolumn{1}{c}{$\frac{\text{Time}^{(k)}}{\text{Time}^{(k-2)}}$} &  \multicolumn{1}{c}{\bf Time (hrs)}  & 
\multicolumn{1}{c}{$\frac{\text{Time}^{(k)}}{\text{Time}^{(k-2)}}$} &
\\
\toprule
\multirow{4}{*}{Amazon} 
          & 6  
          & $\mathbf{0.002\pm0.000}$& $-$ 
          & $0.020\pm0.000$ & $-$  
& \fpeval{0.002-0.020}
          \\
          & 8 
          & $\mathbf{0.006\pm0.000}$ & \fpeval{round(0.006/0.002,1)}$\times$
          & $0.029\pm0.000$ &  \fpeval{round(0.029/0.020,1)}$\times$
& \fpeval{0.006-0.029}
          \\
          & 10
          & $0.082\pm0.000$ & \fpeval{round(0.082/0.006,1)}$\times$
          & $\mathbf{0.056\pm 0.000}$ & \fpeval{round(0.056/0.029,1)}$\times$ 
& $+$\fpeval{0.082-0.056}
          \\
          & 12 
          & $3.630\pm 0.002$ &  \fpeval{round(3.630/0.082,1)}$\times$
          & $\mathbf{0.095\pm0.002}$ & \fpeval{round(0.095/0.056,1)}$\times$
& $+$\fpeval{3.630-0.095}
          \\
\cline{1-7}
\hline\multirow{4}{*}{DBLP} 
          & 6  
          & $\mathbf{0.002\pm0.000}$& $-$ 
          & $0.013\pm0.000$ & $-$ 
& \fpeval{0.002-0.013}
          \\
          & 8 
          & $\mathbf{0.007\pm0.000}$& \fpeval{round(0.007/0.002,1)}$\times$
          & $0.030\pm0.000$ & \fpeval{round(0.030/0.013,1)}$\times$
& \fpeval{0.007-0.030}
          \\
          & 10
           & $0.156\pm0.000$& \fpeval{round(0.156/0.007,1)}$\times$ 
           & $\mathbf{0.082\pm0.000}$ &  \fpeval{round(0.082/0.030,1)}$\times$
& $+$\fpeval{0.156-0.082}
          \\
          & 12 
          & $9.099\pm0.002$ & \fpeval{round(9.099/0.156,1)}$\times$  
          & $\mathbf{0.105\pm0.002}$ & \fpeval{round(0.105/0.082,1)}$\times$
& $+$\fpeval{9.099-0.105}
          \\
\cline{1-7}
\hline\multirow{4}{*}{Patents}
        & $6$  
        & $\mathbf{0.022\pm0.000}$ & $-$ 
        & $0.033\pm 0.000$ & $-$ 
& \fpeval{0.022-0.033}
        \\
        & 8  
        & $0.098\pm0.000$ & \fpeval{round(0.098/0.022,1)}$\times$
        & $\mathbf{0.051\pm 0.000}$  & \fpeval{round(0.051/0.033,1)}$\times$
& $+$\fpeval{0.098-0.051}
        \\
        & 10 
        &   {\em $>1.1$ hrs, crashed} & $-$ &  $\mathbf{0.090\pm0.001}$ & \fpeval{round(0.09/0.051,1)}$\times$
& $-$ \\
        & 12 
        &  {\em $>0.5$ hrs, crashed} & $-$
        & $\mathbf{0.117\pm0.003}$ & \fpeval{round(0.117/0.09,1)}$\times$
& $-$ \\
\cline{1-7}
\hline\multirow{4}{*}{Pokec}
        & $6$  
        & $\mathbf{0.012\pm0.000}$ & $-$ 
        & $0.459\pm 0.142$ & $-$ 
& \fpeval{0.012-0.459}
        \\
        & 8  
        & $\mathbf{0.128\pm0.000}$ & \fpeval{round(0.128/0.012,1)}$\times$
        & $0.759\pm 0.282$  & \fpeval{round(0.759/0.459,1)}$\times$
& \fpeval{0.128-0.759}
        \\
        & 10 
        &  $5.965\pm0.000$ & \fpeval{round(5.965/0.128,1)}$\times$
        &  $\mathbf{1.400\pm0.592}$ & \fpeval{round(1.400/0.759,1)}$\times$
& $+$\fpeval{5.965-1.4}
        \\
        & 12 
        &  {\em $>1.5$ hrs, crashed} & $-$ & $\mathbf{1.469\pm0.334}$ & \fpeval{round(1.469/1.400,1)}$\times$
& $-$ \\
\cline{1-7}

\hline\multirow{4}{*}{LiveJ.} 
        & 6  
        & $\mathbf{0.024\pm0.000}$ & $-$ 
        & $0.351\pm0.009$ & $-$ 
& \fpeval{0.0240-0.351}
        \\
        & 8 
        & $\mathbf{0.205\pm0.000}$  & \fpeval{round(0.205/0.024,1)}$\times$ 
        & $0.642\pm0.074$& \fpeval{round(0.642/0.351,1)}$\times$  
& \fpeval{0.205-0.642}
        \\
        & 10  
        & {\em $>2.3$ hrs, crashed} & $-$ & $\mathbf{1.76\pm 1.550}$ & \fpeval{round(1.76/0.642,1)}$\times$ 
& $-$ \\
        & 12 
        & {\em $>0.7$ hrs, crashed} &  $-$ 
        & $\mathbf{2.189\pm1.350}$ & \fpeval{round(2.189/1.76,1)}$\times$ 
& $-$ \\
         
\cline{1-7}
\hline\multirow{4}{*}{Orkut} 
        & 6  
        &  $\mathbf{0.032\pm0.000}$ & $-$ 
        & $0.669\pm0.026$ & $-$ 
& \fpeval{0.032-0.669}
        \\
        &  8  
        & $\mathbf{0.585\pm0.006}$  & \fpeval{round(0.585/0.032,1)}$\times$ 
        & $1.744\pm0.983$ & \fpeval{round(1.744/0.669,1)}$\times$ 
& \fpeval{0.585-1.744}
        \\
        & 10  
        &  {\em $>8.9$ hrs, crashed} & $-$ &  $\mathbf{2.633\pm1.065}$ & \fpeval{round(2.633/1.744,1)}$\times$ 
& $-$ \\
        & 12 
        & {\em $>1.8$ hrs, crashed} & $-$ 
        &  $\mathbf{3.967\pm3.162}$ & \fpeval{round(3.967/2.633,1)}$\times$ 
& $-$ \\
\bottomrule
\end{tabular}
}   \caption{
Running time comparison between \name and \Motivo. The last column shows that for large $k$, \name provides gains of up to 9 hours when Motivo can run to completion.  
\Motivo crashes for large $k$ on large graphs.
  }
	\label{tab:sys.comp.time}
\end{table}

\paragraph{Running time Scalability (\Cref{tab:sys.comp.time}).}
Although \Motivo outperforms \name for $k = 6,8$, it does not scale well for $k = 10,12$, where the execution terminates because of insufficient storage space.
Particularly, for DBLP, Motivo required approximately 10 minutes to process \CIS[10] but almost 9 hours for \CIS[12], a growth rate of $58\times$.
On the other hand, \name not only succeeded in {\bf all} configurations in less than 4 hours on average but also exhibited a smoother growth in running time, with the largest increase ratio being $2.7\times$, observed for DBLP and LiveJournal when $k$ went from $8$ to $10$. 
Furthermore, $\nicefrac{\text{Time}^{(k)}}{\text{Time}^{(k-2)}} < (\nicefrac{\text{k}}{\text{(k-2)}})^7$ in all cases according to \Cref{prop.ripple.complexity}.

\begin{table}\centering
	\scalebox{0.9}{
\begin{tabular}{llrrrrr}
\toprule
           &   &      \multicolumn{2}{c}{\textbf{\Motivo} Build-up only} & \multicolumn{2}{c}{\textbf{\name} ($\epsilon=0.003$)} &
           \multirow{2}{*}{\bf \shortstack[c]{\\ \name \\gain (GB)}}
           \\ 
           \cmidrule(lr){3-4}  
           \cmidrule(lr){5-6}
\textbf{Dataset} & \textbf{k} & 
\multicolumn{1}{c}{\bf Space (GB)} &
\multicolumn{1}{c}{$\frac{\text{Space}^{(k)}}{\text{Space}^{(k-2)}}$} &  \multicolumn{1}{c}{\bf Space (GB)} &
\multicolumn{1}{c}{$\frac{\text{Space}^{(k)}}{\text{Space}^{(k-2)}}$} &
\\
\toprule
\multirow{4}{*}{Amazon} 
          & 6  
          & $\mathbf{0.53\pm0.00}$& $-$ 
          & $4.69\pm0.06$ & $-$
& \fpeval{0.53-4.69}
          \\
          & 8 
          & $\mathbf{4.00\pm0.00}$ & \fpeval{round(4/0.53,1)}$\times$
          &  $5.73\pm0.12$ &   \fpeval{round(5.73/4.69,1)}$\times$
& \fpeval{4-5.73}
          \\
          & 10
          & $48.00\pm0.00$ & \fpeval{round(48/4,1)}$\times$
          &  $\mathbf{7.38\pm0.36}$  & \fpeval{round(7.38/5.73,1)}$\times$
&$+$\fpeval{48-7.38}
          \\
          & 12 
          & $559\pm 0.00$& \fpeval{round(559/48,1)}$\times$
          & $\mathbf{9.09\pm1.02}$ & \fpeval{round(9.09/7.38,1)}$\times$
&$+$\fpeval{559-9.09}
          \\
\cline{1-7}

\hline\multirow{4}{*}{DBLP} 
          & 6  
          & $\mathbf{0.50\pm0.00}$& $-$ 
          & $4.58\pm 0.02$ & $-$ 
& \fpeval{0.50-4.58}
          \\
          & 8 
          & $\mathbf{4.00\pm0.00}$& \fpeval{round(4/0.5,1)}$\times$ 
          & $6.31\pm0.00$ & \fpeval{round(6.31/4.58,1)}$\times$
& \fpeval{4.00-6.31}
          \\
          & 10
          & $50.00\pm0.00$& \fpeval{round(50/4,1)}$\times$
          & $\mathbf{7.99\pm 0.01}$ & \fpeval{round(7.99/6.31,1)}$\times$
& $+$\fpeval{50.00-7.99}
          \\
          & 12 
          & $611.00\pm0.00$& \fpeval{round(611/50,1)}$\times$
          & $\mathbf{10.45\pm0.02}$ & \fpeval{round(10.45/7.99,1)}$\times$
& $+$\fpeval{611.00-10.45}
          \\
\cline{1-7}

\hline\multirow{4}{*}{Patents}
        & 6  
        & $\mathbf{7.00\pm0.00}$ & $-$ 
        & $11.50\pm0.05$  & $-$ 
& \fpeval{7.00-11.50}
        \\
        & 8  
        & $66.00\pm0.00$ & \fpeval{round(66/7,1 )}$\times$ 
        & $\mathbf{13.80\pm0.03}$ & \fpeval{round(13.80/11.5,1)}$\times$
& $+$\fpeval{66.00-13.80}
        \\
        & 10 
        & {\em $>800$, crashed} &  $-$ & $\mathbf{15.85\pm0.08}$ & \fpeval{round(15.85/13.8,1)}$\times$
& $>800$ \\
        & 12 
        &  {\em $>800$, crashed} & $-$
        & $\mathbf{18.12\pm 0.10}$ & \fpeval{round(18.12/15.85,1)}$\times$
& $>800$ \\
\cline{1-7}
        
\hline\multirow{4}{*}{Pokec}
        & $6$  
        & $\mathbf{3.7\pm0.00}$ & $-$ 
        & $13.69\pm 0.06$ & $-$ 
& \fpeval{3.7-13.69}
        \\
        & 8  
        & ${36.00\pm0.00}$ & \fpeval{round(36.00/3.7,1)}$\times$
        & $\mathbf{17.17\pm 0.03}$  & \fpeval{round(17.17/13.69,1)}$\times$
& \fpeval{36-17.17}
        \\
        & 10 
        &  $407.00\pm0.00$ & \fpeval{round(407.00/36.00,1)}$\times$
        &  $\mathbf{20.31\pm0.01}$ & \fpeval{round(20.31/17.17,1)}$\times$
& $+$\fpeval{407-20.31}
        \\
        & 12 
        &  {\em $>800$, crashed} & $-$ & $\mathbf{22.82\pm0.03}$ & \fpeval{round(22.82/20.31,1)}$\times$
& $>800$ \\
\cline{1-7}

\hline\multirow{4}{*}{LiveJ.} 
        & 6
        & $\mathbf{7.70\pm0.00}$ & $-$ 
        & $18.26\pm 0.02$  & $-$ 
& \fpeval{7.7-18.26}
        \\
        & 8 
        & $73.00\pm0.00$ & \fpeval{round(73/7.7,1)}$\times$ 
        & $\mathbf{21.26\pm0.00}$ & \fpeval{round(21.26/18.26,1)}$\times$ 
& $+$\fpeval{73-21.26}
        \\
        & 10  
        &  {\em $>800$, crashed} & $-$ & $\mathbf{24.43\pm0.72}$ & \fpeval{round(24.43/21.26,1)}$\times$ 
& $>800$ \\
        & 12 
        & {\em $>800$, crashed} & $-$
        & $\mathbf{27.75\pm0.00}$ & \fpeval{round(27.75/24.43,1)}$\times$ 
& $>800$ \\
         
\cline{1-7}
\hline\multirow{4}{*}{Orkut} 
        & 6
        & $\mathbf{7.90\pm0.00}$ & $-$ 
        & $40.38\pm 0.00$ & $-$ 
& \fpeval{7.9-40.38}
        \\
        &  8  
        & $78.00\pm0.000$ & \fpeval{round(78/7.9,1)}$\times$ 
        & $\mathbf{43.49\pm 0.00}$ &  \fpeval{round(43.49/40.38,1)}$\times$ 
& $+$\fpeval{78-43.49}
        \\
        & 10  
        & {\em $>800$, crashed} & $-$ & $\mathbf{46.63\pm 0.00}$ & \fpeval{round(46.63/43.49,1)}$\times$ 
& $>800$ \\
        & 12 
        & {\em $>800$, crashed} & $-$ & 
        $\mathbf{49.73\pm 0.00}$ & \fpeval{round(49.73/46.63,1)}$\times$ 
& $>800$ \\
\bottomrule
\end{tabular}
}   \caption{Space usage comparison between \name and \Motivo.\Motivo runs out of {\em disk} space for larger datasets in which $k\geq10$, while \name scales almost linearly. 
  \name saves up to 600 GB of space when \Motivo can run.
  }
	\label{tab:sys.comp.mem}
\end{table}
  
\paragraph{Space Scalability (\Cref{tab:sys.comp.mem}).}
The trends in space usage mirror those of the running time, where we see an almost exponential increase w.r.t.\ $k$ for \Motivo compared to a near constant increase for \name despite its polynomial complexity (\Cref{prop.ripple.complexity}).
For example, in Amazon, {\Motivo}'s space demand increases by $7.5\times$ when $k$ goes from $6$ to $8$ and increases to $12\times$ from $10$ to $12$.
{\name}'s largest rate of increase is $1.4\times$ when $k$ goes from $6$ to $8$ for DBLP, and it saves up to 600 GB of space when Motivo does not crash.

\subsection{Accuracy and Convergence Assessment}
\label{sec:conv-analysis}
Next, we evaluate the accuracy and convergence of \name on small and large subgraph patterns, where the former refers to subgraph sizes in which the number of isomorphic subgraphs can be exactly computed using {\em ESCAPE}~\citep{pinar2017escape}, i.e.,\ $k\leq 5$. 

\begin{figure}\centering
\begin{minipage}[t]{0.29\linewidth}
	\centering
	\includegraphics[width=\linewidth]{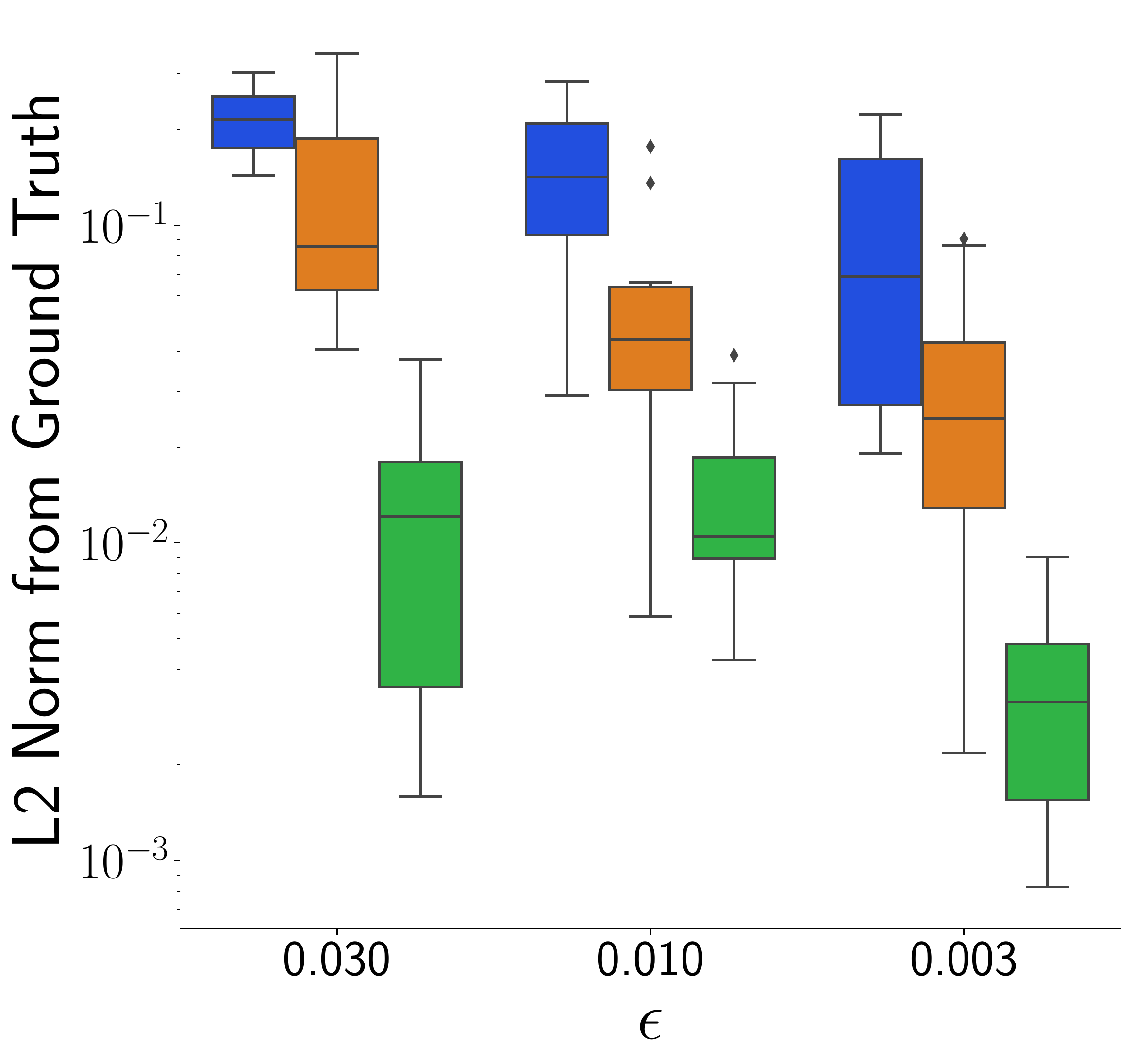}	
	\subcaption{Amazon}
\end{minipage}
\qquad
\begin{minipage}[t]{0.29\linewidth}
	\centering
	\includegraphics[width=\linewidth]{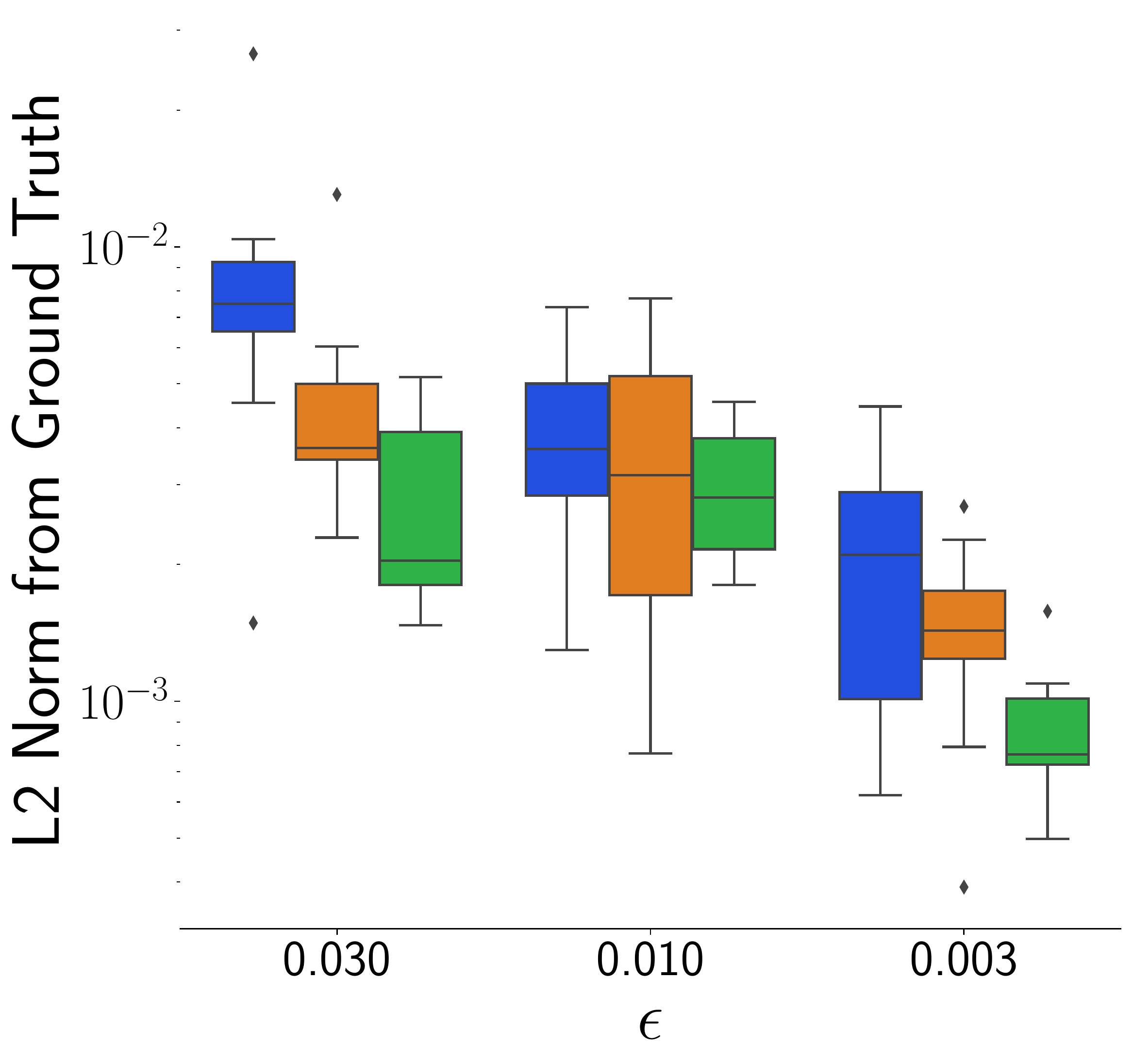}
	\subcaption{DBLP}
\end{minipage}
\qquad
\begin{minipage}[t]{0.29\linewidth}
	\centering
	\includegraphics[width=\linewidth]{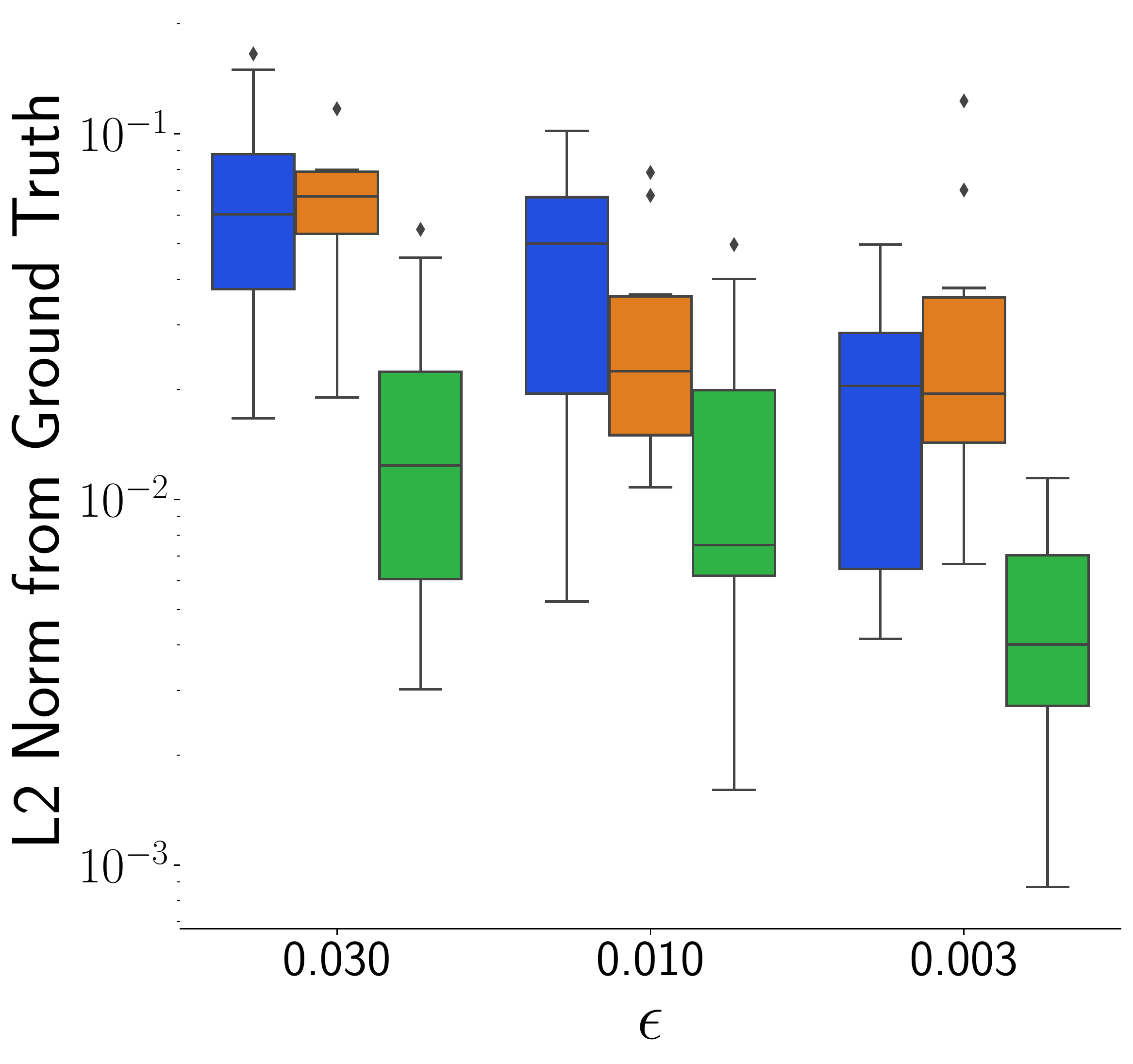}	
	\subcaption{Patents}
\end{minipage}
\qquad
\begin{minipage}[t]{0.29\linewidth}
	\centering
	\includegraphics[width=\linewidth]{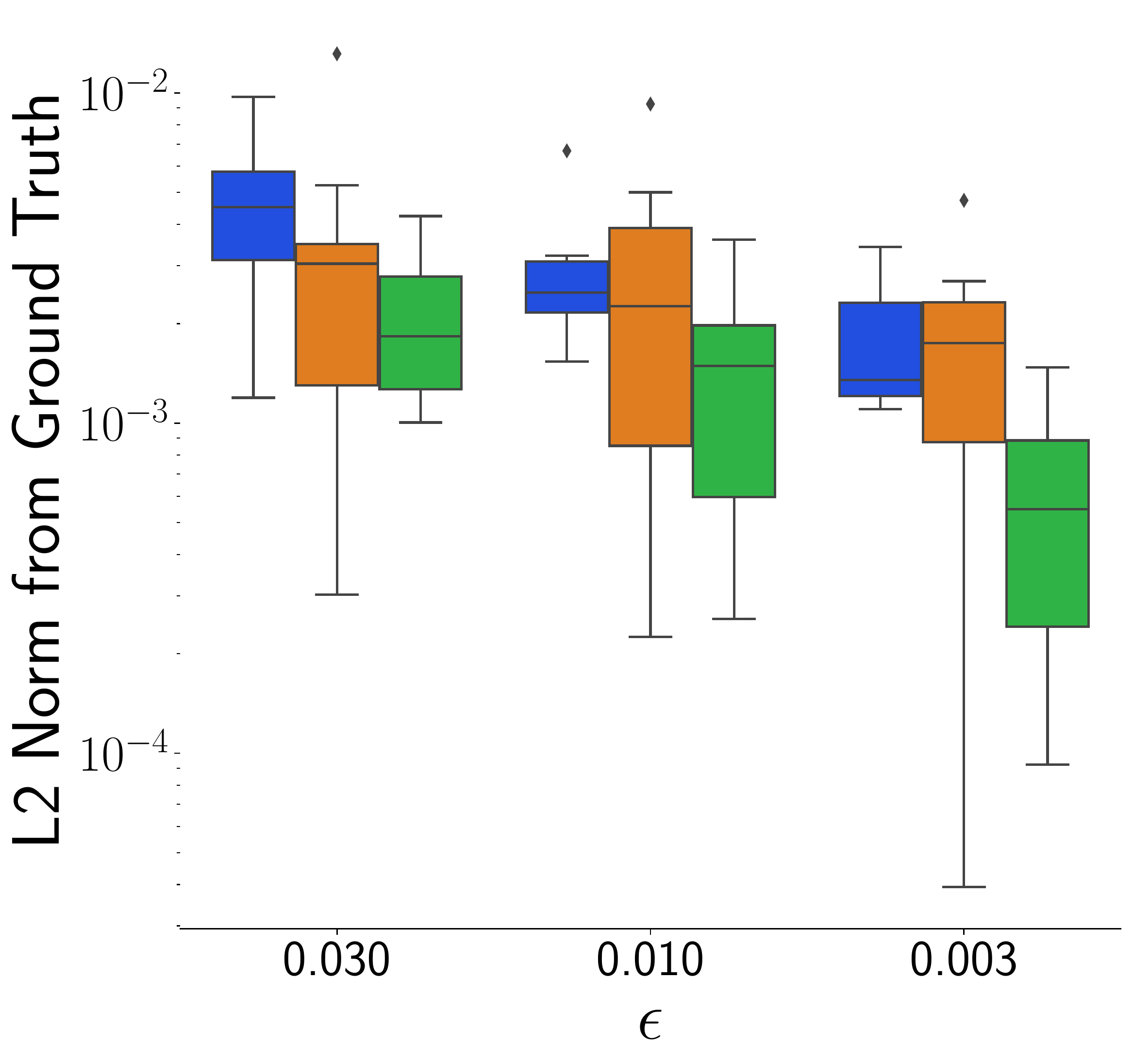}	
    \subcaption{Pokec}
\end{minipage}
\qquad
\begin{minipage}[t]{0.29\linewidth}
	\centering
	\includegraphics[width=\linewidth]{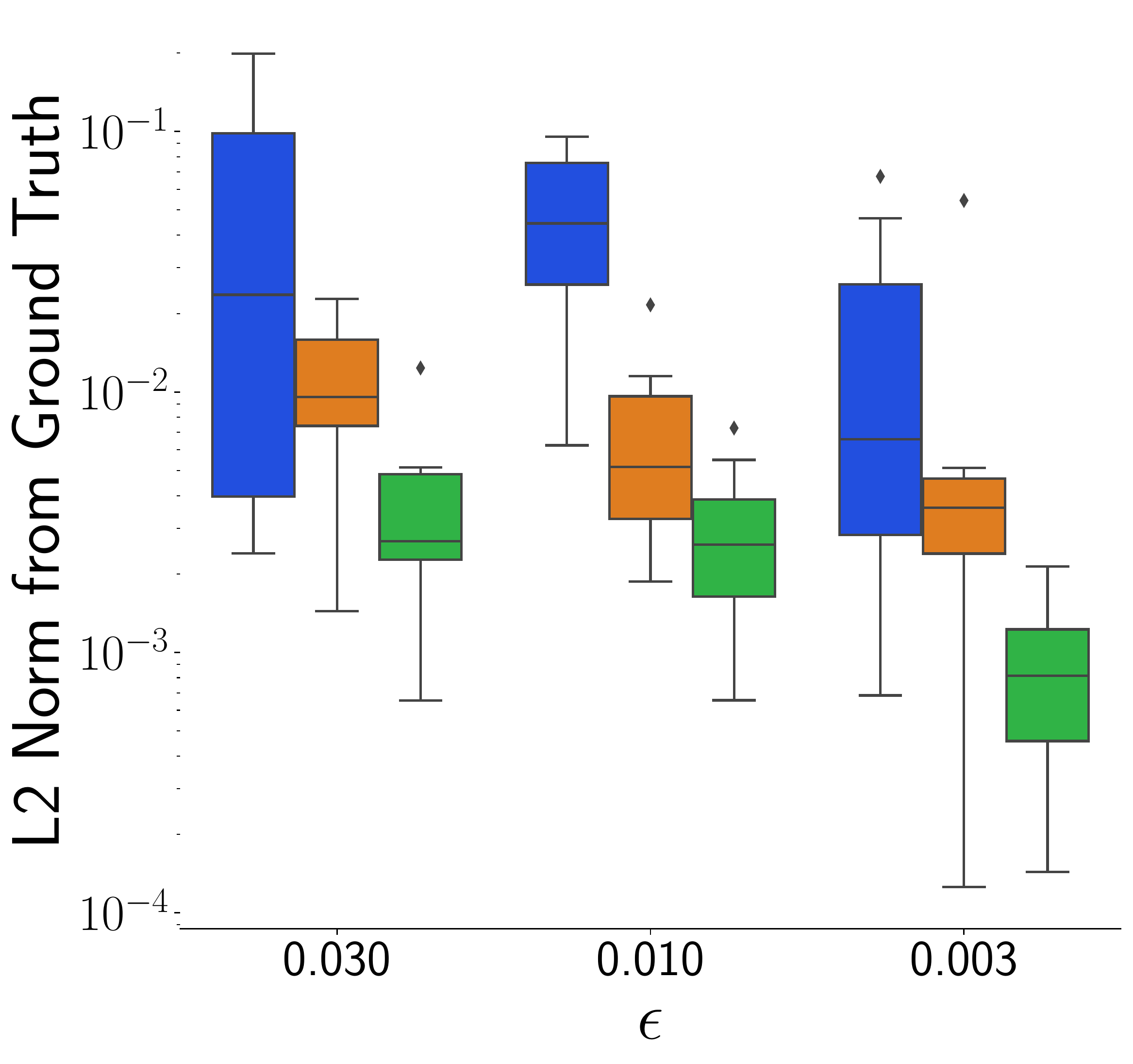}
	\subcaption{Live Journal}
\end{minipage}
\qquad
\begin{minipage}[t]{0.29\linewidth}
	\centering
	\includegraphics[width=\linewidth]{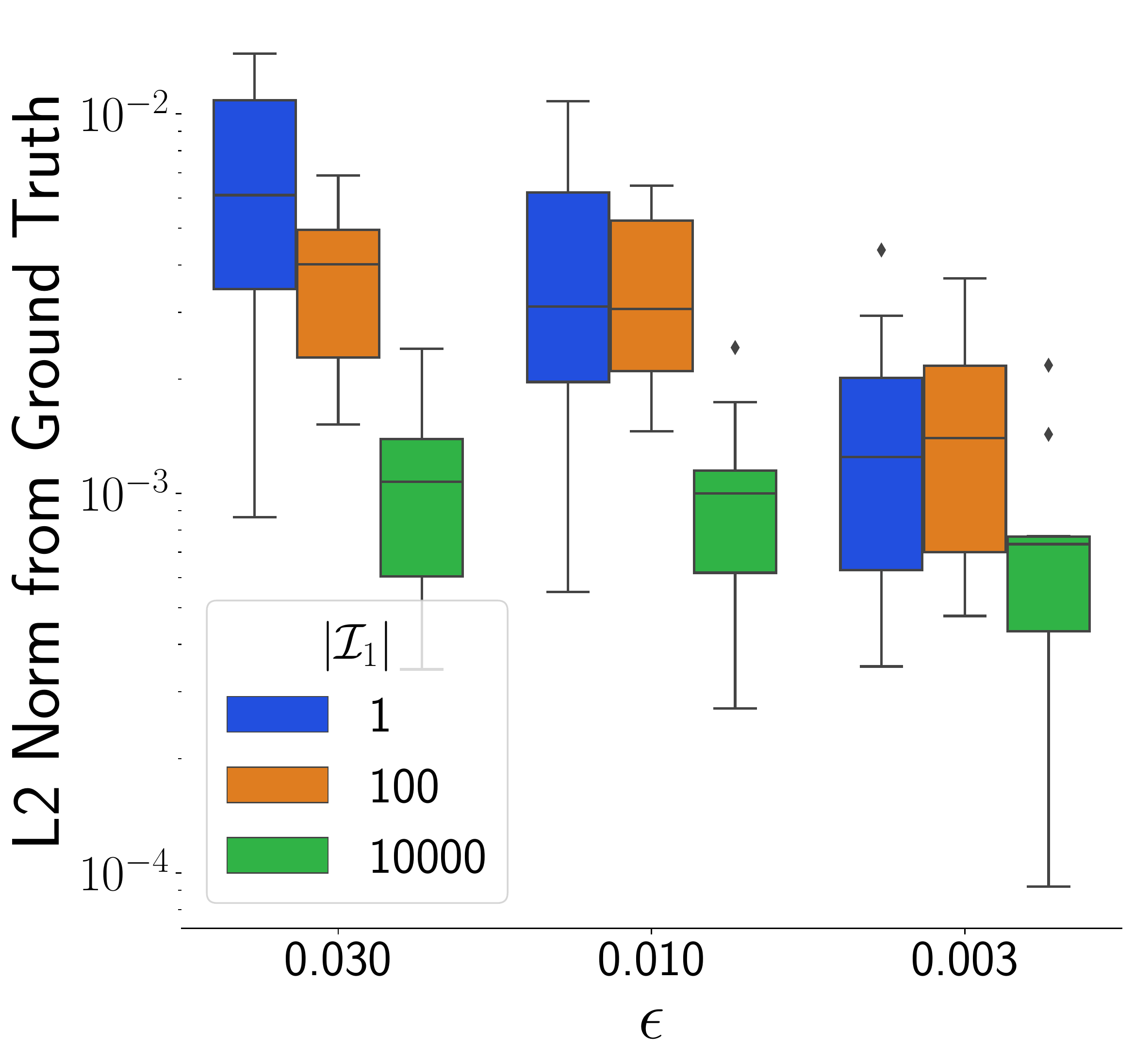}
	\subcaption{Orkut}
\end{minipage}
\caption{Accuracy and convergence analysis for \CIS[5]s. 
We plot the L2-norm between the \name estimate and the exact value of the count vector $\cC[5]$ (\Cref{eq.mu.edge.sum}) of all non-isomorphic subgraph patterns against various configurations of the parameters $\epsilon$ and $|\cI_1|$. 
As expected, the accuracy improves as the error bound $\epsilon$ decreases and the number of seed subgraphs $|\cI_1|$ increases.
Each box and whisker represents $10$ runs.
}
\label{fig:acc-l2-5}
\end{figure} 

\paragraph{Accuracy on Small $k$.}
For $k \in \{3, 5\}$, we evaluate the L2-norm between the \name estimate and the exact value of the count vector $\cC[k]$ (\Cref{eq.mu.edge.sum}) of all non-isomorphic subgraph patterns.
\Cref{fig:acc-l2-5} shows results for $k = 5$ (where the number of patterns of interest $|\cH| = 21$) for different settings of the parameters $\epsilon$ and $|\cI_1|$.
In all datasets, we note that the L2-norm decreases as $\epsilon$ decreases from $0.3$ to $0.003$ and as $|\cI_1|$ increases from $100$ to $10^4$.
Between the worst setting, $(\epsilon,|\cI_1|)=(0.3,100)$, and the best $(\epsilon,|\cI_1|)=(0.003,10^4)$, we see an error reduction close to an order of magnitude.
This is due to \Cref{thm.bias.propagation} and \Cref{lem.rwte} because reducing
$\epsilon$ increases the number of tours, lowers the error and therefore leads to reduced error propagation. 
Increasing $\cI_1$ also reduces the number of strata and therefore error propagation.
Results for $k=5$ using the L-$\infty$ norm are deferred to \Cref{app:additional-results}-\Cref{fig:acc-linf-5}.

\begin{figure}\centering
\begin{minipage}[t]{0.29\linewidth}
	\centering
	\includegraphics[width=\linewidth]{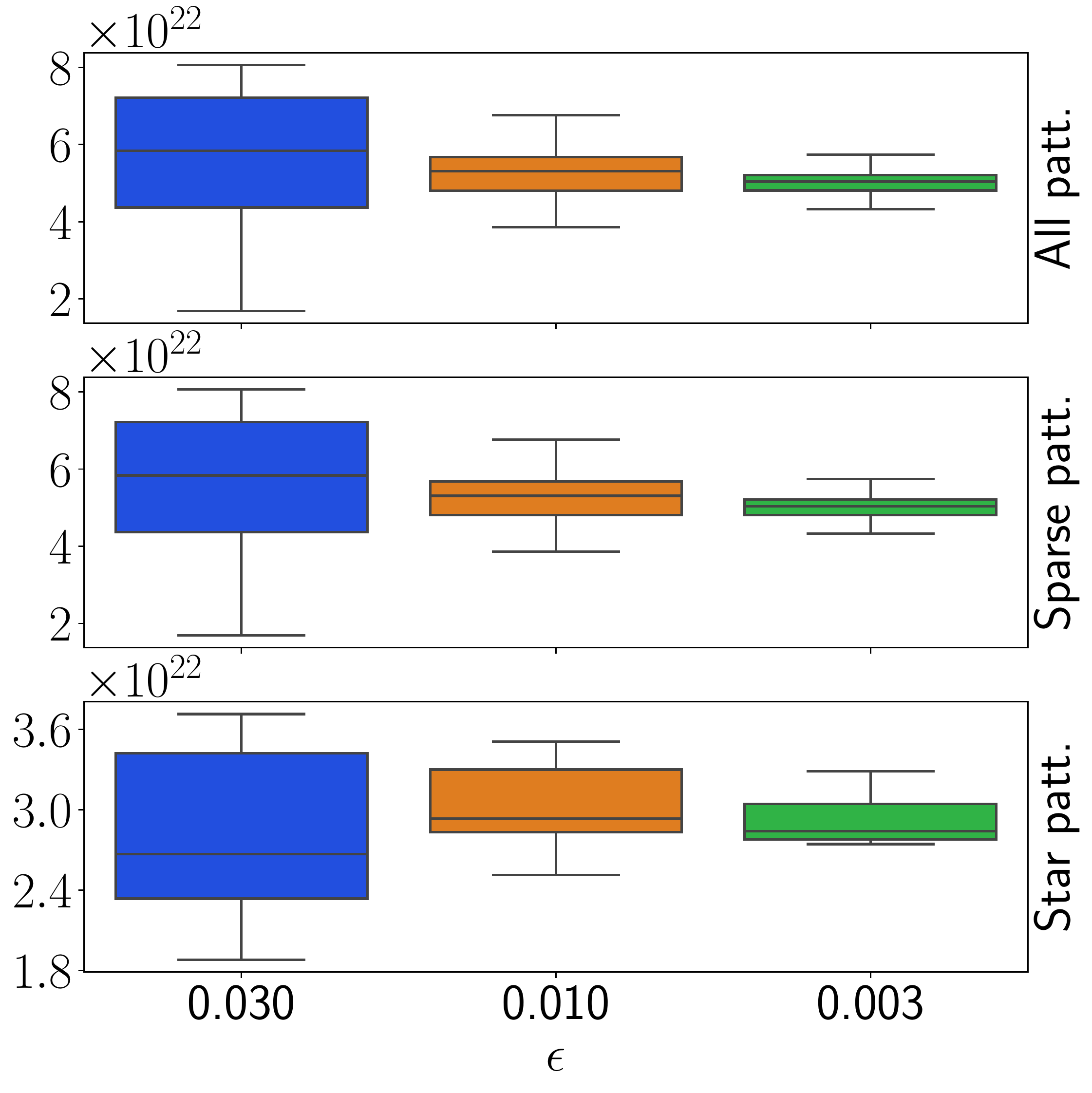}	\subcaption{Amazon}
\end{minipage}
\qquad
\begin{minipage}[t]{0.29\linewidth}
	\centering
	\includegraphics[width=\linewidth]{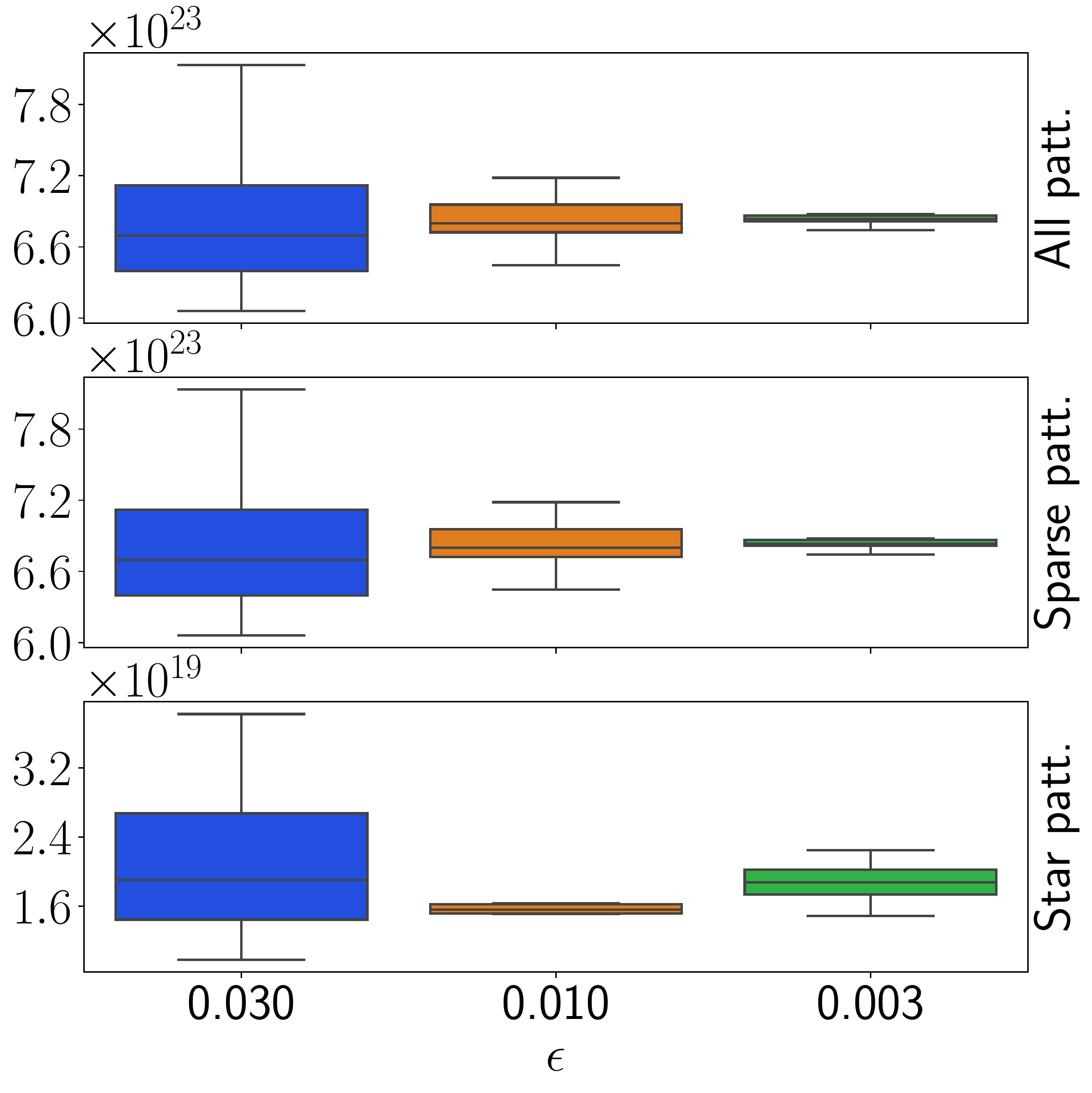}
	\subcaption{DBLP}
\end{minipage}
\qquad
\begin{minipage}[t]{0.29\linewidth}
	\centering
	\includegraphics[width=\linewidth]{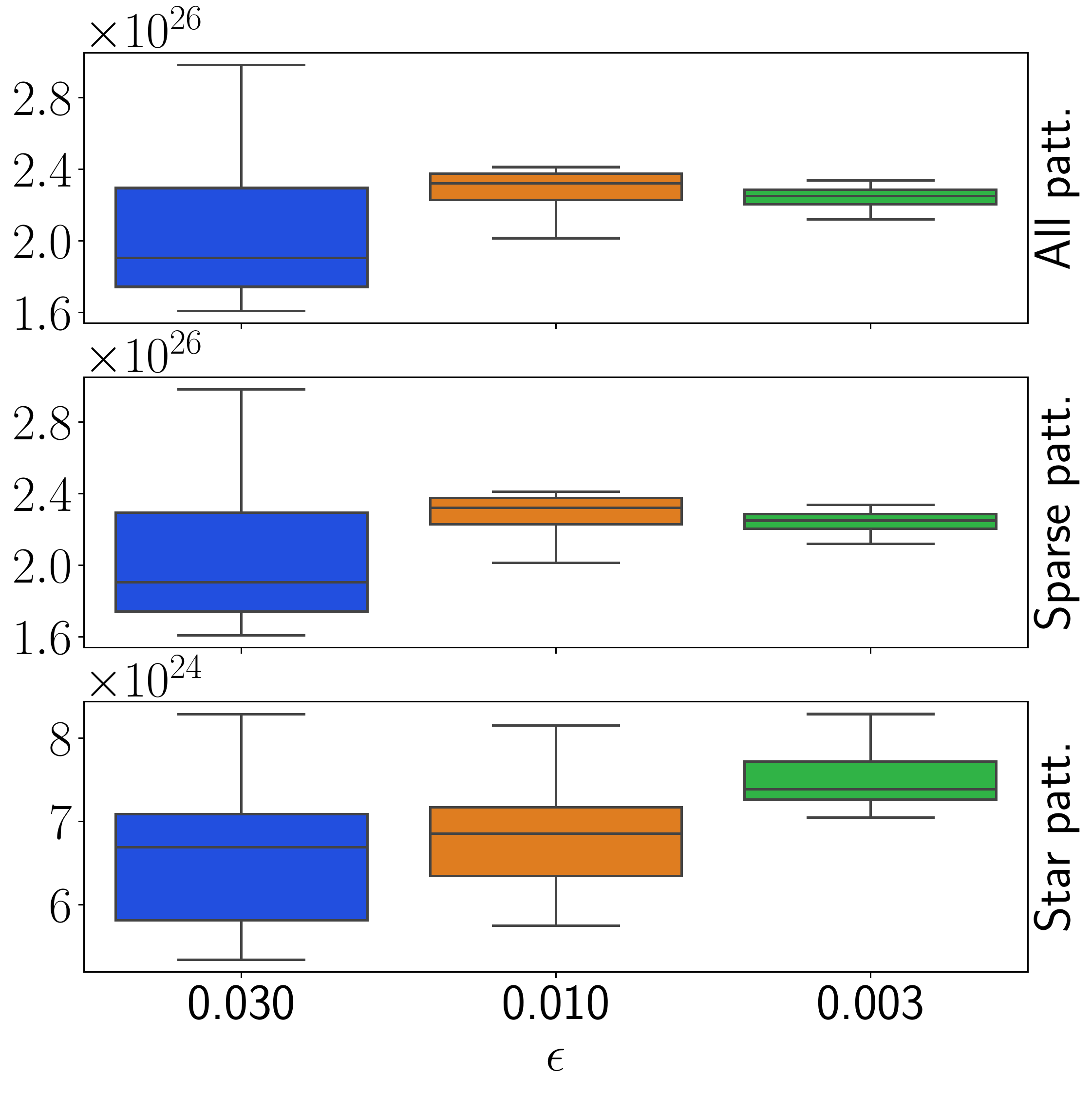}	
	\subcaption{Patents}
\end{minipage}
\qquad
\begin{minipage}[t]{0.29\linewidth}
	\centering
	\includegraphics[width=\linewidth]{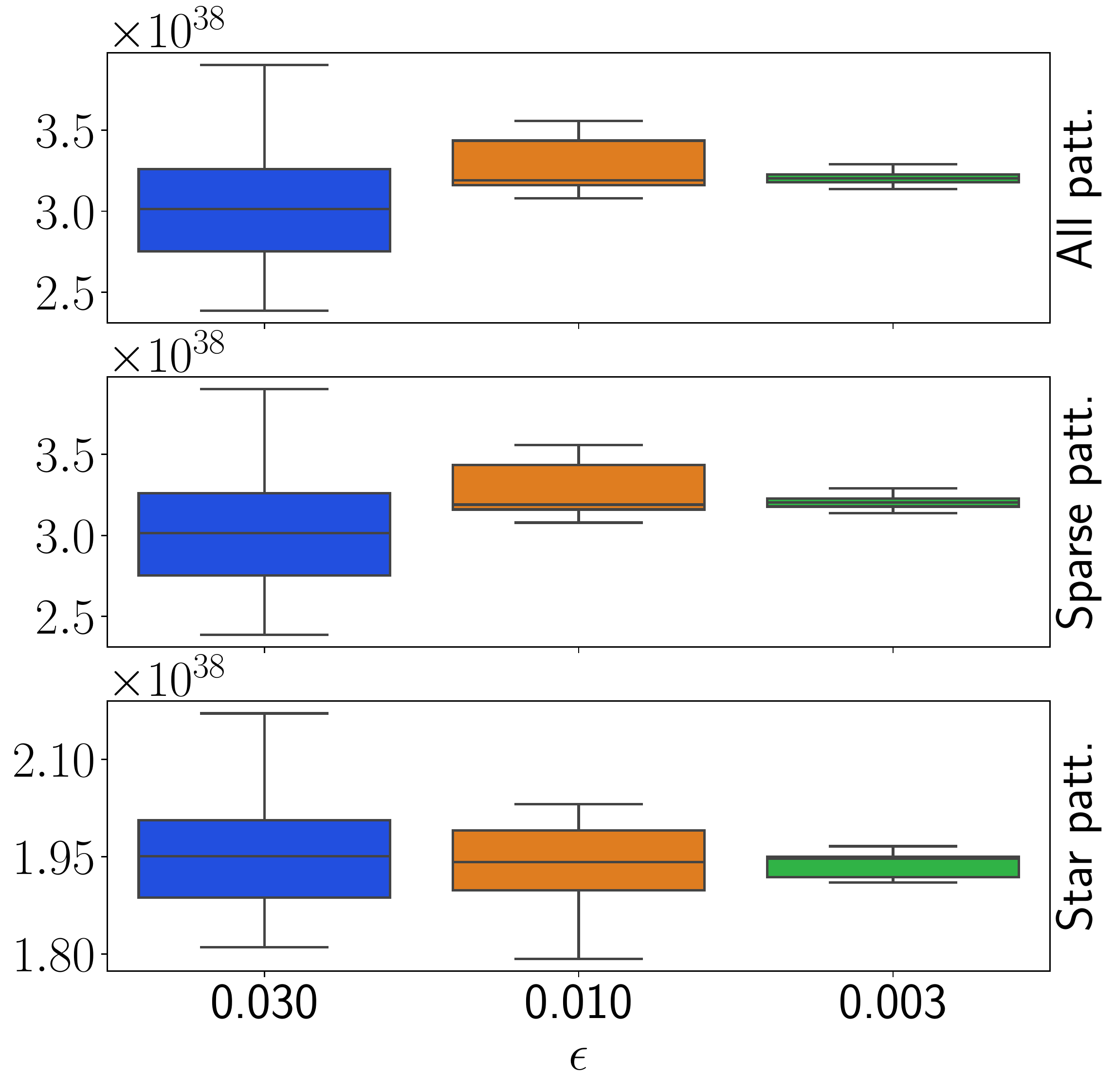}	
    \subcaption{Pokec}
\end{minipage}
\qquad
\begin{minipage}[t]{0.29\linewidth}
	\centering
	\includegraphics[width=\linewidth]{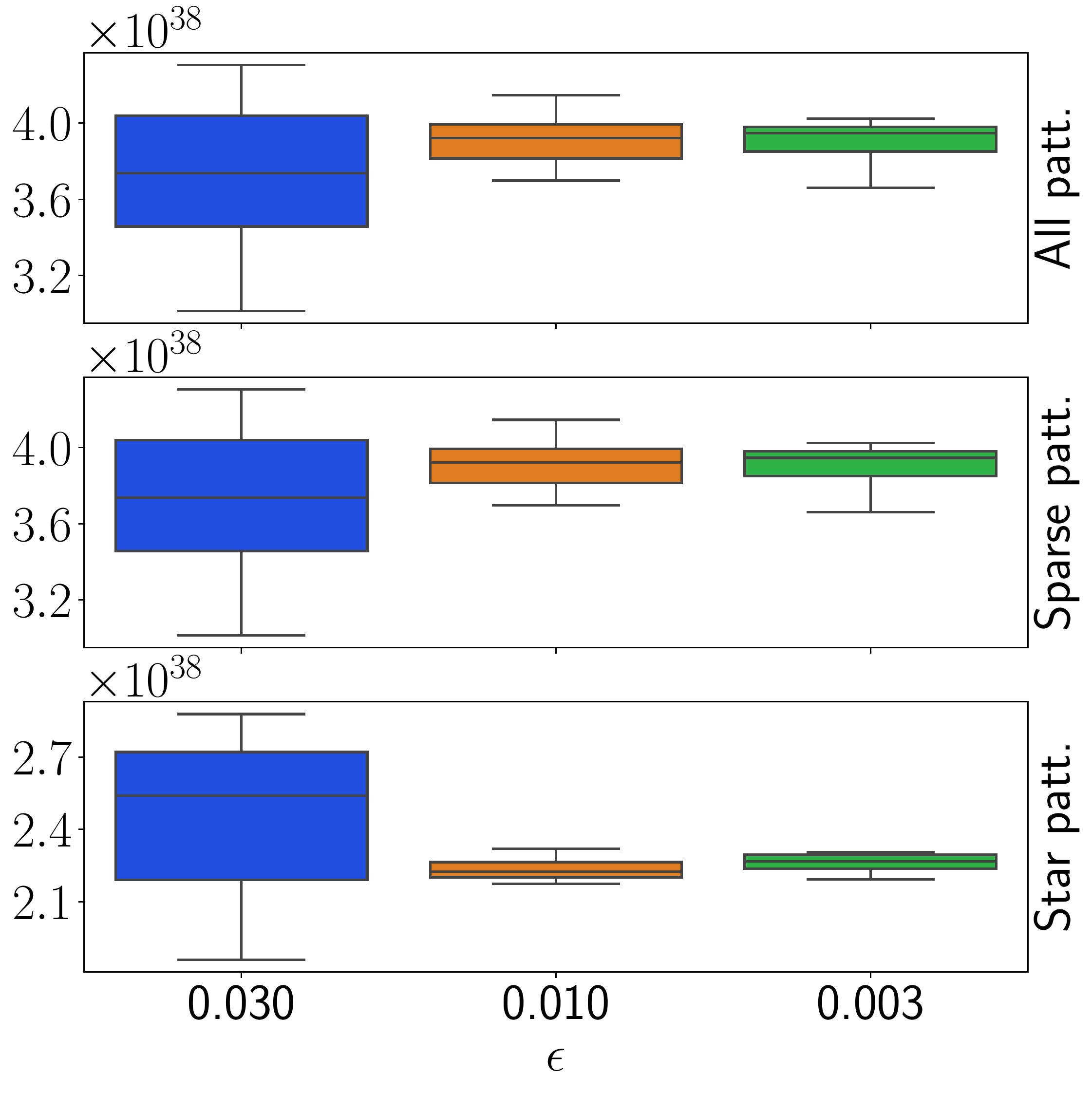}
	\subcaption{Live Journal}
\end{minipage}
\qquad
\begin{minipage}[t]{0.29\linewidth}
	\centering
	\includegraphics[width=\linewidth]{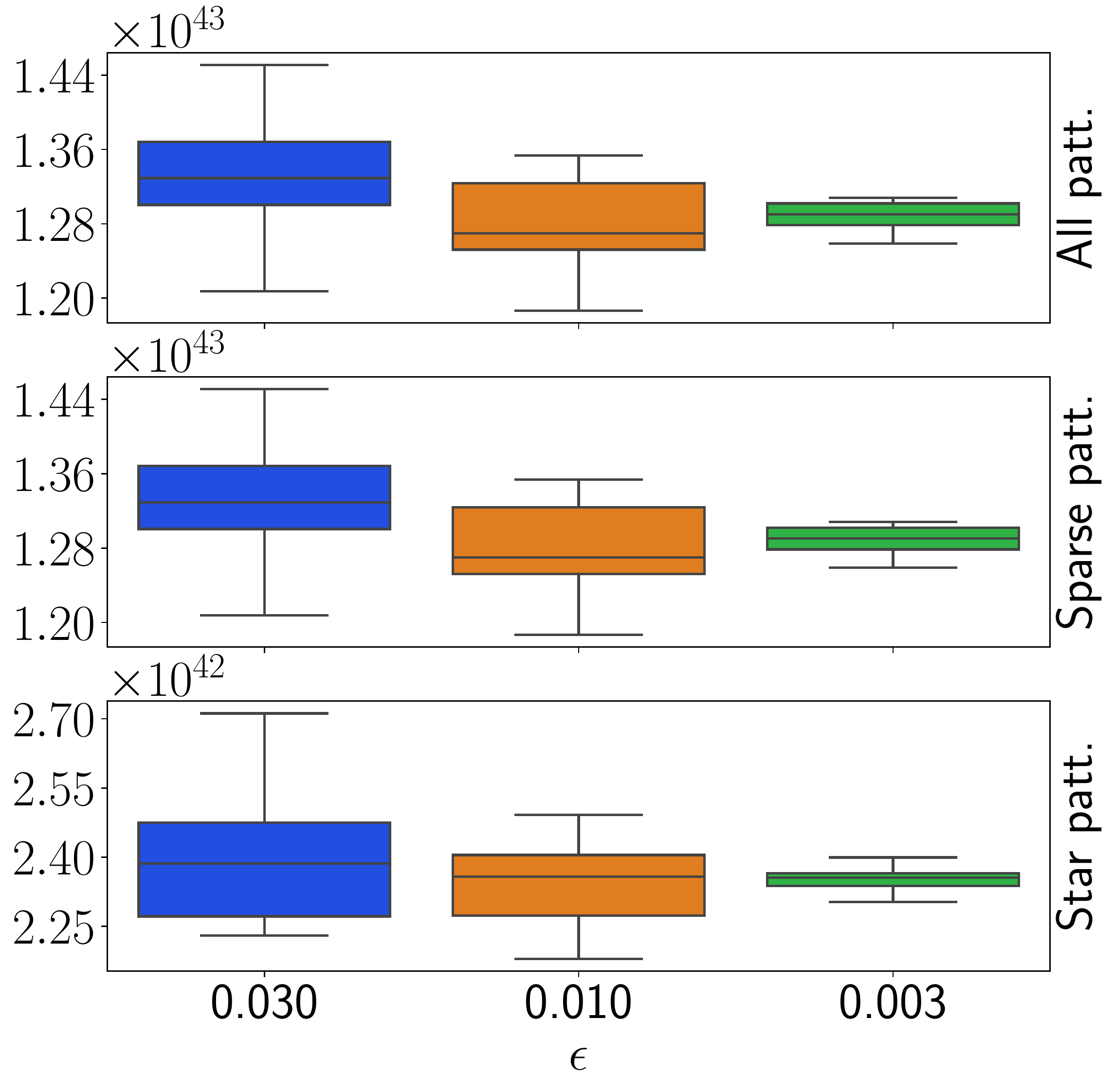}
	\subcaption{Orkut}
\end{minipage}
\caption{Convergence of \name estimates of \CIS[12] pattern counts.
We estimate the total number of subgraphs $|\cV[12]|$ and the number of sparse patterns and stars. 
Estimates over $10$ runs are presented as box and whiskers plots, which exhibit a reduction in variance as $\epsilon$ increases. 
Indeed, almost all patterns are sparse, and the most frequent substructure is a star.
}
\label{fig:conv-count-12}
\end{figure} 

\paragraph{Convergence for Large $k$.} When $k>5$, subgraph counts for real-world graphs are computationally intractable. Therefore, we show that \name converges in these cases as we increase the computing effort. 
Consider the hypothesis that sparse patterns are frequent in power-law networks as $k$ increases. 
To glean empirical evidence for this, we choose an appropriate pattern set $\cH$ and equivalence relationship in \Cref{def.problem}, and we use \name to compute the total number of \CIS[k]s and the number of sparse subgraphs and stars.
A subgraph is defined as sparse if its density lies between $0$ and $0.25$, according to \citet{liu2008effective}. 
In \Cref{fig:conv-count-12}, we show that \name converges for all datasets, and as expected, most patterns are sparse, with close to half of the patterns in many of the studied networks being stars. 
This proportion is attenuated in DBLP and Patents, where dense substructures naturally emerge from collaboration/citation among the authors that these graphs represent.

 \section{Related Work}
\label{sec:rel}
For better presentation, we split this section into two parts: (1) parallel \MCMC techniques and (2) methods for subgraph counting.

\paragraph{Parallel \MCMC through Splitting.} 
Since \citet{nummelin1978splitting,athreya1978new}, multiple techniques have been proposed to circumvent the burn-in period by splitting the chain into i.i.d.\ sample paths.
This approach allows practitioners to compute unbiased estimates in parallel and determine confidence intervals.
Perfect sampling methods based on coupling~\citep{propp1996exact} require the transitions to be monotonic w.r.t.\ some ordering over the state space, and annealing/tempering ~\citep{neal2001annealed} methods require some notion of temperature, which are absent in general graph random walks.
Methods such as \citep{mykland1995regeneration,jacob2020unbiased,glynn2014exact} require a minorization condition to hold, albeit implicitly.

Regeneration point-based methods on finite state chains \citep{cooper2016fast,massoulie2006peer,Avrachenkov2016,avrachenkov2018revisiting,savarese2018monte,teixeira2018graph} are more general because they only rely on standard ergodicity conditions.
Although \citet{cooper2016fast,massoulie2006peer} used tours to estimate graph properties, \citet{Avrachenkov2016,avrachenkov2018revisiting} proposed supernodes to reduce running times.
The studies in \citet{savarese2018monte,teixeira2018graph} further used supernode-based tours to estimate gradients in \emph{RBM}s and to count subgraphs.
To the best of our knowledge, no existing regeneration point method controls running times through stratification.

\paragraph{Subgraph Counting through Sampling.}

Many random walk algorithms have been proposed to sample subgraphs, with some methods only capable of estimating subgraph pattern {\em distributions}, which is much easier than estimating {\em counts}. 
The studies of \GUISE~\citep{bhuiyan2012guise} and \RSS~\citep{matsuno2020improved} use a Metropolis-Hastings~\citep{hastings1970monte} walk, and the latter improves the mixing time of the underlying Markov chain using canonical paths~\citep{sinclair1992improved}.
Waddling~\citep{han2016waddling} and \IMPRG~\citet{chen2018mining} perform a simple random walk over the input graph and use specialized estimators to sample $5$-node patterns. 
Although \PSRW~\citep{Wang:2014} first proposed the \HON-based random walk and \RGPM~\citep{teixeira2018graph} used tours on it to estimate subgraph counts, both are limited to $k \leq 5$ due to the size of the \HON.

Multiple attempts to Monte Carlo sample subgraphs have been proposed whose scaling is limited because of the complexity of computing either the importance weights, rejection rate or variance \citep{kashtan2004efficient,Wernicke:2006,iyer2018asap,yang2018ssrw,Wang:2018}.
Efficient methods that sample dense regions/subgraphs are unfortunately not extensible to sparse patterns \citet{Jain:2017,jain2020provably}.
Motivo~\citep{Bressan:2018,Bressan:2019} is an example of color-coding methods in which an index table is built using a deterministic dynamic programming algorithm, which is then exploited to sample subgraphs uniformly and independently.
 However, \CC methods suffer from the exponential time and space complexities associated with building and accessing the index table. 
 \Motivo proposed succinct index tables and efficient out-of-core I/O mechanisms to ameliorate this issue and extended the applicability of \CC methods to larger subgraphs.
Please, check \citet{ribeiro2019survey} for an extensive survey on subgraph counting methods.

\section{Conclusions}
In this paper, we propose the \name estimator that uses {\em sequentially stratified regenerations} to control the running time of a random walk tour-based \MCMC.
We prove that the estimator is consistent (w.r.t.\ the number of random walk tours) and that the time and memory complexity of our implementation for the subgraph counting problem is linear in the number of patterns of interest and polynomial in the subgraph size.
We empirically verify our claims on multiple graph datasets and show that \name can accurately estimate subgraph counts with a smaller memory footprint compared to that of the state-of-the-art Motivo~\citep{Bressan:2019}.
\name is currently the only subgraph pattern count estimator that can estimate $k=10,12$ node patterns in million-node graphs.
Beyond our specific application, \name provides a promising way to expand the sphere of influence of regenerative simulation in discrete reversible \MCMC. \bibliographystyle{spbasic}
\bibliography{references}

\appendix
\captionsetup[figure]{font=footnotesize,labelfont=footnotesize}
\captionsetup[table]{font=footnotesize,labelfont=footnotesize}

\section{Notation}
\label{sec.notation}
The most important notations from the paper are summarized in \Cref{tab.notations}.
\newcolumntype{L}{>{\centering\arraybackslash}m{3cm}}
\begin{table*}[h!]
\centering
\scalebox{0.9}{
 \begin{tabular}{p{0.25\linewidth} | p{0.75\linewidth}}
 \bottomrule
\textbf{Symbol} & \textbf{Explanation}\\
\bottomrule
$\cG = (\cV,\cE)$ & The graph where we have neighbor query access and whose edge sum is being computed. \\
$N(u)$, $\dg(u)$ & Neighborhood and degree of a vertex in $\cG$ if no subscript is specified. \\
$\mu(\cE)$ & The sum over edges in $\cE$ (or some subset) of some function $f$.\\
$\bPhi$, $p_{\bPhi}(u,v)$, $\pi_{\bPhi}(u)$ & The random walker on $\cG$, its transition probability and stationary distribution.\\
$\bX$, $\xi$, $\cT$& an \RWT (tour), its length and a set of $m$ {\RWT}s.\\
$\hmus(\cT; f, \cG)$ & an \RWTE of $\mu(\cE)$.\\
$\delta$ & The spectral gap of the transition probability matrix of a chain.\\
$\hsigma(\cdot)^{2}$& Empirical variance of an \RWTE.\\
$\vo_I$, $\cG_I$& Collapsed state and graph obtained by collapsing $I\subset\cV$.\\
$q<r<t$& Strata ids always used in the same order $1\leq q<r<t\leq R$ .\\
$\rho\colon \cV \to \{1, \ldots, R\}$& Stratification function.\\
$\cI_{r},\cJ_{r}$& $r$-th vertex and edge stratum.\\
$\cG_{r} = (\cV_{r}, \cE_{r})$& $r$-th graph stratum.\\
$\vo_r$, $\cT_{r}$& Supernode in each stratum and a set of $m_r$ perfectly sampled tours from $\vo_r$.\\
$\dg(\vo_{r})$, $p_{\bPhi_{r}}(\vo_{r}, \cdot)$&Degree and transition probability out of the supernode.\\
$\hdgvo{r}$, $\hpvo{r}$& Estimated degree and transition probability out of the supernode.\\
$\dT_{q}$&$m_q$ {\RWT}s samples using supernode estimates.\\
$\hbeta{q}{r}$& The estimate of the number of edges between $\cI_{q}$ and $\cI_{r}$.\\
$\hbU{q}{r}$& Multiset of states visited by $\dT_{q}$ that lie in $\cI_{r}$.\\
$\hmu_{\name}$, $\hmu\left(\dT_{2:r}; f \right)$& Overall and per-stratum \name estimate.\\
$\delta_{r}$, $\nu_{r}$, $\lambda_{r}$ & Spectral gap and the errors in the supernode estimates in the $r$-th stratum.\\
$G=(V,E,L)$ & The labelled input graph in which we want to count subgraphs.\\
$\sg{G}{V'}$&Subgraph induced by $V'$ in $G$.\\
$\cH$& Nonequivalent (non-isomorphic) patterns of interest.\\
$\cC[k] = (\cC[k]_H)_{H \in \cH}$ & The subgraph pattern count vector.\\
$\cG[k] = (\cV[k], \cE[k])$& The \HON[k] that provides neighborhood query access and is used to count subgraphs.\\
$\gamma(u,v)$& Number of edges in $\cE[k-1]$ that represent the same subgraph as $(u,v)$.\\
$\dist(u)$ & Shortest path distance from $u \in V$ to any seed vertex in $V(\cI_{1})$.\\
$\Vstar$ & The largest connected subset of $V(s)$ that constitutes an intersection between $s$ and $\dds \in \cI_{1}$.\\
$\rvrSize$& Reservoir size.\\
$\Delta_{G}$, $D_{G}$ & Maximum degree in and diameter of $G$.\\
$\cA_{s}$& Articulation points in $s$.\\
$\epsilon$& Per-stratum error bound used to control tour count.\\
\bottomrule
\end{tabular}
}

\caption{Table of Notations}
\label{tab.notations}
\end{table*}

\section{Proofs for \Cref{sec.preliminaries}}
\subsection{\MCMC Estimates}
\label{sec.ht.mcmc}
Given a graph $\cG$, when the $|\cE|$ is unknown, the \MCMC estimate of $\nicefrac{\mu(\cE)}{|\cE|}$ is given by:
\begin{proposition}[\MCMC Estimate~\citep{geyer1992practical,geman1984stochastic,hastings1970monte}]
	\label{def.ht.mcmc}
	When $\cG$ from \Cref{def.simplerw} is connected, the random walk $\bPhi$ is reversible and positive recurrent with stationary distribution $\pi_{\bPhi}(u) = \nicefrac{\dg(u)}{2|\cE|}$. 
	Then, the \MCMC estimate
	\begin{equation*}
		\hmu_{0}\left( (X_i)_{i=1}^{t}\right) =
		 \frac{1}{t-1}\sum_{i=1}^{t-1} f(X_i,X_{i+1})\,,
	\end{equation*}
	computed using an arbitrarily started sample path $(X_i)_{i=1}^{t}$ from $\bPhi$ is an asymptotically unbiased estimate of $\nicefrac{\mu(\cE)}{|\cE|}$.
	When $\cG$ is non-bipartite, i.e., $\bPhi$ is aperiodic, and $t$ is large, $\hmu_{0}$ converges to $\nicefrac{\mu(\cE)}{|\cE|}$ as
	\begin{equation*}
        \Big|\EE[\hmu_{0}\left( (X_i)_{i=1}^{t}\right)] - \nicefrac{\mu(\cE)}{|\cE|}\Big|
		 \leq 
		 B \, \frac{C}{t \delta(\bPhi)} \,,
	\end{equation*}
	where $\delta(\bPhi)$ is the spectral gap of $\bPhi$ and $C\triangleq \sqrt{\frac{1-\pi_{\bPhi}(X_1)}{\pi_{\bPhi}(X_1)}}$ such that $f\bdot \leq B$.
\end{proposition}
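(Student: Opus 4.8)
The plan is to dispatch the structural assertions by direct computation, then to recognize $\mu(\cE)/|\cE|$ as a stationary expectation of a one-step functional of the chain, and finally to control the transient error of the time average by a spectral contraction in $L^2(\pi_{\bPhi})$. First I would check the detailed-balance equations: for $(u,v)\in\cE$, $\pi_{\bPhi}(u)\,p_{\bPhi}(u,v)=\frac{\dg(u)}{2|\cE|}\cdot\frac{1}{\dg(u)}=\frac{1}{2|\cE|}$, which is symmetric in $u$ and $v$ (and both sides vanish when $(u,v)\notin\cE$); summing over $u$ shows $\pi_{\bPhi}$ is invariant, and reversibility makes the transition operator, which I write $P$, self-adjoint on $L^2(\pi_{\bPhi})$. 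Since $\cG$ is finite and connected, $\bPhi$ is irreducible and hence positive recurrent with $\pi_{\bPhi}$ as its unique stationary law; non-bipartiteness additionally yields aperiodicity, which I will need only for the convergence rate.

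Next I would introduce the one-step conditional mean $h(u)\triangleq\sum_{v}p_{\bPhi}(u,v)f(u,v)=\frac{1}{\dg(u)}\sum_{v\in\N(u)}f(u,v)$, so that $\EEX{f(X_i,X_{i+1})\mid X_i}=h(X_i)$ and $|h|\le B$. Because $f$ is a function of undirected edges, averaging $h$ against $\pi_{\bPhi}$ collapses the double sum over ordered incidences to $2\mu(\cE)$, giving $\sum_{u}\pi_{\bPhi}(u)h(u)=\mu(\cE)/|\cE|$; equivalently, the pair process $(X_i,X_{i+1})$ is an irreducible positive-recurrent chain on directed edges whose stationary law is uniform, namely $1/(2|\cE|)$ on each ordered incidence. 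Asymptotic unbiasedness then follows from the ergodic theorem applied to this pair chain, which gives $\hmu_0\big((X_i)_{i=1}^{t}\big)\to\mu(\cE)/|\cE|$ almost surely; since $|\hmu_0|\le B$, dominated convergence upgrades this to $\EEX{\hmu_0}\to\mu(\cE)/|\cE|$.

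For the quantitative bound I would set $h_0\triangleq h-\mu(\cE)/|\cE|$, which has $\pi_{\bPhi}$-mean zero and $\|h_0\|_{L^2(\pi_{\bPhi})}\le\|h\|_{L^2(\pi_{\bPhi})}\le B$. Let $\nu$ be the Dirac mass at the (arbitrary) start $X_1$; then $\EEX{f(X_i,X_{i+1})}-\mu(\cE)/|\cE|=(\nu P^{\,i-1})(h)-\pi_{\bPhi}(h)=(\nu P^{\,i-1})(h_0)$. Writing $g_0\triangleq \mathrm{d}\nu/\mathrm{d}\pi_{\bPhi}-1$ (again $\pi_{\bPhi}$-mean zero), self-adjointness of $P$ rewrites this pairing as $\langle P^{\,i-1}g_0,\,h_0\rangle_{L^2(\pi_{\bPhi})}$, and Cauchy--Schwarz together with the spectral-gap contraction on the mean-zero subspace --- where aperiodicity is exactly what guarantees $\max\{|\lambda_2|,|\lambda_{|\cV|}|\}=1-\delta(\bPhi)<1$ --- bounds it by $(1-\delta(\bPhi))^{\,i-1}\,\|g_0\|_{L^2(\pi_{\bPhi})}\,B$. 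A short computation gives $\|g_0\|_{L^2(\pi_{\bPhi})}^{2}=\tfrac{1}{\pi_{\bPhi}(X_1)}-1=\tfrac{1-\pi_{\bPhi}(X_1)}{\pi_{\bPhi}(X_1)}=C^{2}$. Averaging over $i=1,\dots,t-1$ and summing the geometric series yields $\big|\EEX{\hmu_0}-\mu(\cE)/|\cE|\big|\le\frac{1}{t-1}\sum_{i=1}^{t-1}(1-\delta(\bPhi))^{\,i-1}BC\le\frac{BC}{(t-1)\,\delta(\bPhi)}$, which is the stated bound (with $t-1$ replaced by $t$ for $t$ large).

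The routine parts are the first two steps; the only delicate point is the rate, where one must pass from a statement anchored at the single start state $X_1$ to an $L^2(\pi_{\bPhi})$ estimate --- this is what the $g_0$/self-adjointness reformulation buys --- and must invoke aperiodicity precisely to exclude a $\lambda=-1$ eigenmode, so that the geometric decay factor is genuinely $1-\delta(\bPhi)$ rather than $\max_k|\lambda_k|$, which for a bipartite walk would equal $1$ and destroy the bound.
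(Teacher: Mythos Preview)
Your proof is correct and follows essentially the same strategy as the paper: verify detailed balance, identify $\mu(\cE)/|\cE|$ as the stationary expectation of a one-step functional, and bound the transient bias termwise by a geometric decay governed by the spectral gap, then sum. The paper's only substantive difference is in how it obtains the single-step bound: it first coarsens $\bias_i \le B \sum_u |p^{\,i}_{\bPhi}(X_1,u)-\pi_{\bPhi}(u)|$ and then invokes a cited result (Diaconis--Stroock, Prop.~3) giving the $\ell^1$-to-stationarity bound $C\,\beta_*^{\,i}$; you instead work directly in $L^2(\pi_{\bPhi})$ via the pairing $\langle P^{\,i-1}g_0,h_0\rangle$ and the self-adjoint contraction, which is precisely the argument underlying the Diaconis--Stroock bound. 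In effect you have inlined that black-box citation, making your proof self-contained at no real cost; conversely, the paper's route is shorter to write because it outsources the $L^2$ step. Both arrive at the same geometric sum $\sum_i (1-\delta)^{\,i-1}BC \le BC/\delta$ and the same $t$-versus-$(t-1)$ slack for large $t$.
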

\begin{proof}[Asymptotic unbiasedness]
    Because $\cG$ is undirected, finite and connected, $\bPhi$ is a finite state space, irreducible, time-homogeneous Markov chain and is therefore positive recurrent~\cite[3-Thm.3.3]{bremaud2001markov}.
    The reversibility and stationary distribution holds from the detailed balance test~\cite[2-Cor.6.1]{bremaud2001markov} because
    \begin{equation*}
        \pi_{\bPhi}(u) \, p_{\bPhi}(u,v) = \pi_{\bPhi}(v) \, p_{\bPhi}(v,u) = \frac{\ind{(u,v)\in\cE}}{2|\cE|}\,.
    \end{equation*}
    The ergodic theorem~\cite[3-Cor.4.1]{bremaud2001markov} then applies because $f$ is bounded and we have
    \begin{equation*}
        \lim_{t\to\infty} \frac{1}{t-1}\sum_{i=1}^{t-1} f(X_i,X_{i+1})
         = \sum_{(u,v) \in \cV\times\cV}\pi_{\bPhi}(u) \, p_{\bPhi}(u,v) f(u,v) = \frac{\mu(\cE)}{|\cE|}\,.
    \end{equation*}
\end{proof}
\begin{proof}[Bias]
    Let the $i$-step transition probability of $\bPhi$ be given by $p^{i}_{\bPhi}(u,v)$.
    The bias at the $i$-th step is given by 
    \begin{equation*}
        \begin{split}
            \bias_i &= 
        \Big| \EE\left[ f(X_i,X_{i+1})\right] - \sum_{(u,v) \in \cV\times\cV}\pi_{\bPhi}(u) \, p_{\bPhi}(u,v) f(u,v) \Big| \\
        &= \Big| \sum_{(u,v) \in \cV\times\cV} p^{i}_{\bPhi}(X_1,u) \, p_{\bPhi}(u,v) f(u,v)  - \sum_{(u,v) \in \cV\times\cV}\pi_{\bPhi}(u) \, p_{\bPhi}(u,v) f(u,v) \Big|\\
        &\leq B \Big| \sum_{u \in \cV} p^{i}_{\bPhi}(X_1,u) \sum_{v\in\cV} p_{\bPhi}(u,v)  - \sum_{u \in \cV}\pi_{\bPhi}(u) \sum_{v\in\cV} p_{\bPhi}(u,v) \Big|\\
        &\leq B \Big| \sum_{u \in \cV} p^{i}_{\bPhi}(X_1,u)- \sum_{u \in \cV}\pi_{\bPhi}(u)  \Big|
        \leq B  \sum_{u \in \cV} \Big| p^{i}_{\bPhi}(X_1,u)- \pi_{\bPhi}(u)  \Big| \,,
        \end{split}
    \end{equation*}
    where $f\bdot \leq B$, and the final inequality is due to Jensen's inequality.
    From \cite[Prop-3]{diaconis1991geometric},
    \begin{equation*}
        \bias_{i} \leq B \sqrt{\frac{1-\pi_{\bPhi}(X_1)}{\pi_{\bPhi}(X_1)}} \beta_{*}^{i} \,,
    \end{equation*}
    where $\beta_{*} = 1-\delta(\bPhi)$ is the SLEM of $\bPhi$.
    Because of Jensen's inequality and by summing a GP,
    \begin{equation*}
        \Big|\EE[\hmu_{0}\left( (X_i)_{i=1}^{t}\right)] - \frac{\mu(\cE)}{|\cE|}\Big|
         \leq \frac{1}{t-1} \sum_{i=1}^{t-1} \bias_i 
\leq \frac{B}{t-1} \sqrt{\frac{1-\pi_{\bPhi}(X_1)}{\pi_{\bPhi}(X_1)}} \frac{1-\beta_{*}^{t}}{1-\beta_{*}} \,.
    \end{equation*}
    Assuming that  $\beta_{*}^{t} \approx 0$ and $t-1 \approx t$ when $t$ is sufficiently large completes the proof.
\end{proof}

\begin{lemma}[\citet{Avrachenkov2016}]
	\label{lemma.tour.length.sq}	
	Let $\bPhi$ be a finite state space, irreducible, time-homogeneous Markov chain, and let $\xi$ denote the return time of \RWT started from some $\xo\in\cS$ as defined in \Cref{def:rwt}.
	If $\bPhi$ is reversible, then
	\begin{equation}
		\EE\left[ \xi^{2} \right] \leq \frac{3}{\pi_{\bPhi}(\xo)^{2}\delta(\bPhi)} \,,
	\end{equation} 
    where $\pi_{\bPhi}(\xo)$ is the stationary distribution of $\xo$, and $\delta(\bPhi)$ is the spectral gap of $\bPhi$.
    When $\bPhi$ is not reversible, the second moment of return times is given by \Cref{non.rev.sec.mom}.
\end{lemma}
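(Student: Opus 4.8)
The plan is to obtain $\EEX{\xi^{2}}$ in closed form from the diagonal return generating function of $\bPhi$ at $\xo$, and then bound the single spectral sum that appears by the spectral gap. First I would symmetrise the reversible transition matrix $P$ of $\bPhi$: writing $D=\mathrm{diag}\bigl(\pi_{\bPhi}\bigr)$, the matrix $S=D^{1/2}PD^{-1/2}$ is symmetric, has the same real eigenvalues $1=\lambda_{1}>\lambda_{2}\ge\cdots\ge\lambda_{|\cV|}>-1$ as $P$, and an orthonormal eigenbasis $\{\psi_{k}\}$ with $\psi_{1}(u)=\sqrt{\pi_{\bPhi}(u)}$. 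Diagonalising $S^{t}$ gives $p^{t}_{\bPhi}(\xo,\xo)=\sum_{k}\lambda_{k}^{t}\,\psi_{k}(\xo)^{2}$; in particular $\psi_{1}(\xo)^{2}=\pi_{\bPhi}(\xo)$ and $\sum_{k}\psi_{k}(\xo)^{2}=p^{0}_{\bPhi}(\xo,\xo)=1$, so $\sum_{k\ge2}\psi_{k}(\xo)^{2}=1-\pi_{\bPhi}(\xo)$.

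Next I would relate, for $|z|<1$, the generating functions $G(z)=\sum_{t\ge0}p^{t}_{\bPhi}(\xo,\xo)\,z^{t}$ and $F(z)=\sum_{t\ge1}\probX{\xi=t}\,z^{t}$ through the renewal identity $G(z)\,\bigl(1-F(z)\bigr)=1$, obtained by decomposing a return of length $t\ge1$ on the epoch of first return to $\xo$. From the spectral expansion,
\[
G(z)=\frac{\pi_{\bPhi}(\xo)}{1-z}+\sum_{k\ge2}\frac{\psi_{k}(\xo)^{2}}{1-\lambda_{k}z}=\frac{\pi_{\bPhi}(\xo)}{1-z}+A-B(1-z)+O\bigl((1-z)^{2}\bigr),\qquad A:=\sum_{k\ge2}\frac{\psi_{k}(\xo)^{2}}{1-\lambda_{k}}\,.
\]
Substituting $F=1-1/G$ and expanding about $z=1$ in $u=1-z$ yields $F(z)=1-u/\pi_{\bPhi}(\xo)+A\,u^{2}/\pi_{\bPhi}(\xo)^{2}+O(u^{3})$, so $F'(1^{-})=1/\pi_{\bPhi}(\xo)$ (which is Kac's formula $\EEX{\xi}=1/\pi_{\bPhi}(\xo)$) and $F''(1^{-})=2A/\pi_{\bPhi}(\xo)^{2}$. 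Since $F,F',F''$ are power series with non-negative coefficients, their left limits at $1$ equal $1$, $\EEX{\xi}$ and $\EEX{\xi(\xi-1)}$ respectively, finite here because the limit $2A/\pi_{\bPhi}(\xo)^{2}$ is finite; hence
\[
\EEX{\xi^{2}}=\EEX{\xi(\xi-1)}+\EEX{\xi}=\frac{2A}{\pi_{\bPhi}(\xo)^{2}}+\frac{1}{\pi_{\bPhi}(\xo)}\,.
\]

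It then remains to bound $A$. For every $k\ge2$ one has $\lambda_{k}\le\lambda_{2}\le\max\{|\lambda_{2}|,|\lambda_{|\cV|}|\}$, hence $1-\lambda_{k}\ge\delta(\bPhi)$, so $A\le\delta(\bPhi)^{-1}\sum_{k\ge2}\psi_{k}(\xo)^{2}=\bigl(1-\pi_{\bPhi}(\xo)\bigr)/\delta(\bPhi)$. Using $\pi_{\bPhi}(\xo)\le1$ and $\delta(\bPhi)\le1$, so that $1/\pi_{\bPhi}(\xo)\le1/\bigl(\pi_{\bPhi}(\xo)^{2}\delta(\bPhi)\bigr)$, I would conclude
\[
\EEX{\xi^{2}}\le\frac{2\bigl(1-\pi_{\bPhi}(\xo)\bigr)}{\pi_{\bPhi}(\xo)^{2}\,\delta(\bPhi)}+\frac{1}{\pi_{\bPhi}(\xo)^{2}\,\delta(\bPhi)}=\frac{3-2\pi_{\bPhi}(\xo)}{\pi_{\bPhi}(\xo)^{2}\,\delta(\bPhi)}\le\frac{3}{\pi_{\bPhi}(\xo)^{2}\,\delta(\bPhi)}\,,
\]
which is the claim.

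The main obstacle is making the passage $z\to1^{-}$ rigorous: one must check that $G$ converges on $|z|<1$, that its power-series and spectral forms agree there (legitimate, as both are absolutely summable for $|z|<1$), that $1/G$, hence $F$, extends analytically across $z=1$ (the residue of $G$ at $z=1$ is $\pi_{\bPhi}(\xo)>0$ and the remaining spectral term is analytic near $z=1$ because the chain is finite and every pole $1/\lambda_{k}$ lies off a neighbourhood of $1$), and that the Abelian/monotone-convergence principle identifies the left derivatives of $F$ at $1$ with the moments of $\xi$. The symmetrisation, the renewal identity, and the eigenvalue inequality are all routine; an equivalent route packages $A$ as $Z_{\xo\xo}-\pi_{\bPhi}(\xo)$ for the Kemeny--Snell fundamental matrix $Z$, but the generating-function bookkeeping above is the most self-contained. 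The non-reversible case requires a separate argument and is handled in \Cref{non.rev.sec.mom}.
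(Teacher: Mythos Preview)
Your proof is correct and, at the level of identities, lands in exactly the same place as the paper's proof; the route is different only in how much is derived from scratch. The paper simply quotes two facts from Aldous--Fill: the exact identity $\EEX{\xi^{2}}=\bigl(1+2\,\EE_{\pi_{\bPhi}}[T_{\xo}]\bigr)/\pi_{\bPhi}(\xo)$ and the reversible bound $\EE_{\pi_{\bPhi}}[T_{\xo}]\le 1/\bigl(\pi_{\bPhi}(\xo)\,\delta(\bPhi)\bigr)$, then combines them. Your generating-function argument rederives both pieces: your $A=\sum_{k\ge2}\psi_k(\xo)^{2}/(1-\lambda_k)$ is exactly $Z_{\xo\xo}=\pi_{\bPhi}(\xo)\,\EE_{\pi_{\bPhi}}[T_{\xo}]$ (as you note at the end), so your formula $\EEX{\xi^{2}}=2A/\pi_{\bPhi}(\xo)^{2}+1/\pi_{\bPhi}(\xo)$ is the same identity, and your spectral bound on $A$ reproduces the hitting-time bound. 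What your approach buys is self-containment and a slightly sharper numerator $3-2\pi_{\bPhi}(\xo)$; what the paper's approach buys is brevity, since the Aldous--Fill citations compress your entire calculation into three lines. Your discussion of the $z\to1^{-}$ limit is more careful than is strictly needed here (finiteness of the state space makes every spectral sum finite and the analytic continuation of $1/G$ past $z=1$ immediate), but there is no gap.
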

\begin{proof}
	Using \cite[Eq~2.21]{aldous-fill-2014}, we have
    \begin{equation}
        \label{non.rev.sec.mom}
		\EE\left[ \xi^{2} \right] = \frac{1+2\EE_{\pi_{\bPhi}}(T_{\xo})}{\pi_{\bPhi}(\xo)} \,,
	\end{equation}
	where $\EE_{\pi_{\bPhi}}(T_{\xo})$ is the expected hitting time of $\xo$ from the steady state.
	Combining \cite[Lemma~2.11 \& Eq~3.41]{aldous-fill-2014} and accounting for continuization yields 
    \begin{equation*}
    \begin{split}
		\EE_{\pi_{\bPhi}}(T_{\xo}) \leq \frac{1}{\pi_{\bPhi}(\xo)\delta(\bPhi)} \,.\text{ and therefore, }
		\EE\left[ \xi^{2} \right] \leq \frac{1+\frac{2}{\pi_{\bPhi}(\xo)\delta(\bPhi)}}{\pi_{\bPhi}(\xo)}
		< \frac{3}{\pi_{\bPhi}(\xo)^{2}\delta(\bPhi)} \,,
    \end{split}
	\end{equation*}
	because $\pi_{\bPhi}(\xo)$ and $\delta(\bPhi)$ lie in the interval $(0,1)$.
\end{proof}

\begin{proposition}
    \label{prop.tour.stitch}
    Given a positive recurrent Markov chain $\bPhi$ over state space $\cS$ and a set of $m$ {\RWT}s $\cT$ and
    assuming an arbitrary ordering over $\cT$, where $\bX^{(i)}$ is the $i$th \RWT in $\cT$,
    $\bX^{(i)}$ and $|\bX^{(i)}|$ are i.i.d.\ processes such that $\EE[|\bX^{(i)}|] < \infty$, and when the tours are stitched together as defined next, the sample path is governed by $\bPhi$.
    For $t\geq 1$, define $\Phi_{t} = X^{N_{t}}_{t - R_{N_{t}}}$, where $R_i = \sum_{i'=1}^{i-1} |\bX^{i}|$ when $i>1$ and $R_{1} = 0$ and $N_t = \max\{i \colon R_i < t\}$.
\end{proposition}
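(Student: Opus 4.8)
The plan is to verify, in order, the three assertions of the proposition — that the tours $\bX^{(i)}$ and the lengths $|\bX^{(i)}|$ are i.i.d.\ sequences, that $\EE[|\bX^{(i)}|]<\infty$, and that the concatenation $(\Phi_t)_t$ is governed by $\bPhi$ — using only the strong Markov property and the definition of positive recurrence, with no appeal to finiteness or reversibility of $\cS$. For the first, I fix the regeneration point $\xo$ of \Cref{def:rwt}, run $\bPhi$ from $X_1=\xo$, and let $1=\tau_0<\tau_1<\tau_2<\cdots$ be the successive visit times to $\xo$, each a.s.\ finite because $\xo$ is recurrent. Each $\tau_i$ is a stopping time with $\Phi_{\tau_i}=\xo$ identically, so the strong Markov property \citep[Chap.~2, Thm.~7.1]{bremaud2001markov} shows the post-$\tau_i$ trajectory is a copy of $\bPhi$ started at $\xo$ and independent of $\mathcal{F}_{\tau_i}$; cutting at the $\tau_i$'s produces exactly the excursions $\bX^{(1)},\bX^{(2)},\dots$ of \Cref{def:rwt}, so this sequence of excursions, and hence the sequence of lengths $|\bX^{(i)}|=\tau_i-\tau_{i-1}$, is i.i.d. Since the $m$ tours of $\cT$ are by hypothesis drawn in precisely this way, they inherit this structure.

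For the second assertion, $\EE[|\bX^{(1)}|]=\EE_{\xo}[\tau_1-\tau_0]$ is the mean return time of $\bPhi$ to $\xo$, which is finite precisely because $\bPhi$ is positive recurrent \citep[Chap.~3]{bremaud2001markov}. For the third, I would show that the glued process, restricted to $t\le R_{m+1}=\sum_{i=1}^{m}|\bX^{(i)}|$, has the same law as the first $R_{m+1}$ steps of a trajectory of $\bPhi$ started at $\xo$; the cleanest route is induction on the number of completed excursions, observing that given the first $i$ tours the glued path reaches $\Phi_{R_{i+1}+1}=X^{(i+1)}_1=\xo$ deterministically and the $(i{+}1)$-st tour is an independent excursion of $\bPhi$ from $\xo$, so splicing it in extends the path exactly as the strong Markov property prescribes. (Equivalently, one invokes the standard excursion-decomposition characterization of recurrent chains: the law of $\bPhi$ from $\xo$ is recovered from an i.i.d.\ stream of excursions distributed as $\bX^{(1)}$, and the map $(\bX^{(i)})_i\mapsto(\Phi_t)_t$ is the inverse of the decomposition used above.) The remaining bookkeeping is immediate: $N_t=\max\{i:R_i<t\}$ is the index of the tour containing position $t$ and $t-R_{N_t}$ its offset within that tour, so $\Phi_t=X^{N_t}_{t-R_{N_t}}$ merely selects that coordinate, while $\Phi_1=\xo$ and $\Phi_{R_i+1}=\xo$ for $i\le m$ follow from $R_1=0$ and $R_{i+1}=R_i+|\bX^{(i)}|$.

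The main obstacle is conceptual rather than computational and lies in the third step: because the phrase ``governed by $\bPhi$'' concerns a path whose length $R_{m+1}$ is itself random and measurable with respect to the tours, the equality of laws must be formulated conditionally — excursion-by-excursion, or conditionally on the excursion lengths — rather than as an equality of fixed-dimensional marginals. Once the induction over excursions is set up in this conditional form, everything else reduces to routine index bookkeeping.
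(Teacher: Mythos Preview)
Your proposal is correct and follows essentially the same approach as the paper: both arguments rest on the strong Markov property at the successive return times to $\xo$ (yielding independence and identical distribution of excursions) together with positive recurrence (yielding $\EE[|\bX^{(i)}|]<\infty$). The only difference is packaging: the paper dispatches the i.i.d.\ and ``governed by $\bPhi$'' claims in one stroke by citing Br\'emaud's regenerative cycle theorem \citep[2--Thm.~7.4]{bremaud2001markov}, whereas you unpack that theorem by hand via the excursion-by-excursion induction and are more explicit about the random-length conditioning subtlety.
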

\begin{proof}
    $R_i$ is a sequence of stopping times. Therefore, the strong Markov property~\citep[2-Thm.7.1]{bremaud2001markov} states that sample paths before and after $R_i$ are independent and are governed by $\bPhi$.
    Because $\bPhi$ is positive recurrent and $\xo$ is visited i.o.,\ the regenerative cycle theorem~\citep[2-Thm.7.4]{bremaud2001markov} states that these trajectories are identically distributed and are equivalent to the tours $\cT$ sampled according to \Cref{def:rwt}.
    $\EE[|\bX^(i)|] < \infty$ due to positive recurrence.
\end{proof}
\subsection{Proof of \Cref{lem.rwte}}
\label{proof.lem.rwte}
\begin{proof}[Unbiasedness and Consistency]
    Because $\cG$ is connected, $\bPhi$ is positive recurrent with steady state $\pi_{\bPhi}(u) \propto \dg(u)$ due to \Cref{def.ht.mcmc}.
    Consider the reward process $F^{(i)} = \sum_{j=1}^{|\bX^{(i)}|} f(X_j^{(i)},X_{j+1}^{(i)})$, $i\geq1$.
    From \Cref{prop.tour.stitch}, $F^{(i)}$ and $|\bX^{(i)}|$ are i.i.d. sequences with finite first moments, because $F^{(i)} \leq B |\bX^{(i)}|$.
    Let $N_t$ and $R_i$ be as defined in \Cref{prop.tour.stitch}.
    
    Therefore, from the renewal reward theorem~\citep[3-Thm.4.2]{bremaud2001markov}, we have 
	\begin{equation*}
    \frac{\EE[F^{(i)}]}{\EE[|\bX^{(i)}|]} 
    = \lim_{t\to\infty}\frac{\sum_{i=1}^{N_t} F^{(i)}}{t} 
    = \lim_{t\to\infty}\frac{\sum_{i=1}^{N_t} F^{(i)}}{R_{N_{t}}} \cdot \frac{R_{N_{t}}}{t} 
	= \frac{\sum_{i=1}^{N_t} F^{(i)}}{R_{N_{t}}} \,,
    \end{equation*}
    where the final equality holds because $\lim_{t\to\infty}\frac{R_{N_{t}}}{t}  = 1-\lim_{t\to\infty}\frac{t-R_{N_{t}}}{t}$, and $\lim_{t\to\infty}\frac{t-R_{N_{t}}}{t}$ converges to $0$ as $t\to\infty$ because $|\bX^{\bdot}|<\infty$ w.p.\ 1 because $\bPhi$ is positive recurrent.

    From \Cref{prop.tour.stitch} and the definition of $F^{(i)}$, $\sum_{i=1}^{N_t} F^{(i)} = \sum_{j=1}^{R_{N_{t}}} f(\Phi_j,\Phi_{j+1})$, and because $f$ and $\pi_{\bPhi}$ are bounded, we have from the ergodic theorem~\citep[3-Cor.4.1]{bremaud2001markov},
	\begin{equation*}
        \frac{\EE[F^{(i)}]}{\EE[|\bX^{(i)}|]} 
    = \lim_{t\to\infty}\frac{\sum_{j=1}^{R_{N_{t}}} f(\Phi_j,\Phi_{j+1})}{R_{N_{t}}}
    \asc
	\sum_{(u,v)\in \cV\times\cV}\pi_{\bPhi}(u)p_{\bPhi}(u,v) g(u,v)
	= \frac{2\mu(\cE)}{2|\cE|} \,.
    \end{equation*}
    From Kac's formula~\citep[Cor.2.24]{aldous-fill-2014}, $\nicefrac{1}{\EE[|\bX^{(i)}|]} = {\pi_{\bPhi}(\xo)} = \frac{\dg(\xo)}{2|\cE|}$, and
    \begin{equation*}
        \EE\left[\frac{\dg(\xo)}{2} F^{(i)}\right] \asc \mu(\cE) \,.
    \end{equation*}
    $\hmus(\cT; f, \cG)$ is unbiased by linearity of expectations on the summation over $\cT$, and consistency is a consequence of Kolmogorov's SLLN~\citep[1-Thm.8.3]{bremaud2001markov}.
\end{proof}
\begin{proof}[Running Time] 
    From Kac's formula~\citep[Cor.2.24]{aldous-fill-2014}, $\EE[|\bX^{(i)}|] = \frac{2|\cE|}{\dg(\xo)}$.
    From \Cref{prop.tour.stitch}, tours can be sampled independently and thus parallelly.
    All cores will sample an equal number of tours in expectation, yielding the running time bound.
\end{proof}
\begin{proof}[Variance] 
    Because $f\bdot <B$, and tours are i.i.d.,\ the variance is given by 
    \begin{equation*}
        \Var\left(\hmus(\cT)\right) 
        = \Var\left(
            \frac{\dg(\xo)}{2m} \sum_{\bX \in \cT} \sum_{j=1}^{|\bX|} f(X_j,X_{j+1})
            \right)
        \leq \frac{\dg(\xo)^2 B^2}{4m}\Var\left( |\bX| \right) \,.
    \end{equation*}
    From \Cref{lemma.tour.length.sq} and Kac's formula~\citep[Cor.2.24]{aldous-fill-2014}, $\Var\left( |\bX| \right)$ is given by
    \begin{equation*}
        \Var\left( |\bX| \right) \leq 
        \frac{3}{\pi_{\bPhi}(\xo)^{2}\delta(\bPhi)}
        - \frac{1}{\pi_{\bPhi}(\xo)^{2}}
        \leq \frac{3}{\pi_{\bPhi}(\xo)^{2}\delta(\bPhi)}
        = \frac{12 {|\cE|}^{2}}{\dg(\xo)^2\delta(\bPhi)}
        \,.
    \end{equation*}   
\end{proof}

\section{Proofs for \Cref{sec.estimator}}
\label{proofs.sec.estimator}
\begin{assumption}
    \label{as.simplifying}
    For each $\cG_{r}$, $1<r\leq R$ from \Cref{def.gr}, assume $\dg(\vo_{r})$ is known and that $p_{\bPhi_{r}}(\vo_{r}, \cdot)$ can be sampled from.
\end{assumption}
\begin{proposition}[{\RWT}s in $\bPhi_{r}$]
    \label{prop.rwt.phi.r}
    Under \Cref{as.simplifying}, given access only to the original chain $\bPhi$ and stratifying function $\rho$, let $\bPhi_{r}$ be the random walk in the graph stratum $\cG_{r}$ from \Cref{def.gr}.
    To sample an \RWT $(X_{i})_{i=1}^{\xi}$ over $\bPhi_{r}$ from the supernode $\vo_{r}$, we set $X_{1} = \vo_{r}$, sample $X_2 \sim \pvo{r}$, and then, until $\rho(X_{\xi+1}) < r$, we sample 
    \begin{equation*}
        X_{i+1} \sim \unif\left(\N_{G_r}\left(X_i\right)\right) \equiv 
        \begin{cases}
            \unif(\N_{\cG}(X_{i}))  & \text{if } \rho(X_{i}) = r \\
            \unif(\N_{\cG}(X_{i})\cap \cI_{r}) & \text{if } \rho(X_{i}) > r
        \end{cases} \,.
    \end{equation*}
\end{proposition}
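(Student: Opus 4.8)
The plan is to peel apart the two-stage construction of the $r$-th graph stratum in \Cref{def.gr} --- first prune, then contract --- and to reduce the claim to the single-supernode tour construction of \Cref{rk.contracted}. First I would let $\cG_{r}'$ be the graph obtained from $\cG$ by deleting every edge not incident on $\cI_{r}$ and every vertex that has no neighbour in $\cI_{r}$, \emph{without} performing the contraction; by \Cref{def.gr}, $\cG_{r}$ is then exactly the contraction of $\cI_{1:r-1}$ into $\vo_{r}$ inside $\cG_{r}'$ (\Cref{def.contracted}), and $\bPhi_{r}$ is the simple random walk on this (multi)graph. Applying \Cref{rk.contracted} with ambient graph $\cG_{r}'$ and supernode set $\cI_{1:r-1}$, an \RWT (\Cref{def:rwt}) on $\bPhi_{r}$ from $\vo_{r}$ is generated by setting $X_{1}=\vo_{r}$, drawing $X_{2}$ uniformly at random from the incident-edge multiset of $\vo_{r}$ in $\cG_{r}'$, and thereafter running the simple random walk on $\cG_{r}'$ until it first re-enters $\cI_{1:r-1}$. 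Because a simple-random-walk step leaves $\vo_{r}$ by choosing one of its incident edges uniformly at random, drawing $X_{2}$ this way is precisely $X_{2}\sim\pvo{r}$, which \Cref{as.simplifying} permits; this disposes of the initial step and shows the construction uses $\bPhi$, $\rho$, and a single draw from $\pvo{r}$ only.

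The heart of the argument is to identify, for a state $v\in\cV_{r}\setminus\{\vo_{r}\}$ that the walk visits, the multiset $\N_{\cG_{r}}(v)$ over which a simple-random-walk step of $\bPhi_{r}$ chooses uniformly, splitting on $\rho(v)$. If $\rho(v)=r$, i.e.\ $v\in\cI_{r}$, then every edge of $\cG$ at $v$ is incident on $\cI_{r}$ through $v$ and hence survives the pruning; the contraction only relabels those $\cG$-neighbours of $v$ that lie in $\cI_{1:r-1}$ as copies of $\vo_{r}$ (no self-loops arise since $\cG$ is simple and $v\notin\cI_{1:r-1}$), so a step of $\bPhi_{r}$ from $v$ is obtained by sampling $w\sim\unif(\N_{\cG}(v))$ and identifying $w$ with $\vo_{r}$ exactly when $\rho(w)<r$. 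If instead $\rho(v)>r$ --- so $v$ is retained in $\cV_{r}$ only because it neighbours $\cI_{r}$ --- then an edge $(v,w)$ of $\cG$ survives the pruning iff it is incident on $\cI_{r}$, i.e.\ iff $w\in\cI_{r}$, and the contraction touches neither $v$ nor such $w$; hence $\N_{\cG_{r}}(v)=\N_{\cG}(v)\cap\cI_{r}$ and a step of $\bPhi_{r}$ from $v$ is $X_{i+1}\sim\unif(\N_{\cG}(v)\cap\cI_{r})$. These are exactly the two branches in the statement, and each is computable from $\N_{\cG}(\cdot)$, uniform sampling, and evaluations of $\rho$.

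Finally I would read off the stopping rule from the same case analysis: the walk re-enters $\vo_{r}$ only from a state $v\in\cI_{r}$, and precisely when the next $\cG$-step lands in $\cI_{1:r-1}$, i.e.\ when $\rho(X_{i+1})<r$; in particular it never returns to $\vo_{r}$ from a later-stratum state, which is consistent. Hence the return time is $\xi=\min\{i>1:\rho(X_{i+1})<r\}$, matching ``until $\rho(X_{\xi+1})<r$'' in the statement. I expect the only point requiring care to be the multiplicity bookkeeping around the contraction: verifying that a uniform choice of an incident edge of $v$ in the multigraph $\cG_{r}$ really does coincide with a uniform choice among the $|\N_{\cG}(v)|$ neighbours of $v$ in $\cG$, with the $|\N_{\cG}(v)\cap\cI_{1:r-1}|$ parallel edges to $\vo_{r}$ properly accounted for, but this follows directly from \Cref{def.contracted} and poses no real difficulty.
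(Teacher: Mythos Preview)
Your proposal is correct and is essentially a careful unpacking of exactly what the paper invokes: the paper's own proof is the single sentence ``The proof is a direct consequence of \Cref{def.gr} and \Cref{def.simplerw},'' and your prune-then-contract decomposition together with the case split on $\rho(v)$ is precisely how one makes that consequence explicit. There is no substantive difference in approach; you have simply written out what the paper leaves implicit.
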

\begin{proof}
    The proof is a direct consequence of \Cref{def.gr} and \Cref{def.simplerw}.
\end{proof}
\begin{proposition}[Perfectly Stratified Estimate]
    \label{prop.perfect.sampling}
    Under \Cref{as.simplifying}, given the \EPART (\Cref{def:epart}) stratum $\cG_{r}$ (\Cref{def.gr}), bounded $f \colon \cE\to\RR$ and a set of $m$ {\RWT}s $\cT_{r}$ over $\bPhi_{r}$ from $\vo_r$ from \Cref{prop.rwt.phi.r}, the per stratum estimate is given by
	\begin{equation}
		\label{eq.rwte.r}
		\hmu(\cT_{r}; f, \cG_{r}) = \frac{\dg(\vo_r)}{2m} \sum_{\bX \in \cT_{r}}\sum_{j=2}^{|\bX|-1} f(X_j,X_{j+1})\,,
    \end{equation}
    where $X_j$ is the $j$th state visited in the \RWT $\bX \in \cT_{r}$.
    For all $r>1$, $\hmu(\cT_{r}; f, \cG_{r})$ is an unbiased and consistent estimator of $\mu(\cJ_{r}) = \sum_{(u,v)\in\cJ_{r}} f(u,v)$, where $\cJ_{r}$ is the $r$-th edge stratum defined in \Cref{def.dependent.strat}.
\end{proposition}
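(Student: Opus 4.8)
The plan is to reduce this to a single application of the \RWTE guarantee (\Cref{lem.rwte}) on the graph stratum $\cG_{r}$, after replacing $f$ by a function that vanishes on the edges incident on the supernode $\vo_{r}$. First I would record the two structural facts that make the reduction work: (i) since the stratification is an \EPART, the stratum $\cG_{r}$ is connected, and being finite, $\bPhi_{r}$ is irreducible and hence positive recurrent, so \Cref{lem.rwte} applies to it (its proof carries over verbatim to the contracted multigraph of \Cref{def.contracted}, since the detailed-balance identity and $\pi_{\bPhi_{r}}(u)\propto\dg(u)$ still hold with edge multiplicities counted); and (ii) by \Cref{def.gr} the edge multiset $\cE_{r}$ is the disjoint union of $\cJ_{r}$ and the set of edges joining $\vo_{r}$ to $\cI_{r}$, so the edges of $\cG_{r}$ incident on $\vo_{r}$ are exactly $\cE_{r}\setminus\cJ_{r}$.

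Next I would introduce $f'\colon\cE_{r}\to\RR$ defined by $f'(e)=f(e)$ for $e\in\cJ_{r}$ and $f'(e)=0$ for every (supernode-incident) edge in $\cE_{r}\setminus\cJ_{r}$; then $f'\leq B$ and $\sum_{e\in\cE_{r}}f'(e)=\sum_{e\in\cJ_{r}}f(e)=\mu(\cJ_{r})$. Under \Cref{as.simplifying}, the tours $\cT_{r}$ generated as in \Cref{prop.rwt.phi.r} are genuine {\RWT}s of $\bPhi_{r}$ started at $\vo_{r}$ in the sense of \Cref{def:rwt}. Applying \Cref{lem.rwte} to $\cG_{r}$, the walk $\bPhi_{r}$, regeneration point $\xo=\vo_{r}$, and function $f'$ then yields that $\hmus(\cT_{r};f',\cG_{r})=\frac{\dg(\vo_{r})}{2m}\sum_{\bX\in\cT_{r}}\sum_{j=1}^{|\bX|}f'(X_{j},X_{j+1})$ is an unbiased and consistent estimator of $\sum_{e\in\cE_{r}}f'(e)=\mu(\cJ_{r})$.

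Finally I would verify the algebraic identity $\hmus(\cT_{r};f',\cG_{r})=\hmu(\cT_{r};f,\cG_{r})$, i.e.\ that passing from $f$ to $f'$ has exactly the same effect as dropping the first and last transitions of each tour. For a tour $\bX=(X_{i})_{i=1}^{\xi}$ started at $\vo_{r}$: the first transition $(X_{1},X_{2})=(\vo_{r},X_{2})$ and the last transition $(X_{\xi},X_{\xi+1})=(X_{\xi},\vo_{r})$ both traverse edges incident on $\vo_{r}$, where $f'=0$; and for $2\leq j\leq\xi-1$ the return-time definition forces $X_{j}\neq\vo_{r}$ and $X_{j+1}\neq\vo_{r}$, so $(X_{j},X_{j+1})$ is a non-supernode-incident edge of $\cG_{r}$, hence lies in $\cJ_{r}$ with $f'=f$ there. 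Thus $\sum_{j=1}^{|\bX|}f'(X_{j},X_{j+1})=\sum_{j=2}^{|\bX|-1}f(X_{j},X_{j+1})$ for every tour, and summing over $\cT_{r}$ gives the claimed equality, completing the argument. The only (mild) obstacle I anticipate is precisely this bookkeeping: confirming that every intermediate transition of a tour stays strictly inside $\cJ_{r}$ and never revisits $\vo_{r}$ before the tour terminates, together with checking that the statement and proof of \Cref{lem.rwte} transfer to the contracted multigraph $\cG_{r}$.
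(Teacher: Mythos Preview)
Your proposal is correct and follows essentially the same approach as the paper: define $f'(u,v)=\ind{u,v\neq\vo_{r}}f(u,v)$, observe that the first and last transitions of each tour are killed by $f'$ so that $\hmu(\cT_{r};f,\cG_{r})=\hmus(\cT_{r};f',\cG_{r})$, and then apply \Cref{lem.rwte} on the connected stratum $\cG_{r}$ together with the decomposition $\cE_{r}=\cJ_{r}\cup\{\text{edges incident on }\vo_{r}\}$ to conclude unbiasedness and consistency for $\mu(\cJ_{r})$. Your additional care about the multigraph structure and the intermediate transitions staying off $\vo_{r}$ is more explicit than the paper's write-up but not a different argument.
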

\begin{proof}
    Define $f' \colon \cE_{r} \to \RR$ as $f'(u,v) \triangleq \ind{u,v \neq \vo_{r}} f(u,v)$.
    By \Cref{def:rwt}, in each \RWT $\bX \in \cT_{r}$, $f'(X_1,X_2) = f'(X_{|\bX|},X_{|\bX|+1}) = 0$, and therefore, $\hmu(\cT_{r}; f, \cG_{r}) = \hmus(\cT; f', \cG_{r})$, where $\hmus$ is the \RWTE from \Cref{lem.rwte}. Moreover, because $\cG_{r}$ is connected,
    \begin{equation*}
        \EE\left[ \hmu(\cT_{r}; f, \cG_{r}) \right] = \EE\left[ \hmus(\cT; f', \cG_{r}) \right] = \sum_{(u,v) \in \cE_{r}} f'(u,v) = \sum_{(u,v) \in \cJ_{r}} f(u,v) \,,
    \end{equation*}
    where the final equality holds because $\cE_{r}$ is the union of $\cJ_{r}$ and edges incident on the supernode.
    Consistency is also due to \Cref{lem.rwte}.
\end{proof}

\subsection{Proof of \Cref{prop.epart.suff}}
\label{proof.prop.epart.suff}
\begin{proof}
	\ref{component.cond} is necessary because when \ref{component.cond} does not hold, there exists a component such that the minimum value of $\rho$ in that component is $\ddot{r}>0$ such that in $\cG_{\ddot{r}}$ (\Cref{def.gr}), and the supernode $\vo_{\ddot{r}}$ will be disconnected from all vertices.
	If \ref{interconnect.cond} is violated, a vertex $\ddot{u}$ exists that is disconnected in $\cG_{\rho(\ddot{u})}$, and if \ref{preconnect.cond} is violated, the supernode is disconnected.
	Finally, it is easily seen that these conditions sufficiently guarantee that  each stratum is connected, and the stratification is an \EPART.
\end{proof}

\subsection{Proof of \Cref{thm.estimator}}
\label{proof.thm.estimator}
We begin by defining the multi-set containing the end points of edges between vertex strata.
\begin{definition}
    \label{def.border.sets}
    Given $\cG$ stratified into $R$ strata, $\forall 1 \leq q < t \leq R$ define border multi-sets as 
    \begin{equation*}
        \cB{q}{t} \triangleq \{v \, \forall (u,v) \in \cE \colon u \in \cI_{q} \tand v \in \cI_{t}\} \,.
    \end{equation*}
    The degree of the supernode in $\cG_{r}$ (\Cref{def.gr}) is then given by $\dg(\vo_{r}) = \sum_{q=1}^{r-1}|\cB{q}{r}|$, and transitions out of $\vo_{r}$ can be sampled by sampling $q \in \{1,\ldots,r-1\}$ w.p.\ $\propto |\cB{q}{r}|$ and then by uniformly sampling from $\cB{q}{r}$.
\end{definition}

\begin{proposition}
    \label{prop.inductive.claim}
Given the setting in \Cref{def.recursive.step,def.estimator}, for all $1 \leq r < t \leq R$,
\begin{align}
    \label{eq.inductive.claim.beta}
        \lim_{|\dT_{2}| \to\infty} \ldots \lim_{|\dT_{r}| \to\infty} &
        \hbeta{r}{t}
          \asc
         |\cB{r}{t}|\,,\\
\label{eq.inductive.claim.U}
         \lim_{|\dT_{2}| \to\infty} \ldots \lim_{|\dT_{r}| \to\infty} &
         \hbU{r}{t}
           \sim
          \unif(\cB{r}{t})\,,\\
\label{eq.inductive.claim.phi}
         \lim_{|\dT_{2}| \to\infty} \ldots \lim_{|\dT_{r}| \to\infty} &
         p_{\hPhi_{r}}(\bX)
           =
           p_{\bPhi_{r}}(\bX)\,,\quad \forall \bX \in \dT_{r}\,.
\end{align}
i.e., each tour in $\dT_{r}$ is perfectly sampled from $\bPhi_{r}$.
\end{proposition}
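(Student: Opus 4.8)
The plan is to prove \eqref{eq.inductive.claim.beta}, \eqref{eq.inductive.claim.U} and \eqref{eq.inductive.claim.phi} simultaneously by induction on $r$, establishing them for every target $t\in\{r+1,\dots,R\}$ at once and taking the three displays at all levels $q<r$ as the induction hypothesis. The base case $r=1$ is a deterministic identity: by the conventions in \Cref{def.recursive.step}, $\hbeta{1}{t}=|\cE\cap\cI_{1}\times\cI_{t}|$, which by \Cref{def.border.sets} (and since $\cG$ is simple) equals $|\cB{1}{t}|$; likewise $\hbU{1}{t}=\uplus_{u\in\cI_{1}}\N(u)\cap\cI_{t}=\cB{1}{t}$ as multisets, so u.a.r.\ sampling from $\hbU{1}{t}$ is literally $\unif(\cB{1}{t})$; and \eqref{eq.inductive.claim.phi} is vacuous since there is no graph stratum $\cG_{1}$ and no $\dT_{1}$.

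For the inductive step at level $r$ I would first settle \eqref{eq.inductive.claim.phi}. A tour $\bX\in\dT_{r}$ follows $\bPhi_{r}$ at every state except $\vo_{r}$, which it leaves exactly once (on its first step), so $p_{\hPhi_{r}}(\bX)=\frac{p_{\hPhi_{r}}(\vo_{r},X_{2})}{p_{\bPhi_{r}}(\vo_{r},X_{2})}\,p_{\bPhi_{r}}(\bX)$. The approximate kernel $\hpvo{r}$ picks $q\leq r-1$ with probability proportional to $\hbeta{q}{r}$ and then u.a.r.\ from $\hbU{q}{r}$, and $\hdgvo{r}=\sum_{q<r}\hbeta{q}{r}$; so the induction hypothesis \eqref{eq.inductive.claim.beta} and \eqref{eq.inductive.claim.U} at levels $q<r$ forces, under the nested limits, $\hbeta{q}{r}\to|\cB{q}{r}|$ and $\hbU{q}{r}\to\unif(\cB{q}{r})$, hence $\hdgvo{r}\to\dgvo{r}$ and $\hpvo{r}\to\pvo{r}$ --- the latter being exactly the supernode out-transition of \Cref{def.border.sets}. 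This drives the prefactor above to $1$, giving \eqref{eq.inductive.claim.phi}; in particular, in the limit $\dT_{r}$ is a collection of genuine $\bPhi_{r}$-tours.

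It remains to derive \eqref{eq.inductive.claim.beta} and \eqref{eq.inductive.claim.U} at level $r$. Writing $\hbeta{r}{t}=\hdgvo{r}\cdot\frac{1}{|\dT_{r}|}\sum_{\bX\in\dT_{r}}\sum_{j=2}^{|\bX|}\ind{\rho(X_{j})=t}$, the inner double sum is the number of visits a tour makes to $\cI_{t}$. Taking the innermost limit $|\dT_{r}|\to\infty$ over i.i.d.\ tours from the chain $\hPhi_{r}$ (fixed once $\dT_{2:r-1}$ is) and applying the strong law of large numbers gives a.s.\ convergence to $\hdgvo{r}\cdot\EEX{\#\{2\leq j\leq|\bX|:\rho(X_{j})=t\}}$ under $\hPhi_{r}$; then letting the outer limits act sends, by the previous paragraph, $\hPhi_{r}\to\bPhi_{r}$ and $\hdgvo{r}\to\dgvo{r}$, and (by continuity of this renewal--reward functional in the transition kernel of a finite irreducible chain) the whole expression to $\dgvo{r}\cdot\EEX{\#\{j:\rho(X_{j})=t\}}$ under $\bPhi_{r}$. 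For the reversible walk $\bPhi_{r}$, Kac's formula gives the expected number of visits per tour to a vertex $v\neq\vo_{r}$ as $\pi_{\bPhi_{r}}(v)/\pi_{\bPhi_{r}}(\vo_{r})=\dg_{\cG_{r}}(v)/\dgvo{r}$; since in $\cG_{r}$ the vertices of $\cI_{t}$ are adjacent only to $\cI_{r}$, summing over $v\in\cI_{t}$ yields $\sum_{v\in\cI_{t}}\dg_{\cG_{r}}(v)=|\cB{r}{t}|$ (the number of $\cI_{r}$--$\cI_{t}$ edges), so the limit is $\dgvo{r}\cdot|\cB{r}{t}|/\dgvo{r}=|\cB{r}{t}|$, which is \eqref{eq.inductive.claim.beta}. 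Equivalently, the oracle version of $\hbeta{r}{t}$ --- with true $\dgvo{r}$ and $\bPhi_{r}$-tours --- coincides with $\hmu(\cT_{r};g_{t},\cG_{r})$ of \Cref{prop.perfect.sampling} for $g_{t}$ the indicator of $\cI_{r}$--$\cI_{t}$ edges, because each $\cI_{t}$-visit in a $\cG_{r}$-tour is flanked by two traversals of such edges; hence it is a consistent estimator of $\sum_{(u,v)\in\cJ_{r}}g_{t}(u,v)=|\cB{r}{t}|$. The same visit-frequency computation shows that each $v\in\cI_{t}$ occurs in $\hbU{r}{t}$ in a proportion converging to its multiplicity in $\cB{r}{t}$ divided by $|\cB{r}{t}|$, so u.a.r.\ sampling from $\hbU{r}{t}$ converges to $\unif(\cB{r}{t})$, which is \eqref{eq.inductive.claim.U}.

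The step I expect to be the main obstacle is the interchange of the nested limits in the previous paragraph: one must justify that taking $|\dT_{r}|\to\infty$ against the \emph{approximate} chain $\hPhi_{r}$ and only afterwards pushing the earlier strata to infinity returns the same value as sampling $\dT_{r}$ from the true $\bPhi_{r}$. This rests on (i) the a.s.\ convergence $\hpvo{r}\to\pvo{r}$ as random kernels, inherited from the induction hypothesis, and (ii) continuity of the relevant renewal--reward quantities (expected per-tour count of, and expected per-tour visits to, each vertex of $\cI_{t}$) in the transition kernel of a finite irreducible chain, which follows from their expression through the stationary vector / fundamental matrix together with continuity of the stationary distribution in the transition matrix for irreducible chains. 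One also has to dispose of the degenerate pre-limit event on which $\hpvo{r}$ has deficient support (so $\hPhi_{r}$ fails to be irreducible): this has probability tending to $0$ under the outer limits --- since the $\dT_{q}$-tours a.s.\ eventually visit every vertex of $\cI_{r}\cap\cV_{q}=\supp(\cB{q}{r})$ --- and hence does not affect the stated limiting identities.
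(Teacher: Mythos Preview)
Your proposal is correct and follows essentially the same strong-induction structure as the paper: identical base case, the inductive hypothesis is used in the same way to collapse $\hdgvo{r}\to\dgvo{r}$ and $\hpvo{r}\to\pvo{r}$ (yielding \eqref{eq.inductive.claim.phi}), and then the $r$-th stratum claims are read off from the now-exact $\bPhi_{r}$-tours. The only substantive differences are cosmetic: the paper routes \eqref{eq.inductive.claim.beta} and \eqref{eq.inductive.claim.U} through \Cref{lem.rwte} with $f'(u,v)=\ind{\rho(v)=t}$ and the ergodic theorem via \Cref{prop.tour.stitch}, whereas you compute the same quantities directly from Kac's formula and degree-in-$\cG_{r}$ counts; and you make explicit the nested-limit/continuity argument that the paper leaves implicit.
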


\begin{proof}[By Strong Induction]
    The base case for $r=1$ holds by the base case in \Cref{def.recursive.step}.
    Now assume that \Cref{prop.inductive.claim} holds for all strata up to and including $r-1$.
    Because of the inductive claim and by \Cref{def.border.sets},
    \begin{align*}
    \lim_{|\dT_{2}| \to\infty} \ldots \lim_{|\dT_{r-1}| \to\infty}&
        \hdgvo{r} = \sum_{q=1}^{r-1} \hbeta{q}{r} \asc \sum_{q=1}^{r-1} |\cB{q}{r}| = \dgvo{r} \,,\\
\text{and similarly, }
\lim_{|\dT_{2}| \to\infty} \ldots \lim_{|\dT_{r-1}| \to\infty}&
        \hpvo{r} \equiv \pvo{r}
    \end{align*}
    because the inductive claim makes the procedure of sampling transitions out of $\vo_{r}$ in \Cref{def.recursive.step} equivalent to \Cref{def.border.sets}.
    \Cref{eq.inductive.claim.phi} holds because transition probabilities at all states other than $\vo_{r}$ are equivalent in $\bPhi_{r}$ and $\hPhi_{r}$ according to \Cref{def.sn.estimates}.
    Now recall that
    \begin{equation*}
        \hbeta{r}{t} = 
		\frac{\hdgvo{r}}{|\dT_{r}|} \sum_{\bX \in \dT_{r}}\sum_{j=2}^{|\bX|} \ind{\rho(X_{j}) = t} 
		\,.
    \end{equation*}
    Because $\hdgvo{r} = \dgvo{r}$ and the tours are sampled perfectly,
    \begin{equation*}
    \lim_{|\dT_{2}| \to\infty} \ldots \lim_{|\dT_{r-1}| \to\infty}
        \hbeta{r}{t} = \hmus\left(\dT_{r}; f' \right) \,,
    \end{equation*}
    where $f'(u,v) = \ind{\rho(v) = t}$ and $\hmus$ is from \Cref{lem.rwte}, from which we also use the consistency guarantee to show that under an \EPART, \Cref{eq.inductive.claim.beta} holds as
\begin{equation*}
        \lim_{|\dT_{2}| \to\infty} \ldots \lim_{|\dT_{r}| \to\infty}
        \hbeta{r}{t} \asc \sum_{(u,v) \in \cE_{r}} f'(u,v)  = |\cB{r}{t}|\,.
    \end{equation*}
    Because of \Cref{prop.tour.stitch}, concatenating tours $\bX \in \dT_{q}$ yields a sample path from $\bPhi_{r}$, and these samples are distributed according to $\pi_{\bPhi_{r}}$ as $|\dT_{r'}| \to\infty$, $r'\leq r$.
    Therefore,
    \begin{equation*}
        \lim_{|\dT_{2}| \to\infty} \ldots \lim_{|\dT_{r}| \to\infty}
        \uplus_{\bX \in \dT_{q}}\uplus_{j=2}^{|\bX|}
        \left\{X_{j}  \colon \rho(X_{j}) = t \right\}
        \sim \pi'_{\bPhi_{r}} \,,
    \end{equation*}
    where $\pi'_{\bPhi_{r}}(u) \propto  \ind{\rho(u) = t} \dg_{\cG_{r}}(u)$, which is equivalent to $\unif(\cB{r}{t})$ by \Cref{def.gr,def.border.sets}, thus proving \Cref{eq.inductive.claim.U}.
\end{proof}
\begin{proof}[Main Theorem]  
Combining \Cref{prop.inductive.claim} and \Cref{prop.perfect.sampling} proves \Cref{thm.estimator}.
\end{proof}

\subsection{Proof of \Cref{thm.bias.propagation}}
\label{proof.thm.bias.propagation}
\begin{definition}[$L^{2}$ Distance between $\widehat{\pi}$ and $\pi$ \citep{aldous-fill-2014} ]
	\label{def.l2.dist}
	The $L^{2}$ distance between discrete probability distribution $\widehat{\pi}$ and reference distribution $\pi$ with sample space $\Omega$ is given by 
	$\|\widehat{\pi} - \pi\|_2 = \sum_{i \in \Omega} \frac{(\widehat{\pi}(i) - \pi(i))^{2}}{\pi(i)}$.
\end{definition}

\begin{definition}[Distorted chain]
    \label{def.distorted}
    Given a Markov chain $\bPhi$ over finite state space $\cS$ and an arbitrary $\xo \in \cS$, let $\hPhi$ be the distorted chain such that $\forall\,u\neq\xo$, $p_{\hPhi}(u,\cdot) = p_{\bPhi}(u,\cdot)$, and $p_{\hPhi}(\xo,\cdot)$ is an arbitrary distribution with support $\supp(p_{\hPhi}(\xo,\cdot))\subseteq \supp(p_{\bPhi}(\xo,\cdot))$.
    The distortion is given by $\|p_{\hPhi}(\xo,\cdot) - p_{\bPhi}(\xo,\cdot)\|$ as defined in \Cref{def.l2.dist}.
\end{definition}

\begin{lemma}
    \label{lemma.p.ratio}	
    Given a finite state, positive recurrent Markov chain $\bPhi$ over state space $\cS$,
    let $\hPhi$ be the chain distorted at some $\xo \in \cS$ from \Cref{def.distorted}.
    Let 
    \begin{equation*}
        \cX = \left\{(X_{1}, \ldots, X_{\xi}) \colon X_{1} = \xo\,,\, \xi = \min\{t>0 \colon X_{t+1} = \xo\} \,,\, p_{\bPhi}(X_{1}, \ldots, X_{\xi})>0\right\} \,,
    \end{equation*}
    denote the set of all possible arbitrary lengths {\RWT}s that begin and end at $\xo$ from \Cref{def:rwt}.
    Given a tour $\bY \in \cX$ sampled from $\bPhi$ and a bounded function $F\colon\cX \to \RR$,
	\begin{equation}
		\EE_{\bPhi}\left[ \frac{p_{\hPhi}(Y_1,Y_2)}{p_{\bPhi}(Y_1,Y_2)}F(\bY) \right] = \EE_{\hPhi}\left[ F(\bY) \right] \,,
	\end{equation}
	where $\EE_{\bPhi}$ and $\EE_{\hPhi}$ are expectations under the distribution of tours sampled from $\bPhi$ and $\hPhi$.\end{lemma}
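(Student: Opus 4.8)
The plan is to read this as a change-of-measure identity between excursion laws. The set $\cX$ indexes all finite tours from $\xo$, both $\EE_{\bPhi}$ and $\EE_{\hPhi}$ are (countable) sums over $\cX$, and since $\hPhi$ differs from $\bPhi$ only in the exit distribution at $\xo$ --- which every tour in $\cX$ uses exactly once, at its very first step --- the likelihood ratio of one excursion law with respect to the other collapses to the single-step ratio $p_{\hPhi}(Y_1,Y_2)/p_{\bPhi}(Y_1,Y_2)$.

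Concretely, first I would note, straight from \Cref{def:rwt}, that every $\bY=(Y_1,\dots,Y_\xi)\in\cX$ has $Y_1=\xo$ and $Y_i\neq\xo$ for $2\le i\le\xi$. Setting $Y_{\xi+1}\triangleq\xo$, the path probability factorizes as $p_{\bPhi}(\bY)=\prod_{i=1}^{\xi}p_{\bPhi}(Y_i,Y_{i+1})$, and likewise for $\hPhi$. Because $p_{\hPhi}(u,\cdot)=p_{\bPhi}(u,\cdot)$ for all $u\neq\xo$ by \Cref{def.distorted}, and $\xo$ is visited by $\bY$ only at time $1$, every factor with $i\ge 2$ is common, so on $\cX$
\begin{equation*}
\frac{p_{\hPhi}(\bY)}{p_{\bPhi}(\bY)}=\frac{p_{\hPhi}(Y_1,Y_2)}{p_{\bPhi}(Y_1,Y_2)}\,,
\end{equation*}
which is well defined since $p_{\bPhi}(\bY)>0$ for $\bY\in\cX$. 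The support assumption $\supp\bigl(p_{\hPhi}(\xo,\cdot)\bigr)\subseteq\supp\bigl(p_{\bPhi}(\xo,\cdot)\bigr)$ guarantees $\{\bY:p_{\hPhi}(\bY)>0\}\subseteq\cX$, so $\EE_{\hPhi}$ is indeed a sum over $\cX$; I would also remark that $\hPhi$ still hits $\xo$ almost surely (one distorted first step, followed by the $\bPhi$-hitting time of $\xo$, which is a.s.\ finite and integrable by positive recurrence on a finite state space), hence $\sum_{\bY\in\cX}p_{\hPhi}(\bY)=1$ and $\EE_{\hPhi}[F(\bY)]$ is a bona fide expectation.

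With these pieces, the conclusion is a term-by-term identity: since $F$ is bounded and $\sum_{\bY\in\cX}p_{\hPhi}(\bY)=1$, the series $\sum_{\bY\in\cX}p_{\hPhi}(\bY)F(\bY)$ converges absolutely, and using $p_{\bPhi}(\bY)\cdot\frac{p_{\hPhi}(\bY)}{p_{\bPhi}(\bY)}=p_{\hPhi}(\bY)$ together with the collapse above,
\begin{equation*}
\EE_{\bPhi}\!\left[\frac{p_{\hPhi}(Y_1,Y_2)}{p_{\bPhi}(Y_1,Y_2)}F(\bY)\right]
=\sum_{\bY\in\cX}p_{\bPhi}(\bY)\,\frac{p_{\hPhi}(\bY)}{p_{\bPhi}(\bY)}\,F(\bY)
=\sum_{\bY\in\cX}p_{\hPhi}(\bY)\,F(\bY)
=\EE_{\hPhi}\!\left[F(\bY)\right]\,.
\end{equation*}
The only delicate bookkeeping --- the closest thing to an obstacle --- is making the two expectations live over the same index set $\cX$ and checking integrability, for which the support condition and positive recurrence on a finite chain (ensuring $p_{\bPhi}(\xo,v)$ is bounded away from $0$ on its support, so the weight is bounded) do all the work; the cancellation of the post-$\xo$ factors is the conceptual core and is immediate once the tour structure is written out.
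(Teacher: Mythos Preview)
Your proposal is correct and follows essentially the same route as the paper: factorize the tour probability, observe that all transitions after the first coincide under $\bPhi$ and $\hPhi$ so the likelihood ratio collapses to $p_{\hPhi}(Y_1,Y_2)/p_{\bPhi}(Y_1,Y_2)$, then invoke the support condition and conclude by change of measure. The paper is terser---it simply cites the importance-sampling identity from \citet{robert2013monte} once the ratio is computed---whereas you spell out the countable-sum bookkeeping, the a.s.\ finiteness of $\hPhi$-tours, and absolute convergence; these are exactly the details the citation is meant to absorb.
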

\begin{proof}
    All tours in $\cX$ are of finite length because of the positive recurrence of $\bPhi$.
    The ratio of the probability of sampling the tour $\bY = (Y_1, \ldots, Y_{\xi'})$ from the chain $\hPhi$ to $\bPhi$ is given by 
    \begin{equation}
        \label{eq.ratio.tour.wt}
		\frac{p_{\hPhi}(\bY)}{p_{\bPhi}(\bY)} = \frac{\prod_{j=1}^{\xi'} p_{\hPhi}(Y_{j}, Y_{j+1})}{\prod_{j=1}^{\xi'} p_{\bPhi}(Y_{j}, Y_{j+1})}
		= \frac{p_{\hPhi}(Y_1, Y_2)}{p_{\bPhi}(Y_1, Y_2)}  \,,
	\end{equation}
    because $p_{\bPhi}(Y_{j}, \cdot) = p_{\hPhi}(Y_{j}, \cdot)$, $\forall 1<j\leq\xi'$ because $Y_j \neq \xo$ by the definitions of $\cX$ and $\hPhi$.
    Because $\supp(p_{\hPhi}(\xo,\cdot)) \subseteq \supp(p_{\bPhi}(\xo,\cdot))$, $\supp(p_{\hPhi}(\bY)) \subseteq \supp(p_{\bPhi}(\bY))$. 
	The theorem statement therefore directly draws from the definition of importance sampling~\citep[Def~3.9]{robert2013monte} with the importance weights derived in \Cref{eq.ratio.tour.wt}.
\end{proof}

\begin{lemma}
    \label{lem.general.bias.variance}
    Given a simple random walk $\bPhi$ on the connected non-bipartite graph $\cG$ from \Cref{def.simplerw}, let $\hPhi$ be the chain distorted at some $\xo \in \cS$ from with distortion $\nu$ \Cref{def.distorted}.
    Let $\lambda = \nicefrac{\hdg(\xo)}{\dg(\xo)}$.
    Let $f\colon \cE \to \RR$ bounded by $B$, and $F(\bX) = \sum_{j=1}^{|\bX|}f(X_{j}, X_{j+1})$, where $\bX$ is an \RWT as defined in \Cref{proof.lem.rwte}.
    The bias of an \RWTE (\Cref{eq.rwte}) computed using tours sampled over $\hPhi$ and using $\hdg(\xo)$ as the degree is given by 
    \begin{equation*}
        \bias = 
        \left|\EE_{\hPhi}\left[\frac{\hdg(\xo)}{2}F(\bX)\right] - \mu(\cE)\right|
        \leq 
        \left( \lambda\nu + |1-\lambda| \right)
        \frac{\sqrt{3} B|\cE|}{\sqrt{\delta}}\,,
    \end{equation*}
    where $\delta$ is the spectral gap of $\bPhi$, and $B$ is the upper bound of $f$.
\end{lemma}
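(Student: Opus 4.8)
The plan is to decompose the bias into two pieces: one coming from using the wrong degree multiplier $\hdg(\xo)$ instead of $\dg(\xo)$, and one coming from sampling tours from the distorted chain $\hPhi$ rather than the true chain $\bPhi$. Write $\EE_{\hPhi}[\tfrac{\hdg(\xo)}{2}F(\bX)] - \mu(\cE) = \tfrac{\hdg(\xo)}{2}\EE_{\hPhi}[F(\bX)] - \tfrac{\dg(\xo)}{2}\EE_{\bPhi}[F(\bX)]$, since by \Cref{lem.rwte} the ordinary \RWTE over $\bPhi$ is exactly unbiased, i.e. $\tfrac{\dg(\xo)}{2}\EE_{\bPhi}[F(\bX)] = \mu(\cE)$. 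First I would use \Cref{lem.general.bias.variance}'s hypothesis and \Cref{lemma.p.ratio} (the importance-sampling identity for tours) to rewrite $\EE_{\hPhi}[F(\bX)]$ as $\EE_{\bPhi}\big[\tfrac{p_{\hPhi}(X_1,X_2)}{p_{\bPhi}(X_1,X_2)}F(\bX)\big]$. Then the difference becomes
\begin{equation*}
\EE_{\hPhi}\Big[\tfrac{\hdg(\xo)}{2}F(\bX)\Big] - \mu(\cE)
= \tfrac{\hdg(\xo)}{2}\EE_{\bPhi}\Big[\Big(\tfrac{p_{\hPhi}(X_1,X_2)}{p_{\bPhi}(X_1,X_2)}-1\Big)F(\bX)\Big]
+ \tfrac{\hdg(\xo)-\dg(\xo)}{2}\EE_{\bPhi}[F(\bX)]\,.
\end{equation*}

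Next I would bound each term by the triangle inequality. For the second term, $\tfrac{|\hdg(\xo)-\dg(\xo)|}{2}|\EE_{\bPhi}[F(\bX)]| = |1-\lambda|\cdot\tfrac{\dg(\xo)}{2}|\EE_{\bPhi}[F(\bX)]|$, and since $|F(\bX)|\le B|\bX|$ with $\EE_{\bPhi}[|\bX|] = 2|\cE|/\dg(\xo)$ (Kac), this is at most $|1-\lambda|\cdot B|\cE|$. Actually I'd want the tighter-looking bound that matches the statement, so instead of $\EE[|\bX|]$ I would use the second-moment/variance control: by Cauchy–Schwarz and \Cref{lemma.tour.length.sq}, $\EE_{\bPhi}[|\bX|]\le\sqrt{\EE_{\bPhi}[|\bX|^2]}\le\sqrt{3}/(\pi_{\bPhi}(\xo)\sqrt{\delta}) = \sqrt{3}\,(2|\cE|/\dg(\xo))/\sqrt{\delta}$, giving the second term a bound of $|1-\lambda|\cdot\sqrt{3}B|\cE|/\sqrt{\delta}$. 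For the first term, apply Cauchy–Schwarz with respect to the tour measure of $\bPhi$: it is at most $\tfrac{\hdg(\xo)}{2}\sqrt{\EE_{\bPhi}\big[(\tfrac{p_{\hPhi}(X_1,X_2)}{p_{\bPhi}(X_1,X_2)}-1)^2\big]}\cdot\sqrt{\EE_{\bPhi}[F(\bX)^2]}$. The first factor under the root is exactly $\nu^2$: since $X_1=\xo$ deterministically and $X_2\sim p_{\bPhi}(\xo,\cdot)$, $\EE_{\bPhi}[(\tfrac{p_{\hPhi}(\xo,X_2)}{p_{\bPhi}(\xo,X_2)}-1)^2] = \sum_{v}p_{\bPhi}(\xo,v)\big(\tfrac{p_{\hPhi}(\xo,v)-p_{\bPhi}(\xo,v)}{p_{\bPhi}(\xo,v)}\big)^2 = \sum_v \tfrac{(p_{\hPhi}(\xo,v)-p_{\bPhi}(\xo,v))^2}{p_{\bPhi}(\xo,v)}$, which is $\nu^2$ by \Cref{def.l2.dist} (note $p_{\bPhi}(\xo,v) = 1/\dg(\xo)$ on the neighborhood, but the general form is what matters). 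The second factor satisfies $\EE_{\bPhi}[F(\bX)^2]\le B^2\EE_{\bPhi}[|\bX|^2]\le 3B^2/(\pi_{\bPhi}(\xo)^2\delta) = 3B^2(2|\cE|/\dg(\xo))^2/\delta$ by \Cref{lemma.tour.length.sq}. Plugging in, the first term is bounded by $\tfrac{\hdg(\xo)}{2}\cdot\nu\cdot\tfrac{\sqrt{3}B\cdot 2|\cE|/\dg(\xo)}{\sqrt{\delta}} = \lambda\nu\cdot\sqrt{3}B|\cE|/\sqrt{\delta}$. Summing the two bounds gives exactly $(\lambda\nu+|1-\lambda|)\cdot\sqrt{3}B|\cE|/\sqrt{\delta}$.

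The main obstacle I anticipate is the bookkeeping around the importance-weight step: one must verify that \Cref{lemma.p.ratio} applies here, which requires the support condition $\supp(p_{\hPhi}(\xo,\cdot))\subseteq\supp(p_{\bPhi}(\xo,\cdot))$ — this should be inherited from the way $\hpvo{r}$ is constructed in \Cref{def.recursive.step} (it samples from multisets of states actually visited, hence genuine neighbors), but it needs to be stated. A secondary subtlety is that $F(\bX)$ as defined in \Cref{lem.general.bias.variance} sums over $j=1,\dots,|\bX|$ including the terminal edge, so one should confirm this matches the bounded-reward setup of \Cref{lemma.tour.length.sq} (it does, since $|F(\bX)|\le B|\bX|$ regardless). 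Everything else is Cauchy–Schwarz plus the already-established second-moment bound on return times, so the computation is routine once the decomposition above is in place.
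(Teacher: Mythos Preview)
Your proof is correct and uses the same key ingredients as the paper (\Cref{lemma.p.ratio} for the importance-sampling identity and \Cref{lemma.tour.length.sq} for the second-moment bound on return times), but the organization is slightly different. The paper keeps the error as a single factor, writing $\bias = \big|\EE_{\bPhi}\big[(\lambda\,\tfrac{p_{\hPhi}(\xo,X_2)}{p_{\bPhi}(\xo,X_2)}-1)\cdot\tfrac{\dg(\xo)}{2}F(\bX)\big]\big|$, applies Cauchy--Schwarz once, and then expands $\EE\big[(\lambda W-1)^2\big]=\lambda^2\nu^2+(1-\lambda)^2$ algebraically before relaxing $\sqrt{\lambda^2\nu^2+(1-\lambda)^2}\le \lambda\nu+|1-\lambda|$. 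You instead use the algebraic split $\lambda W-1=\lambda(W-1)+(\lambda-1)$ to get two additive terms, bound the first by Cauchy--Schwarz and the second by $\EE[|\bX|]\le\sqrt{\EE[|\bX|^2]}$, arriving at the same final bound directly. The paper's route has the minor advantage of exposing the sharper intermediate constant $\sqrt{\lambda^2\nu^2+(1-\lambda)^2}$; your route is arguably more transparent about where each source of error enters. Either is fine.
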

\begin{proof}
    From \Cref{lemma.p.ratio} and \Cref{lem.rwte} we have, respectively,
    \begin{align*}
            \EE_{\hPhi}\left[\frac{\hdg(\xo)}{2}F(\bX)\right] 
        &= \EE_{\bPhi}\left[ \frac{\hdg(\xo)}{2} \frac{p_{\hPhi}(X_1,X_2)}{p_{\bPhi}(X_1,X_2)}F(\bX) \right] \,,\\
        \mu(\cE) &= 
    \EE_{\bPhi}\left[ \frac{\dg(\xo)}{2} F(\bX) \right] \,.
    \end{align*}	
    Subtracting the two, squaring both sides and using the Cauchy-Schwarz inequality decomposes the squared bias into
    \begin{align*}
        \bias
    &= \left|\EE_{\bPhi}\left[ 
    \left( 
        \frac{\hdg(\xo)}{\dg(\xo)} \frac{p_{\hPhi}(X_1,X_2)}{p_{\bPhi}(X_1,X_2)} -1
     \right)    
    \frac{\dg(\xo)}{2} F(\bX)
    \right]\right| \,.\\
            \bias^{2}&\leq 
    \underbrace{\EE \left[  \left(\frac{\hdg(\xo)}{\dg(\xo)}\frac{p_{\hPhi}(\xo,X_2)}{p_{\hPhi}(\xo,X_2)}-1\right)^{2} \right] }_{\bias_{\text{dist}}}
    \underbrace{\EE \left[  \left(\frac{\dg(\xo)}{2} F(\bX)\right)^{2} \right]}_{\bias_{\text{spectral}}}
    \,,
    \end{align*}
    where the expectation is under $\bPhi$. 
    Using definitions from the theorem statement,
    \begin{equation*}
        \begin{split}
            \bias_{\text{dist}}
            =&
            \frac{\hdg(\xo)^{2}}{\dg(\xo)^{2}}
            \EE \left[\left( \frac{p_{\hPhi}(\xo,X_2)}{p_{\bPhi}(\xo,X_2)} \right)^{2}\right]
            +
            1-2
            \frac{\tdg(\xo)}{\dg(\xo)}
            \EE\left[\frac{p_{\hPhi}(\xo,X_2)}{p_{\bPhi}(\xo,X_2)}\right] \\
            =&
            \lambda^{2}(1+\nu^{2}) +1-2\lambda
            =
            \lambda^{2}+\lambda^{2}\nu^{2} +1-2\lambda\\
            =& \lambda^{2}\nu^{2} + (1-\lambda)^{2}
            \leq
            \left( \lambda\nu + |1-\lambda| \right)^{2}
            \,.
        \end{split}
    \end{equation*}
    Because $F(\bX) \leq B\xi$, the tour length, from \Cref{lemma.tour.length.sq}, we see that 
    \begin{equation*}
        \bias_{\text{spectral}}
        \leq   \frac{\dg(\xo)^{2} B^{2}}{4} \frac{3  }{\pi_{\bPhi}(\xo)^2 \delta} 
        = \frac{3 B^{2} |\cE|^{2}}{\delta}\,,
    \end{equation*}
    and combining both biases completes the proof for $\bias$.
\end{proof}

\begin{proof}[Main Theorem]  
    Note that by linearity of expectations
    \begin{equation*}
        \begin{split}
            \EEX{\hmu\left(\dT_{2:r}; f \right) \big| \dT_{2:r-1} }
        =& \EEX{
            \frac{\hdgvo{r}}{2 |\dT_{r}|}\sum_{\bX \in \dT_{r}}\sum_{j=2}^{|\bX|-1} f(X_{j},X_{j+1})    
} \,,\\
        =& \EE_{\bX \sim \hPhi_r}\left[\frac{\hdgvo{r}}{2} \sum_{j=1}^{|\bX|} f'(X_{j},X_{j+1}) \right] \,,\\
        \end{split}
    \end{equation*}
    where $\bX$ is an \RWT on $\hPhi_r$ that depends on $\dT_{2:r-1}$
    and $f'(u,v) \triangleq \ind{u,v \neq \vo_{r}} f(u,v)$.
    Applying \Cref{lem.general.bias.variance} completes the proof because $\hPhi_{r}$ is a distorted chain by \Cref{def.distorted}.
\end{proof}

\section{Proofs for \Cref{sec.subgraph.counting}}
\label{proofs.sec.subgraph.counting}

\subsection{Proof of \Cref{prop.subgraph.epart}}
\label{proof.prop.subgraph.epart}
\begin{proof}
	From \citet[Thm-3.1]{Wang:2014}, we know that each disconnected component of $G$ leads to a disconnected component in $\cG[k-1]$, and if $\cI_{1}$ contains a subgraph in each connected component, \ref{component.cond} is satisfied.
	We now prove that $\forall \, s\in\cV[k-1]$, if $\rho(s) = r > 1$, $\exists \, s' \in \N(s) \colon \rho(s')<r$ which simultaneously satisfies \ref{interconnect.cond} and \ref{preconnect.cond}.
	
	W.l.o.g. let the vertex with the smallest distance from the seed vertices be denoted by $\hat{u} = \argmin_{u \in V(s)} \dist(u)$.
	When $\dist(\hat{u}) > 0$, there exists $v \in \N_{G}(\hat{u})$ such that $\dist(v) < \dist(\hat{u})$ by the definition of $\dist$. 
	More concretely, $v$ would be the penultimate vertex in the shortest path from the seed vertices to $\hat{u}$.
	Let $v' \neq \hat{u}$ be a nonarticulating vertex of $s$, which is possible because any connected graph has at least 2 nonarticulating vertices.
	Let $s_1 = \sg{G}{V(s)\backslash \{v'\} \cup \{v\}} \in \cV[k-1]$.
	Now, $\rho(s_1) < \rho(s)$ because $v'$ has been replaced with a vertex at necessarily a smaller distance and because the indicator in the definition of $rho$ will always be $0$ in this case.
	Moreover, $\comp{s_1}{s} = \sg{G}{V(s) \cup \{v\}}\in \cV[k]$, and hence an edge exists between the two.

	When $\dist(\hat{u}) = 0$, there exists $v \in \N_{G}(\hat{u})$ such that $\dist(v)=0$. 
	There exists a nonarticulating $v' \in V(s)\backslash \Vstar$ because otherwise $\Vstar$ would have been disconnected.
	Observing that $\dist(v') + \ind{v' \in V(\cI_{1}) \backslash \Vstar)} >0$ completes the proof of ergodicity.
\end{proof}

\subsection{Proof of \Cref{prop:r-sampling}}
\label{proof.prop:r-sampling}
\begin{proof}[Sampling Probability]
	Consider the lines \Cref{alg:line:remove,alg:line:anchor,alg:line:add}.
	The probability of sampling the pair $(u,v)$ from $ V(s) \times \N_{G}\left(V(s)\right)$ is given by
	\begin{equation*}
		\begin{split}
			\prob(u,v) &= \sum_{a \in V(s)\backslash \{u\}}\prob(v|a,u)\prob(a|u)\prob(u)  \\
			&= \sum_{a \in V(s)\backslash \{u\}} \frac{\ind{v \in \N(a)}}{\dg(a)} \frac{\dg(a)}{\deg_s - \dg(u)} \frac{\deg_s - \dg(u)}{(\kdesc-1)\deg_s} \\
			&\propto \sum_{a \in V(s)\backslash \{u\}} \ind{v \in \N(a)} = |N(v) \cap V(s) \backslash \{u\}| = \bias \,,
		\end{split}
	\end{equation*}
	where $\bias$ is defined in \Cref{alg:line:bias2} and corrected for in \Cref{alg:line:rej1}.
	After the rejection, therefore, $(u,v) \sim \unif( V(s) \times \N_{G}\left(V(s)\right))$.
	
    \Cref{alg:line:rej2} constitutes an importance sampling with unit weight for pairs $(u,v)$, where removing $u$ from and adding $v$ to $V(s)$ produces a \CIS[\kdesc] and zero otherwise.
    In \Cref{alg:line:rej2}, because removing a nonarticulating vertex and adding another vertex to $s$ cannot lead to a disconnected subgraph, we can avoid a DFS when $u \notin \cA_{s}$.
	This completes the proof.
\end{proof}
\begin{proof}[Time Complexity]
	Assuming access to a precomputed vector of degrees, the part up to \Cref{alg:line.ap} is $\Order(\kdesc^{2})$.
	In each proposal, \Cref{alg:line:remove,alg:line:anchor} are $\Order(\kdesc)$, and \Cref{alg:line:add} is $\Order(\Delta_s)$.
	\Cref{alg:line:bias2} is $\Order(\kdesc)$, and the expected complexity of \Cref{alg:line:rej2} is $\Order(\kdesc^{2} \, \nicefrac{|\cA_{s}|}{\kdesc})$ because in expectation only $\nicefrac{|\cA_{s}|}{\kdesc}$ graph traversals will be required.
	The acceptance probability is $\geq \nicefrac{1}{\kdesc}$ is \Cref{alg:line:rej1} and $\geq \frac{\kdesc-|\cA_{s}|}{\kdesc}$.
	The expected number of proposals is therefore $\leq \frac{\kdesc^{2}}{\kdesc-|\cA_{s}|}$.
	As such, the expected time complexity is $\Order(\kdesc^2 (1 +\frac{\Delta_s + \kdesc |\cA_{s}|}{\kdesc-|\cA_{s}|} ))$.
\end{proof}

\section{Additional Implementation Details}
\label{sec.name.summary.alg}
\subsection{Parallel Sampling with a Reservoir Matrix.}
\label{sec.res.sampl}
Given a reasonably large $\rvrSize$ and the number of strata $R$, we initialize an upper triangular matrix of empty reservoirs $[\hbU{r}{t}]_{2\leq r < t \leq R}$ and a matrix of atomic counters $[\hrvrSize{q}{r}]_{2\leq r < t\leq R}$ initialized to $0$.
In each stratum $r$, while being sampled in parallel whenever a tour enters the $t$-th stratum, $\hrvrSize{r}{t}$ is incremented, and with a probability $\min(1, \nicefrac{\rvrSize}{\hrvrSize{r}{t}})$, the state is inserted into a random position in the reservoir $\hbU{r}{t}$ and rejected otherwise.
The only contention between threads in this scheme is at the atomic counter and in the rare case where two threads choose the same location to overwrite, wherein ties are broken based on the value of the atomic counter at the insertion time, guaranteeing thread safety. 
The space complexity of a reservoir matrix is therefore $O(R^2 \rvrSize)$.

A toy example of this matrix is presented in \Cref{fig:rsv}, where $R=5$, and the {\RWT}s are being sampled on the graph stratum $\cG_{2}$.
Whenever (non-gray) states in $\cI_{3:5}$ are visited, they are inserted into the corresponding reservoirs–$\hbU{2}{5}$ is depicted in detail.

\begin{figure}\centering
	\begin{minipage}[t]{0.45\linewidth}
		\centering
		\includegraphics[width=\linewidth]{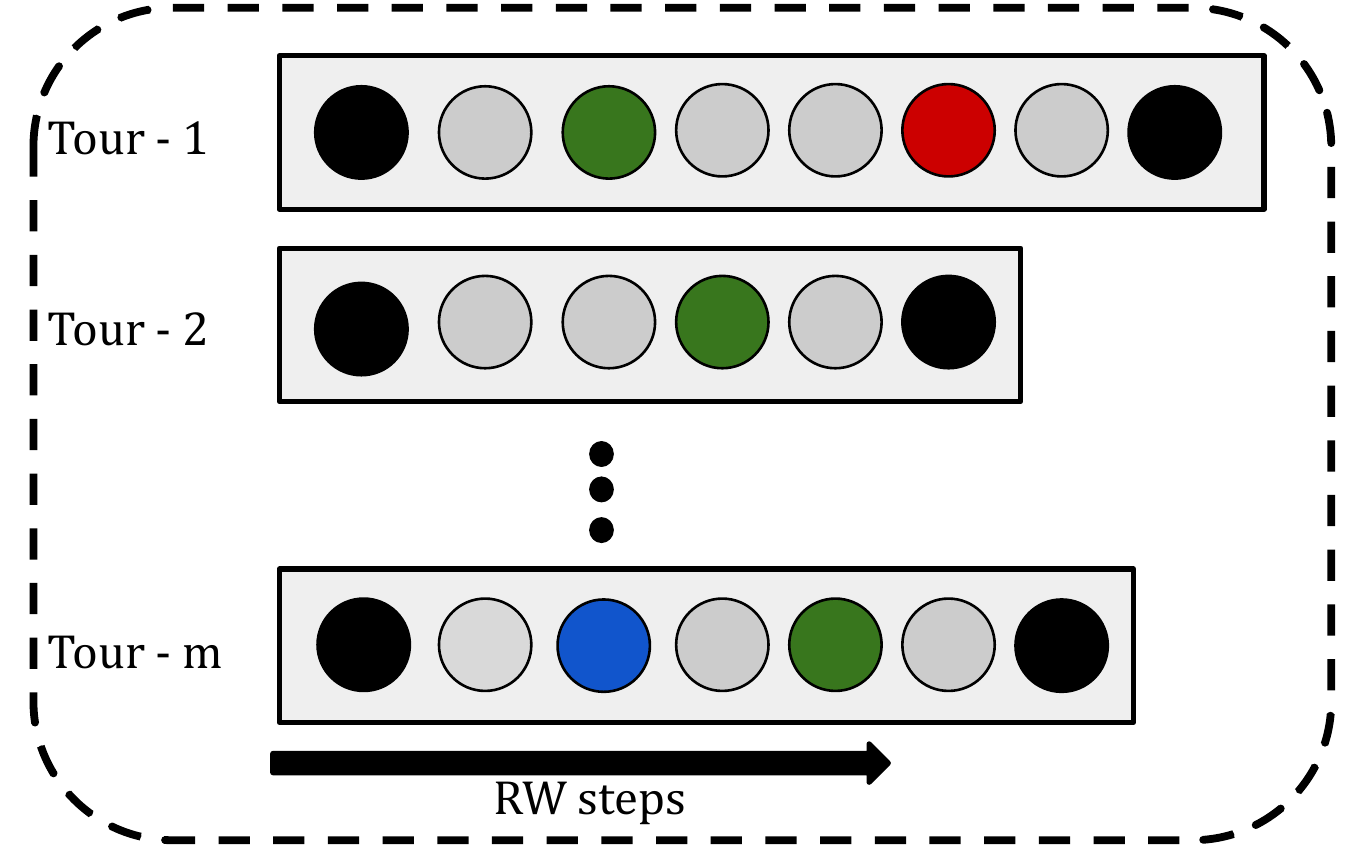}	
		\subcaption{$\cT_{2}$, the set of {\RWT}s sampled in $\cG_{2}$.} 
		\label{fig:rsv.tours}
	\end{minipage}
	\begin{minipage}[t]{0.45\linewidth}
		\centering
		\includegraphics[width=\linewidth]{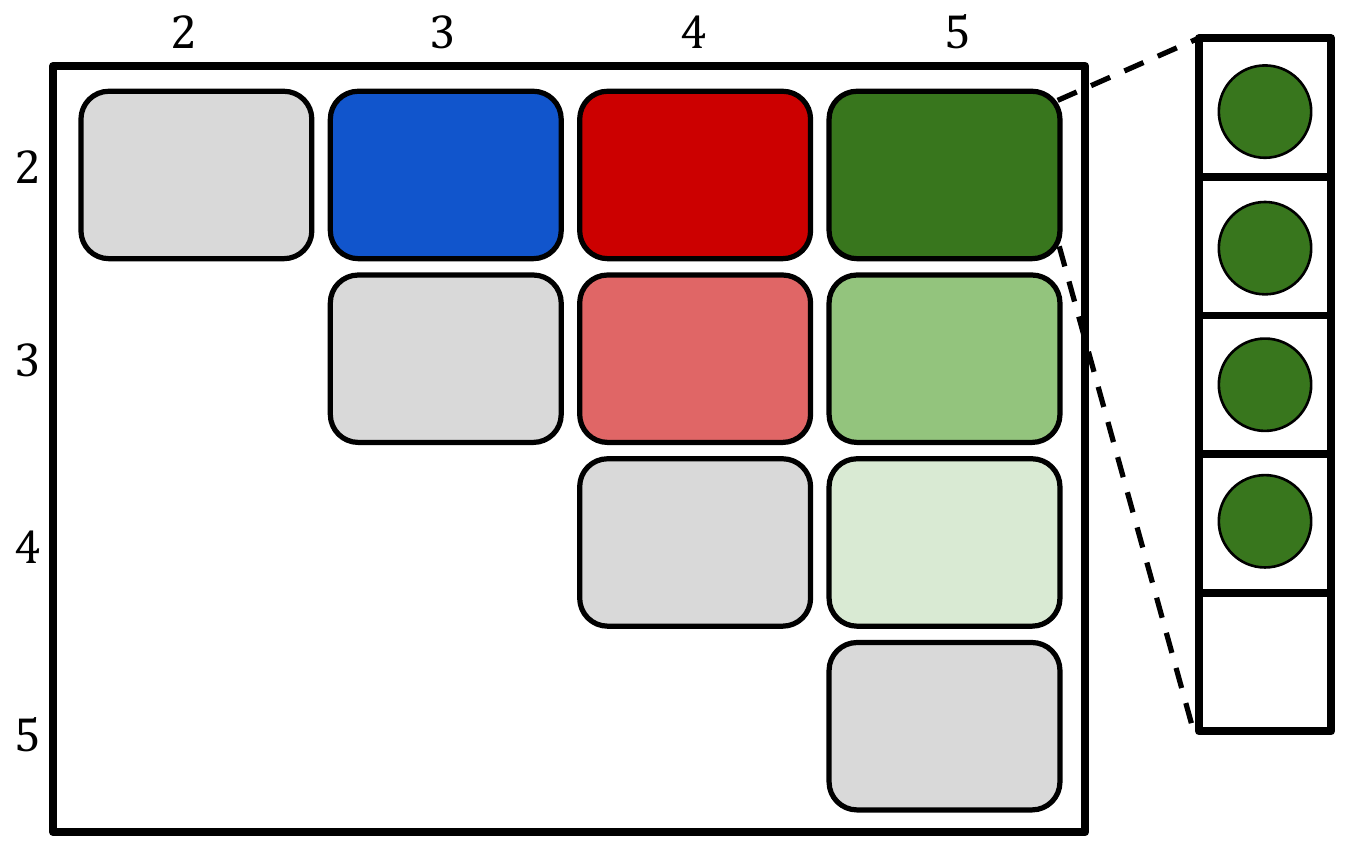}	
		\subcaption{The reservoir matrix, $\hbU{r}{t}$ for $2 \leq r < t \leq R$.}
		\label{fig:rsv.matrix}
	\end{minipage}
	\caption{
		Parallel {\RWT}s and Reservoirs: 
		\Cref{fig:rsv.tours} shows the set of $m$ {\RWT}s sampled on $\cG_{2}$ in parallel, where the supernode $\vo_{2}$ is colored black.
		The gray, blue, red and green colors represent states in stratum $2$--$5$, respectively.
		\Cref{fig:rsv.matrix} shows the upper triangular reservoir matrix in which the cell in the $r$-th row and $t$-th column contains samples from $\hbU{r}{t}$.
	}
	\label{fig:rsv}
\end{figure}

\subsection{\PSRW Neighborhood}
\label{sec.nei.alt.def}
The neighborhood of a \CIS[k] $s$ in $\cG[k]$ is the set of all vertices $u,v \in V$ such that replacing $u$ with $v$ in $s$ yields a \CIS[k].
Formally,
\begin{equation}
	\label{eq.nei.alt.def}
	\N_{\cG[k]}(s) \equiv \left\{ (u,v) \in V(s) \times \N_{G}\left(V(s)\right)  \colon \sg{G}{V(s)\cup \{v\} \backslash \{u\}} \in \cV[k]\right\} \,,
\end{equation}
where $\N_{G}(V(s)) = \cup_{x \in V(s)}\N_{G}(x)$ is the union of the neighborhood of each vertex in $s$.
The size of the neighborhood is then $\Order(k\,\N_{G}(V(s)))\in\Order(k^2\Delta_{G})$ because $\N_{G}(V(s)) \in \Order(k \Delta_{G})$, where $\Delta_{G}$ is the maximum degree in $G$.
Each potential neighbor further requires a connectivity check in the form of a BFS or DFS, which implies that the naive neighborhood sampling algorithm requires $\Order(k^{4} \Delta_{G})$ time.

\subsubsection{Articulation Points}
\label{sec.art.points}
Apart from the rejection sampling algorithm from \Cref{alg:r-sampling}, we use articulation points to efficiently compute the subgraph bias $\gamma$ from \Cref{eq.mu.edge.sum}. 
Specifically, given the \CIS[\kdesc], $s$, $\gamma(s) = \binom{\kappa-\cA_{s}}{2}$, $\cA_{s}$ is the set of articulation points of $s$.
This draws directly from \cite[Sec-3.3]{Wang:2014} and the definition of articulation points.
\citet{hopcroft1973algorithm} showed that for any simple graph $s$ the set of articulation points can be computed in $O(|V(s)| + |E(s)|)$ time.

\begin{algorithm}[ht!] \caption{\name for Subgraph Counting} \label{alg:summary}
	\KwIn{Input graph $G$, Order $k$, Set of subgraph patterns $\mathcal{H}$ of interest}
\KwIn{Initial vertex stratum $\cI_{1}$, Reservoir Size $\rvrSize$ and Error Bound $\epsilon$}
	\KwOut{$\hmu$, an asymptotically unbiased estimate of $\cC[k]$}
	\SetKw{KwOr}{or}
	\SetKwFor{ParallelWhile}{parallel while}{do}{}

	\BlankLine
	\tcc{Initialization}
	$\hmu = 0$, $\hbeta{q}{t} = 0$, $\hbU{q}{t} = \emptyset$, $\forall 1\leq q<t\leq R$\;
    Run BFS for stratification $\rho \colon \cV[k-1] \to \{1,\ldots,R\}$, with $\cI_{1}$ (\Cref{prop.subgraph.epart})\\ 
    
    \BlankLine
	\tcc{Exact computation in the first stratum}
	\ForEach{$u \in \cI_{1}, v \in \N_{\cG[k-1]}(u)$} { \label{alg:first_strata1}
		Update $\hbeta{1}{\rho(v)} \increment 1$ , $\hbU{1}{\rho(v)} \append v$\\
		Update $\hmu \increment \left(\frac{\ig{\comp{u}{v}}}{\gamma(\comp{u}{v})}\right)_{H \in \cH}$ \tcp*{\Cref{eq.mu.edge.sum}} \label{alg:first_strata2}
	}
    
    \BlankLine
    \tcc{Estimate remaining strata}
    \For{$r \in 2,\dots,R$}{ \label{alg:size.strata}
		Initialize $\hmu_{r} = 0$, $m_{r} = 0$ \\
		
		\ParallelWhile{\Cref{eq.auto.tours} is not satisfied}{\label{alg:numtours}
Sample $q$ from $\{1, \ldots, r-1\}$ w.p.\ $\hbeta{q}{r}$ \label{alg:start.tour.1} \\
			Sample $u$ from $\hbU{q}{r}$ \tcp*{\Cref{eq.prwt.stitch.tour}} \label{alg:start.tour.2}
			Sample $v \sim \unif(\N_{\cG[k-1]}(u))$ \tcp*{\Cref{alg:r-sampling}} \label{alg:start.tour.3}
			\While{$\rho(v) \geq r$}{ \label{alg:tourlength}
				Update $\hmu_{r} \increment \left(\frac{\ig{\comp{u}{v}}}{\gamma(\comp{u}{v})}\right)_{H \in \cH}$ \tcp*{\Cref{eq:prwt.partial}}\label{alg.mu.update}

				\If{$\rho(v) > r$}{
					Update $\hbeta{r}{\rho(v)} \increment 1$ \tcp*{\Cref{eq:prwt.deg}}
					Update $\hbU{r}{\rho(v)} \append v$ \tcp*{\Cref{eq.prwt.stitch.tour}}
				}

				$u \coloneqq v$ \\
				\tcc{\Cref{prop.rwt.phi.r} and \Cref{alg:r-sampling}}
				\If{$\rho(u) = r$}{
					Sample $v \sim \unif(\N_{\cG[k-1]}(u))$
				}\Else{
					\While{$\rho(v) \neq r$}{\label{alg:rejection1}
						Sample $v \sim \unif(\N_{\cG[k-1]}(u))$ \label{alg:rejection2} 
					}
				}
			}
			$m_r+=1$\\
		}
		Compute $\hdeg{r} = \sum_{q=1}^{r-1}\hbeta{q}{r}$ \tcp*{\Cref{eq:deg.full}}
		$\hmu \increment \frac{\hdeg{r}}{2\, m_r}\hmu_{r}$ \tcp*{\Cref{eq:prwt.partial,eq:prwt.full}}
		Update $\hbeta{r}{t} \prodincr \frac{\hdeg{r} }{m_r}$, $\forall t>r$  \tcp*{\Cref{eq:prwt.deg}}
	}
	\Return $\hmu$	
\end{algorithm}

\subsection{Proof of \Cref{prop.ripple.complexity.short}}
\label{proof.prop.ripple.complexity.short}
\begin{proposition}[Extended Version of \Cref{prop.ripple.complexity.short}]
	\label{prop.ripple.complexity} 
	We assume a constant number of tours $m$ in each stratum and ignore graph loading.
	The \name estimator of \CIS[k] counts described in \Cref{alg:summary} has space complexity in 
	\begin{equation*}
	    \Order( k^{3} \diam_{G}^{2} \rvrSize + |\cH|) \equiv \widehat{\Order}(k^3 + |\cH|) \,,
	\end{equation*}
	where $\widehat{\Order}$ ignores all factors other than $k$ and $|\cH|$, $\rvrSize$ is the size of the reservoir from \Cref{sec.implementation}, $\diam_{G}$ is the diameter of $G$, and $|\cH|$ is the number of patterns of interest.
	
	The total number of random walk steps is given by $\Order(k^3  m \diam_{G}  \Delta_{G} \rejec)$, where $\rejec$ is the number of rejections in \Cref{alg:rejection1} of \Cref{alg:summary}, $\Delta_{G}$ is the largest degree in $G$, and the total time complexity is $\widehat{\Order}(k^7 + |\cH|)$.
\end{proposition}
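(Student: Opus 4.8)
The plan is to prove this by a line-by-line accounting of \Cref{alg:summary}, treating its three phases separately — the BFS that builds the stratifying map $\rho$, the exact first-stratum pass, and the per-stratum tour-sampling loop — and only at the end collapsing every quantity that is not a power of $k$ or $|\cH|$ into $\widehat{\Order}(\cdot)$. The first step, which both bounds feed on, is to control the number of strata $R$: by \Cref{prop.subgraph.epart}, $\rho(s) = 1 + \sum_{u\in V(s)}\bigl(\dist(u) + \ind{u\in V(\cI_{1})\backslash\Vstar}\bigr)$ is a sum of $k-1$ terms, each at most $\diam_{G}+1$, so $R \le (k-1)(\diam_{G}+1)+1 \in \Order(k\,\diam_{G})$ and $R^{2}\in\Order(k^{2}\diam_{G}^{2})$. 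This is the only place $\diam_{G}$ enters.

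For the space bound I would single out the three structures that dominate. The reservoir matrix $[\hbU{r}{t}]_{2\le r<t\le R}$ is upper-triangular, hence has $\Order(R^{2})$ cells, each holding at most $\rvrSize$ states, and each state is a \CIS[k-1] storable in $\Order(k)$ machine words; this gives $\Order(R^{2}k\,\rvrSize)=\Order(k^{3}\diam_{G}^{2}\,\rvrSize)$, refining the $\Order(R^{2}\rvrSize)$ ``items'' count of \Cref{sec.res.sampl}. The $\hbeta{\cdot}{\cdot}$ matrix and the atomic-counter matrix are $\Order(R^{2})$ scalars and are dominated. The running estimates $\hmu$ and $\hmu_{r}$ are $|\cH|$-vectors, contributing $\Order(|\cH|)$. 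The precomputed $\dist$ array and degree vector, and $G$ itself, are $\Order(|V|+|E|)$ and are excluded by hypothesis, while the scratch needed for an articulation-point computation or a connectivity check on a single state is $\Order(k^{2})$ and is dominated. Summing yields $\Order(k^{3}\diam_{G}^{2}\,\rvrSize + |\cH|) = \widehat{\Order}(k^{3}+|\cH|)$.

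For the time bound I would first cost one accepted random-walk step and then count the steps. A step samples a neighbor in $\cG[k-1]$ via \Cref{alg:r-sampling}, costing $\Order(k^{2}\Delta_{s}+k^{4})\subseteq\Order(k^{2}\Delta_{G}+k^{4})$ in expectation by \Cref{prop:r-sampling}; the reward update on \Cref{alg.mu.update} touches only the single coordinate of $\hmu_{r}$ indexed by the isomorphism class of $\comp{u}{v}$ — the vector $(\ig{\comp{u}{v}})_{H\in\cH}$ being one-hot — after a canonicalization-and-connectivity test of cost $\poly(k)$, and the $\hbeta,\hbU$ updates cost $\Order(k)$; finally, when the walk sits above stratum $r$, the step is prefixed by the rejection loop of \Cref{alg:rejection1}, whose expected length is the factor $\rejec$. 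For the step count, Kac's formula (as used in the proof of \Cref{lem.rwte}) gives an expected tour length of $2|\cE_{r}|/\dg(\vo_{r})$ in $\cG_{r}$; bounding $|\cE_{r}|\le|\cJ_{r}|+\dg(\vo_{r})$, $|\cJ_{r}|\le|\cI_{r}|\cdot\Delta_{\cG[k-1]}$ with $\Delta_{\cG[k-1]}\in\Order(k^{2}\Delta_{G})$ (\Cref{sec.nei.alt.def}), and $\dg(\vo_{r})\ge\phi_{r}|\cI_{r}|$, where $\phi_{r}$ is the fraction of $\cI_{r}$ on the frontier of $\vo_{r}$ (the ``fraction-connected'' quantity discussed in \Cref{ripplestep.strat}), yields a per-tour bound $\Order(k^{2}\Delta_{G}/\phi_{r})$; multiplying by $m$ tours and $R\in\Order(k\diam_{G})$ strata and absorbing $\rejec$ and $1/\phi_{r}$, the total number of random-walk steps is $\Order(k^{3}m\,\diam_{G}\,\Delta_{G}\,\rejec)$. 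Multiplying this by the per-step cost, adding $\Order(|\cH|)$ for initializing and returning the count vector (the per-stratum rescalings $\hmu\leftarrow\hmu+\tfrac{\hdeg{r}}{2m_{r}}\hmu_{r}$ and the first-stratum pass over $\cI_{1}$ contributing only terms polynomial in $k$ and linear in $|\cI_{1}|,\Delta_{G},|\cH|$), and discarding every factor other than $k$ and $|\cH|$, I obtain $\widehat{\Order}(k^{3}\cdot k^{4}+|\cH|)=\widehat{\Order}(k^{7}+|\cH|)$.

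\textbf{Main obstacle.} The delicate point is the tour-length estimate inside the step count. For the unstratified \RWTE of \Cref{lem.rwte} the analogous quantity is $\Order(|\cE|)$, which on $\cG[k-1]$ would explode with $|V|$ and $|E|$; the entire purpose of the stratification is that in each $\cG_{r}$ the return time to $\vo_{r}$ is governed by $|\cJ_{r}|/\dg(\vo_{r})$ rather than the total edge count, so I must argue — using the specific $\rho$ of \Cref{prop.subgraph.epart} — that a non-vanishing fraction $\phi_{r}$ of every stratum lies on the supernode frontier, which keeps tours short and makes the running time invariant to $|V|$ and $|E|$. Making $\phi_{r}$ and $\rejec$ explicit in the bound, rather than hiding them inside constants, is precisely what upgrades \Cref{prop.ripple.complexity.short} to the extended statement.
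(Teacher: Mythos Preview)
Your decomposition and accounting match the paper's almost line for line: bound $R\in\Order(k\,\diam_G)$ from the definition of $\rho$, charge space to the $\Order(R^2)$ reservoirs of $\Order(k\,\rvrSize)$ words each plus $\Order(|\cH|)$ for the output vector, and charge time to (number of strata) $\times$ (tours) $\times$ (expected tour length) $\times$ (per-step cost from \Cref{prop:r-sampling}), with $\rejec$ as an explicit multiplicative slack for the inner rejection loop.

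The one place where your argument is not yet closed is exactly the point you flag as the main obstacle, and your resolution of it is slightly off. You write that you must show a ``non-vanishing fraction $\phi_r$'' of each stratum lies on the supernode frontier, and then ``absorb $1/\phi_r$'' into the step count alongside $\rejec$. But $\phi_r$ does not appear in the stated bound, and it is not merely bounded away from zero: for the specific $\rho$ of \Cref{prop.subgraph.epart} one has $\phi_r=1$ for every $r>1$. This is immediate from the proof of \Cref{prop.subgraph.epart}, which constructs, for every $s$ with $\rho(s)=r>1$, an explicit neighbor $s'\in\N_{\cG[k-1]}(s)$ with $\rho(s')<r$; hence every vertex of $\cI_r$ is adjacent to $\vo_r$, giving $\dg(\vo_r)\ge|\cI_r|$ and an expected tour length $\le 2\bar{\dg}_r\le 2\Delta_{\cG[k-1]}\in\Order(k^2\Delta_G)$ outright. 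The paper isolates this as a separate lemma (return time $\le 2\bar{\dg}_r/\alpha_r$ with $\alpha_r$ your $\phi_r$) together with a proposition establishing $\alpha_r=1$. Once you have $\phi_r=1$, there is nothing to absorb: $\rejec$ remains as an explicit factor and $1/\phi_r$ simply vanishes, which is why only $\rejec$ survives in the statement. Your closing sentence, that making $\phi_r$ explicit is what upgrades the short version to the extended one, is therefore not quite right; the upgrade consists in spelling out $\diam_G$, $\Delta_G$, $\rvrSize$, $m$, and $\rejec$, while $\phi_r$ is eliminated by the structural fact above.
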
 
\begin{remark}
    In practice, we adapt the proposals in \Cref{alg:r-sampling} to minimize $\rejec$ using heuristics over the values of $\dist\bdot$ from \Cref{prop.subgraph.epart}.
\end{remark}

\begin{lemma}
    \label{lem.stratum.ret.time} 
    Given a graph stratum $\cG_{r}$ from \Cref{def.gr}, for some $r>1$, define $\alpha_r = \nicefrac{|\{u \in \cI_{r} \colon \N(u) \cap \cI_{1:r-1} \neq \emptyset\}|}{|\cI_{r}|}$ as the fraction of vertices in the $r$-th vertex stratum that share an edge with a previous stratum.
    The return time $\xi_r$ of the chain $\bPhi_{r}$ to the supernode $\vo_r \in \cV_{r}$ follows 
$\EE_{\bPhi_{r}}[\xi_r] \leq \frac{2 \bdg_{r}}{\alpha_{r}}$, where $\bdg_{r}$ is the average degree in $\cG$ of all vertices in $\cI_{r}$.
\end{lemma}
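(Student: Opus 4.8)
The plan is to combine Kac's formula with two elementary counting bounds on the multigraph $\cG_{r}$. Assuming an \EPART (so that $\bPhi_{r}$ is irreducible and hence $\vo_{r}$ is positive recurrent), Kac's formula~\citep[Cor.2.24]{aldous-fill-2014} gives $\EE_{\bPhi_{r}}[\xi_{r}] = 1/\pi_{\bPhi_{r}}(\vo_{r})$, and since $\bPhi_{r}$ is the simple random walk on $\cG_{r}$ its stationary distribution is $\pi_{\bPhi_{r}}(\vo_{r}) = \dgvo{r}/(2|\cE_{r}|)$, counting edge multiplicities. Thus $\EE_{\bPhi_{r}}[\xi_{r}] = 2|\cE_{r}|/\dgvo{r}$, and it suffices to prove the two bounds $|\cE_{r}| \leq |\cI_{r}|\,\bdg_{r}$ and $\dgvo{r} \geq \alpha_{r}\,|\cI_{r}|$; dividing the first by the second then yields the claimed $\EE_{\bPhi_{r}}[\xi_{r}] \leq 2\bdg_{r}/\alpha_{r}$.

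For the first bound I would unfold \Cref{def.gr,def.dependent.strat}: $\cE_{r}$ is the disjoint union of $\cJ_{r}$ — the edges of $\cE$ whose lower-stratum endpoint lies in $\cI_{r}$, i.e.\ the edges inside $\cI_{r}$ together with the edges from $\cI_{r}$ to $\cI_{r+1:R}$ — and the edges produced by contracting $\cI_{1:r-1}$, each of which is an original edge with one endpoint in $\cI_{1:r-1}$ and the other in $\cI_{r}$ (self-loops on $\vo_{r}$ being discarded, and edges inside $\cI_{1:r-1}$ or between $\cI_{1:r-1}$ and $\cI_{r+1:R}$ being removed because they are not incident on $\cI_{r}$). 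Writing $e_{\mathrm{in}}$ for the number of edges inside $\cI_{r}$, one gets $\sum_{v\in\cI_{r}}\dg_{\cG}(v) = 2e_{\mathrm{in}} + |\cE_{r}| - e_{\mathrm{in}}$, so $|\cE_{r}| \leq \sum_{v\in\cI_{r}}\dg_{\cG}(v) = |\cI_{r}|\,\bdg_{r}$. Equivalently: every edge of $\cE_{r}$ has at least one endpoint in $\cI_{r}$, hence is counted at least once in $\sum_{v\in\cI_{r}}\dg_{\cG}(v)$.

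For the second bound, $\dgvo{r}$ equals the number of edges of $\cE$ between $\cI_{1:r-1}$ and $\cI_{r}$; each of the $\alpha_{r}|\cI_{r}|$ vertices $u\in\cI_{r}$ with $\N(u)\cap\cI_{1:r-1}\neq\emptyset$ contributes at least one such edge, and distinct vertices contribute distinct edges, so $\dgvo{r}\geq \alpha_{r}|\cI_{r}|$. Combining, $\EE_{\bPhi_{r}}[\xi_{r}] = 2|\cE_{r}|/\dgvo{r} \leq 2|\cI_{r}|\bdg_{r}/(\alpha_{r}|\cI_{r}|) = 2\bdg_{r}/\alpha_{r}$, as required.

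I do not expect a genuine obstacle here; the only care needed is bookkeeping — verifying that the contraction in \Cref{def.gr} discards self-loops on $\vo_{r}$ (so that no edge of $\cE_{r}$ is double counted against $\sum_{v\in\cI_r}\dg_\cG(v)$) and that the multigraph nature of $\cG_{r}$ does not change the form of $\pi_{\bPhi_{r}}$ entering Kac's formula. If one prefers to drop the explicit \EPART hypothesis, the argument is unchanged provided $\vo_{r}$ is recurrent for $\bPhi_{r}$, and it holds trivially (both sides infinite) when $\alpha_{r}=0$.
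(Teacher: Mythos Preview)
Your proposal is correct and follows essentially the same route as the paper: the paper also uses Kac's formula in the form $\EE_{\bPhi_{r}}[\xi_{r}] = \Vol(\cG_{r})/\dg(\vo_{r})$, bounds $\dg(\vo_{r}) \geq \alpha_{r}|\cI_{r}|$ exactly as you do, and bounds $\Vol(\cG_{r}) \leq 2\sum_{u\in\cI_{r}}\dg_{\cG}(u)$ by noting every edge of $\cG_{r}$ is incident on $\cI_{r}$. Your version is more detailed in the edge bookkeeping (the $e_{\mathrm{in}}$ decomposition), but the argument is identical.
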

\begin{proof}
    Because $\alpha_r \cI_{r}$ vertices have at least one edge incident on $\vo_r$, $\dg_{\cG_{r}}(\vo_r) \geq \alpha_r \cI_{r}$.
    From \Cref{def.gr}, because all edges not incident on $\cI_{r}$ are removed from $\cG_{r}$, $\Vol(\cG_{r}) \leq 2 \sum_{u \in \cI_{r}} \dg_{\cG}(u)$.
    Therefore, from \Cref{lem.rwte},
    \begin{equation*}
        \EE_{\bPhi_{r}}[\xi_r]  = 
        \frac{\Vol(\cG_{r})}{\dg(\vo_{r})}
        \leq \frac{2 \sum_{u \in \cI_{r}} \dg_{\cG}(u)}{\alpha_r \cI_{r}} 
        = \frac{2 \bdg_{r}}{\alpha_r} \,.
    \end{equation*}
\end{proof}

\begin{proposition}
	\label{prop.subgraph.alpha}
	The \epart from \Cref{prop.subgraph.epart} is such that $\alpha_{r}=1$ for all $r>1$ as defined in \Cref{lem.stratum.ret.time}, and consequently, the diameter of each graph stratum is $\leq 4$.
	The total number of strata $R \in \Order(k\,\diam_{G})$, where $\diam_{G}$ is the diameter of $G$.
\end{proposition}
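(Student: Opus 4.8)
The plan is to prove the three assertions separately, drawing the first two from the structure of the stratification established in \Cref{prop.subgraph.epart} and the third from the explicit form of $\rho$. For $\alpha_{r}=1$: in the proof of \Cref{prop.subgraph.epart} it was already shown that for every $s\in\cV[\kdesc]$ with $\rho(s)=r>1$ there is a neighbor $s'\in\N_{\cG[\kdesc]}(s)$ with $\rho(s')<r$ strictly (obtained by swapping a nonarticulating vertex of $s$ for a vertex nearer the seeds). Since $\rho(s')<r$ forces $s'\in\cI_{1:r-1}$, every vertex of $\cI_{r}$ has a neighbor in $\cI_{1:r-1}$, which is precisely $\alpha_{r}=1$; no new work is needed beyond quoting that construction.

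Next, for the diameter bound I would show that the supernode $\vo_{r}$ has eccentricity at most $2$ in $\cG_{r}$. By \Cref{def.gr}, $\cV_{r}$ consists of $\vo_{r}$, the stratum $\cI_{r}$, and the later-stratum vertices adjacent to $\cI_{r}$, while $\cE_{r}$ is $\cJ_{r}$ together with the $\vo_{r}$--$\cI_{r}$ edges produced by the contraction. Using $\alpha_{r}=1$, each $u\in\cI_{r}$ has a neighbor in $\cI_{1:r-1}$, so after contraction $(\vo_{r},u)\in\cE_{r}$ and $u$ is at distance $1$ from $\vo_{r}$; each later-stratum vertex $w\in\cV_{r}$ is, by the very definition of $\cV_{r}$, adjacent to some $u\in\cI_{r}$, and that edge survives the construction because $(u,w)\in\cJ_{\min(\rho(u),\rho(w))}=\cJ_{r}\subseteq\cE_{r}$, placing $w$ at distance $\leq 2$ from $\vo_{r}$. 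Connectedness of $\cG_{r}$ (guaranteed since the stratification is an \EPART by \Cref{prop.subgraph.epart}) makes the diameter well defined, and eccentricity $\leq 2$ of $\vo_{r}$ yields $\diam(\cG_{r})\leq 4$.

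Finally, for the number of strata I would bound $\rho$ directly. Because $\cI_{1}$ contains a subgraph in every connected component of $G$, each $u\in V$ lies in the component of some seed vertex, so $\dist(u)$ is at most that component's diameter, hence $\dist(u)\leq\diam_{G}$. Since $|V(s)|=\kdesc$ for every $s\in\cV[\kdesc]$ and the indicator term in $\rho$ contributes at most $1$ per vertex, $\rho(s)\leq 1+\kdesc(\diam_{G}+1)$; maximizing over $s$ gives $R\leq 1+(k-1)(\diam_{G}+1)\in\Order(k\,\diam_{G})$.

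The only delicate step is the diameter argument: one must verify that the edge from a vertex of $\cI_{r}$ to one of its later-stratum neighbors genuinely remains in $\cG_{r}$ after the ``remove all edges not incident on $\cI_{r}$'' step (it does, being in $\cJ_{r}$), and that $\cV_{r}$ contains no later-stratum vertex lacking a neighbor in $\cI_{r}$ (ruled out by the definition of $\cV_{r}$ itself). Everything else is routine bookkeeping on top of \Cref{prop.subgraph.epart,lem.stratum.ret.time,def.gr}.
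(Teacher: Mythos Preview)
Your proof is correct and follows essentially the same approach as the paper: quote the neighbor-in-a-lower-stratum fact from the proof of \Cref{prop.subgraph.epart} to get $\alpha_r=1$, use this to show every vertex of $\cG_r$ is within distance $2$ of $\vo_r$ (hence diameter $\leq 4$), and bound $R$ by bounding $\rho$ via $\dist(u)\leq\diam_G$. Your treatment is in fact slightly more careful than the paper's, which writes the terse bound $R\leq(k-1)\max_u\dist(u)$ without tracking the $+1$ and indicator terms you account for, and which does not explicitly verify that the $\cI_r$--to--later-stratum edges survive into $\cE_r$.
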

\begin{proof}
    We show in \Cref{proof.prop.subgraph.epart} that for each vertex $s\in\cV[k-1]$, if $\rho(s) = r > 1$, there exists $s' \in \N(s)$ such that $\rho(s')<r$.
    This implies that $\alpha_{r} = 1$.
	In $\cG_{r}$, therefore, from $\vo_{r}$, all vertices in $\cI_{r}$ are at unit distance from $\vo_{r}$, and vertices in $\N(\cI_{r}) \backslash \cI_{r}$ are at a distance of $2$ from $\vo_{r}$.
    Because no other vertices are present in $\cG_{r}$, this completes the proof of the first part.
    Trivially, $R \leq (k-1) \cdot \max_{u\in V} \dist(u) \in \Order(k \cdot \diam_{G})$.
\end{proof}

\begin{proof}[Memory Complexity]
    From \Cref{alg:summary}, we compute a single count estimate per stratum and maintain reservoirs and inter-partition edge count estimates for each $2\leq q<t \leq R$.
    Because a reservoir $\hbU{q}{t}$ needs $\Order(k \rvrSize)$ space (\Cref{sec.res.sampl}), the total memory requirement is $\Order( R^{2} \, k \rvrSize)$, where $R$ is the number of strata.
    From \Cref{prop.subgraph.alpha}, plugging $R \in \Order(k \diam_{G})$, and because storing the output $\hmu$ requires $\Order(|\cH|)$ memory the proof is completed.
\end{proof}

\begin{proof}[Time Complexity]
    The stratification requires a single BFS $\in \Order(|V|+|E|)$ from \Cref{eps.subgraph}. 
    In \Cref{alg:first_strata1}, the estimation phase starts by iterating over the entire higher-order neighborhood of each subgraphs in $\cI_1$.
    Based on \Cref{sec.art.points}, \Cref{alg:first_strata2} is in $\Order(k^2)$.
    Because the size of the higher-order neighborhood of each subgraph is $\Order(k^2 \Delta_{G})$ from \Cref{sec.nei.alt.def}, the initial estimation phase will require $\Order(|\cI_1| \, k^4 \Delta_{G})$ time.
    
    In all other strata $r=2, \ldots, R$, we assume that $m$ tours are sampled in \Cref{alg:numtours}.  
    Starting each tour (\Cref{alg:start.tour.1,alg:start.tour.2,alg:start.tour.3}) requires order of magnitude $R$ time, leading to a total time of $\Order(m\,R^2) \in \Order(m k^2 \diam_{G}^2)$ because $R \in \Order(k \diam_{G})$ from \Cref{prop.subgraph.alpha}.
    The total time for these ancilliary procedures is $\Order(m k^2 \diam_{G}^2 + |\cI_1| \, k^4 \Delta_{G})$
    
	Therefore, the time complexity of bookkeeping and setup is $\Order(m k^2 \diam_{G}^2 + |\cI_1| \, k^4 \Delta_{G} + |V| + |E|) \in \widehat{\Order}(k^4)$.
	The time complexity at each random walk step is $\Order(\kdesc^{2}\Delta_{G} + \kdesc^4)\in \widehat{\Order}(k^4)$ from \Cref{proof.prop:r-sampling} and \Cref{sec.art.points}. 
	We assume that the expected number of rejections in \Cref{alg:rejection1} is given by $\rejec$.
    The total number of random walk steps is given by $\Order(R\,m\,\rejec)$ times the expected tour length.
    By \Cref{lem.stratum.ret.time,prop.subgraph.alpha}, the expected tour length is $\Order(\Delta_{\cG[k-1]})\equiv \Order(k^2 \Delta_{G})$.
    Therefore, the total number of random walk steps is $\Order(k^3  m \diam_{G}  \Delta_{G} \rejec)$.
    
    $\Order(|\cH|)$ time is to print the output $\hmu$.
    We assume that updating $\hmu$ is amortized in constant order if we use a hashmap to store elements of the vector, and because updating a single key in said hashmap is by \Cref{eq.mu.edge.sum} increments, the proof is completed.
\end{proof}

\section{Additional Results}
\label{app:additional-results}
We now present the results of additional experiments performed on \name.
\Cref{tab:disp-ripple} shows the dispersion, $\frac{\max-\min}{\mean}$, of the estimates that were used to measure the running time and space utilization of \name in \Cref{sec.scalability}.
\Cref{fig:acc-linf-5} shows the L-$\infty$ norm from the ground truth for $k=5$ with $\rvrSize=10^7$ while $\epsilon$ and $\cI_1$ vary.
\begin{table}\centering
	\scalebox{0.9}{
\begin{tabular}{l|cccc}
 \bottomrule
 \multirow{2}{*}{\textbf{Graph}}  & \multicolumn{4}{c}{\textbf{Rel. dispersion of estimates}}\\
   & $6$ & $8$ & $10$ & $12$\\
   \bottomrule   
          Amazon 
          & $0.203$
          & $0.241$
          & $0.285$
          & $0.268$\\
        DBLP
        & $0.023$
        & $0.023$
        & $0.041$
        & $0.054$\\
        Patents
        & $0.037$
        & $0.083$
        & $0.093$
        & $0.123$\\
        Pokec
        & $0.065$
        & $0.044$
        & $0.037$
        & $0.046$\\
        LiveJ.
        & $0.050$
        & $0.060$
        & $0.033$
        & $0.066$\\
        Orkut
        & $0.021$
        & $0.761$
        & $0.053$
        & $0.031$\\
\bottomrule
 \end{tabular}
}   \caption{Dispersion, $\frac{\max-\min}{\mean}$, of \name's estimates of $|\cV[k]|$ computed using $\epsilon=0.003$, $|\cI_1|=10^4$ and $\rvrSize=10^7$ for $k =6,8,10,12$ used in the analysis in \Cref{sec.scalability}. 
The selected hyper-parameters provide reasonably similar estimates over $10$ independent runs. 
  Orkut for $k=8$ exhibits the largest dispersion because of the presence of a single outlier.}
	\label{tab:disp-ripple}
	\vspace{-10pt}
\end{table} 
\begin{figure}\centering
\begin{minipage}[t]{0.28\linewidth}
	\centering
	\includegraphics[width=\linewidth]{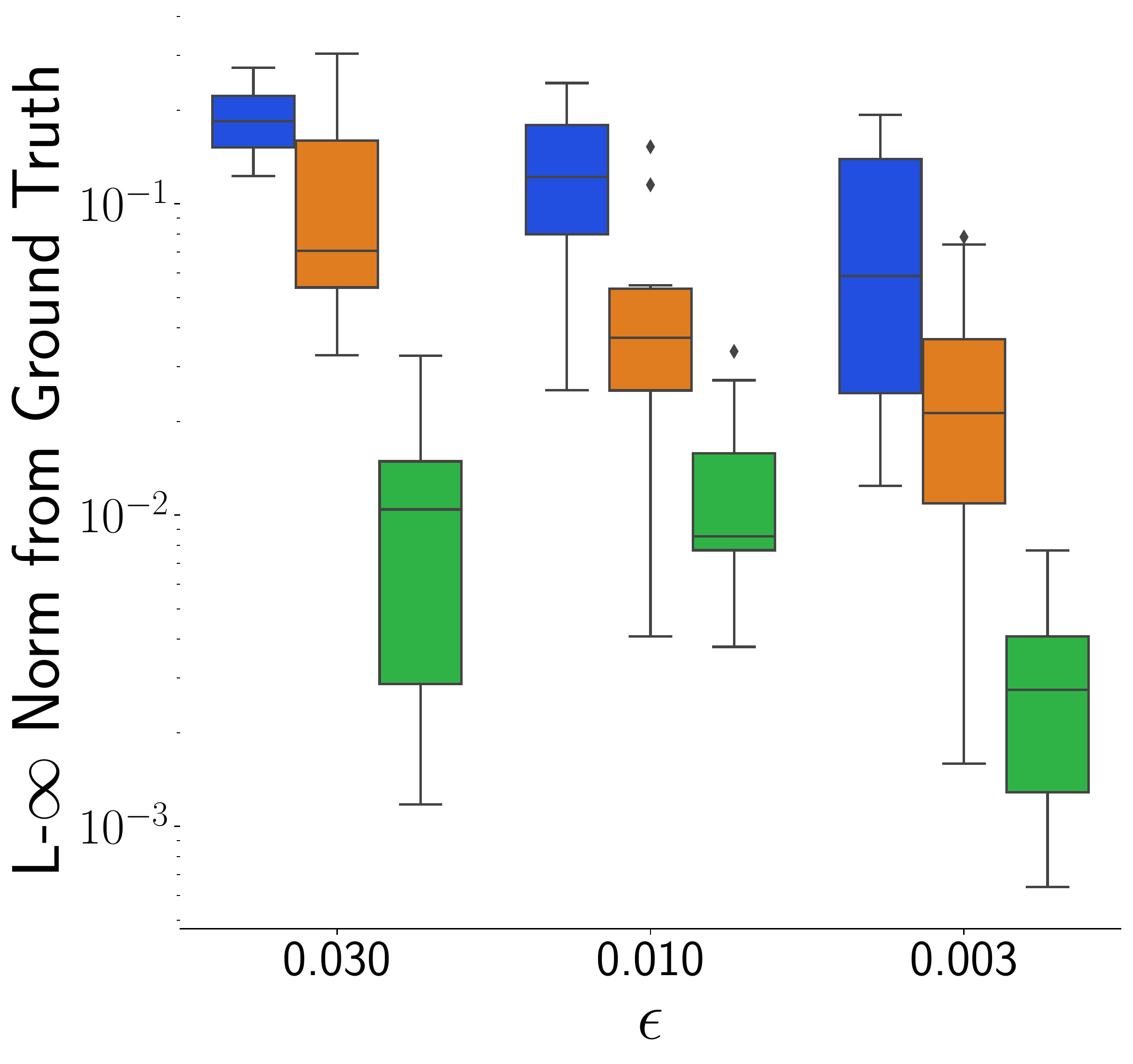}	
	\subcaption{Amazon}
\end{minipage}
\qquad
\begin{minipage}[t]{0.28\linewidth}
	\centering
	\includegraphics[width=\linewidth]{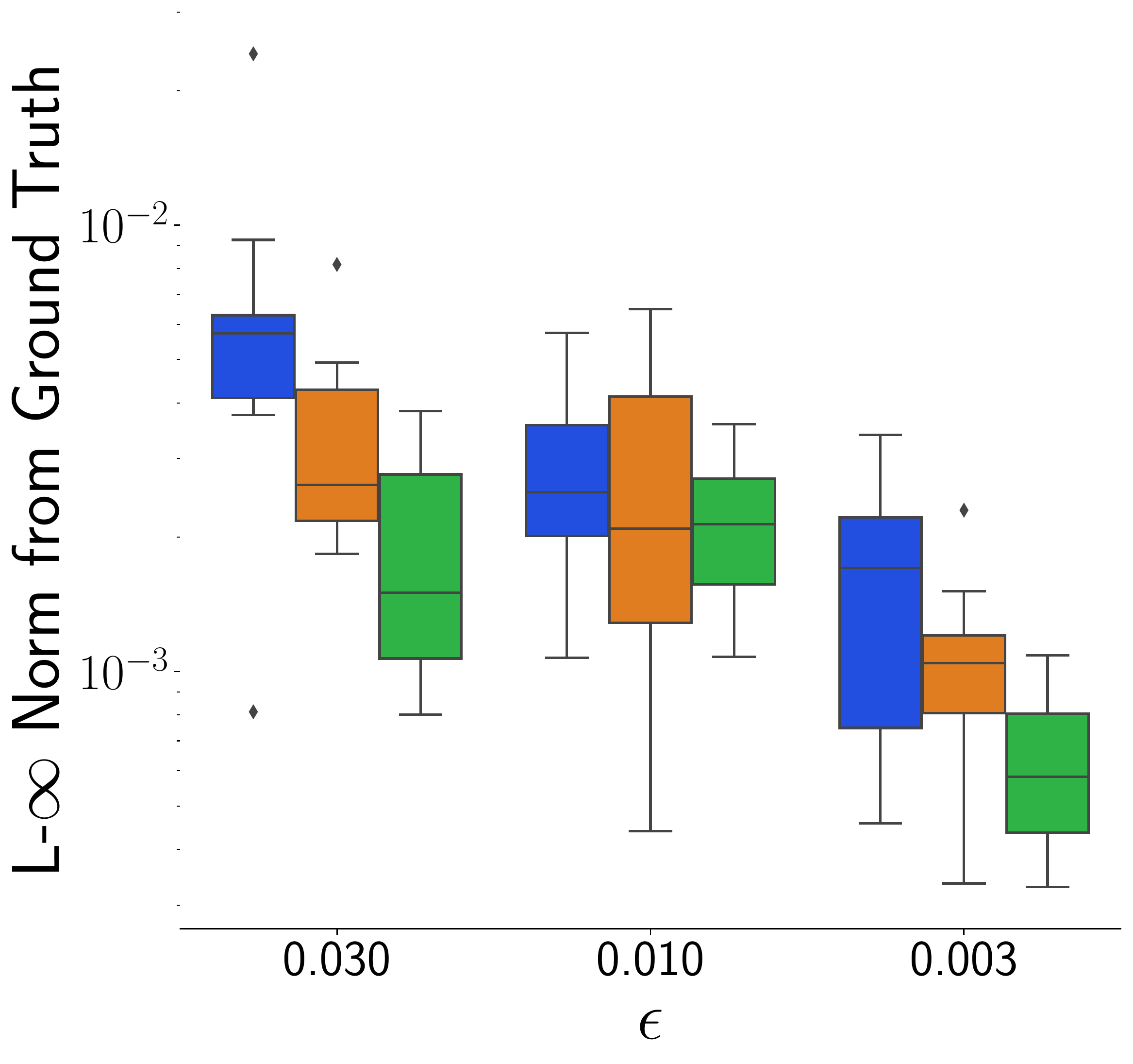}
	\subcaption{DBLP}
\end{minipage}
\qquad
\begin{minipage}[t]{0.28\linewidth}
	\centering
	\includegraphics[width=\linewidth]{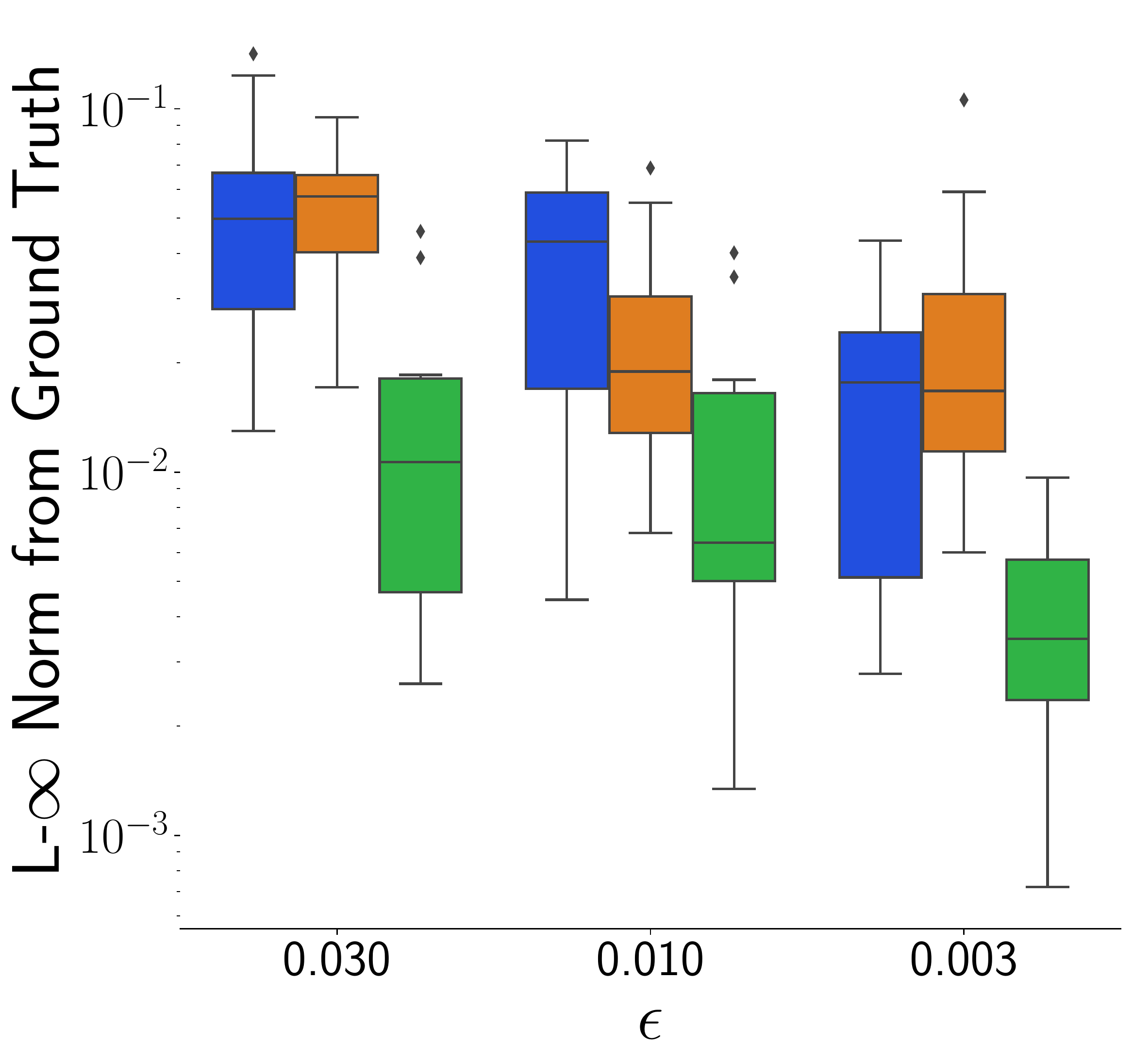}	
	\subcaption{Patents}
\end{minipage}
\qquad
\begin{minipage}[t]{0.28\linewidth}
	\centering
	\includegraphics[width=\linewidth]{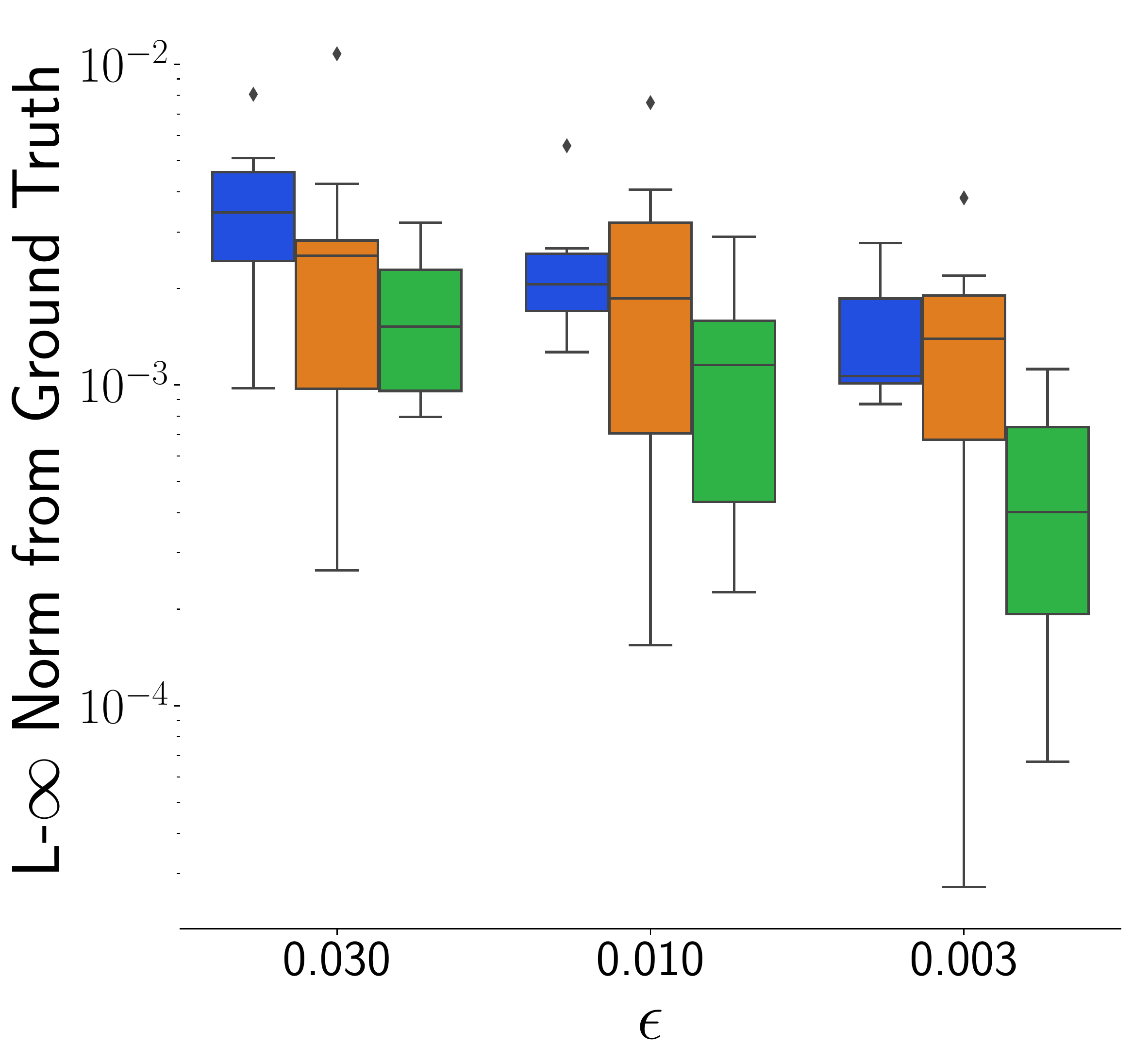}	
    \subcaption{Pokec}
\end{minipage}
\qquad
\begin{minipage}[t]{0.28\linewidth}
	\centering
	\includegraphics[width=\linewidth]{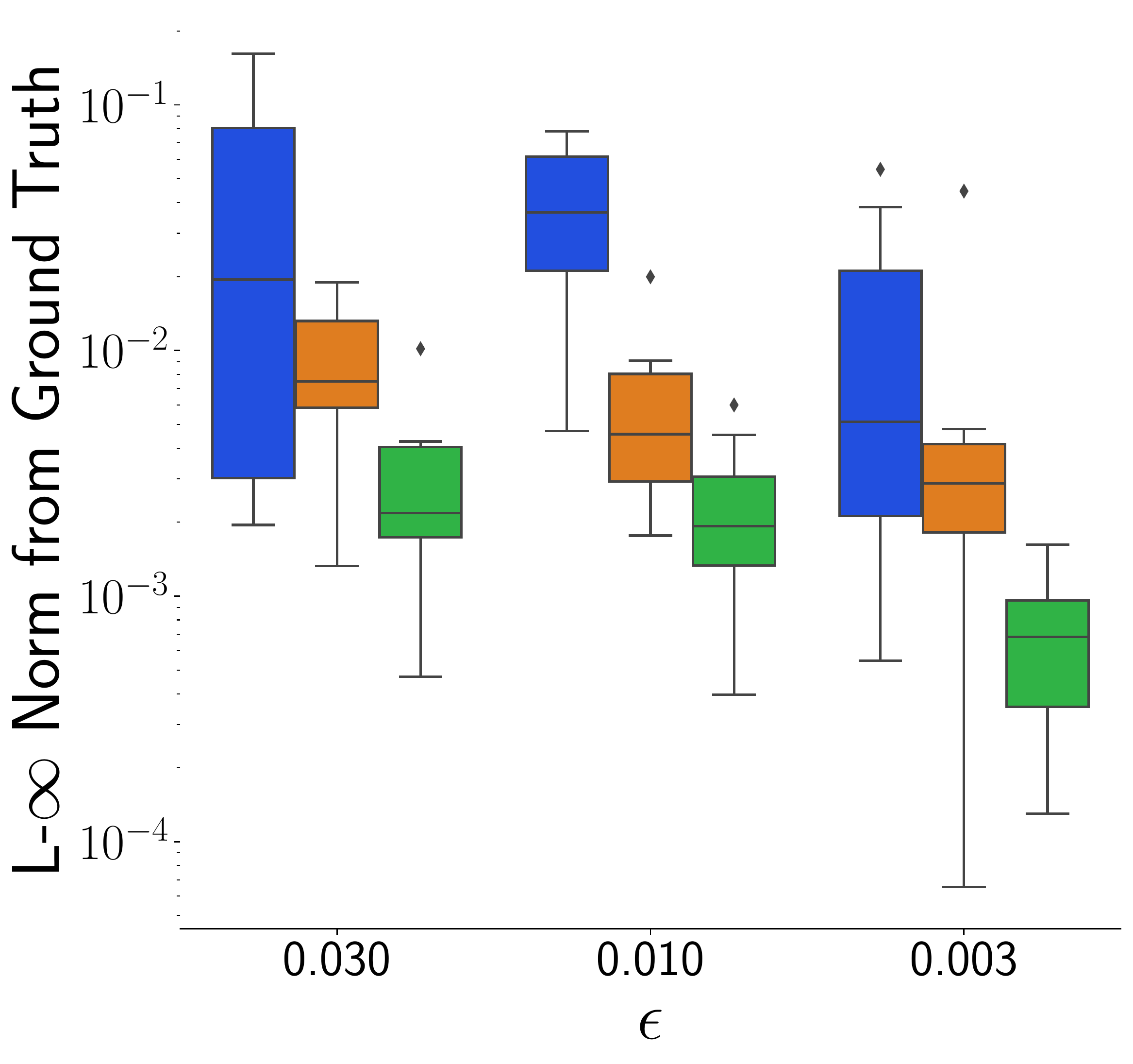}
	\subcaption{Live Journal}
\end{minipage}
\qquad
\begin{minipage}[t]{0.28\linewidth}
	\centering
	\includegraphics[width=\linewidth]{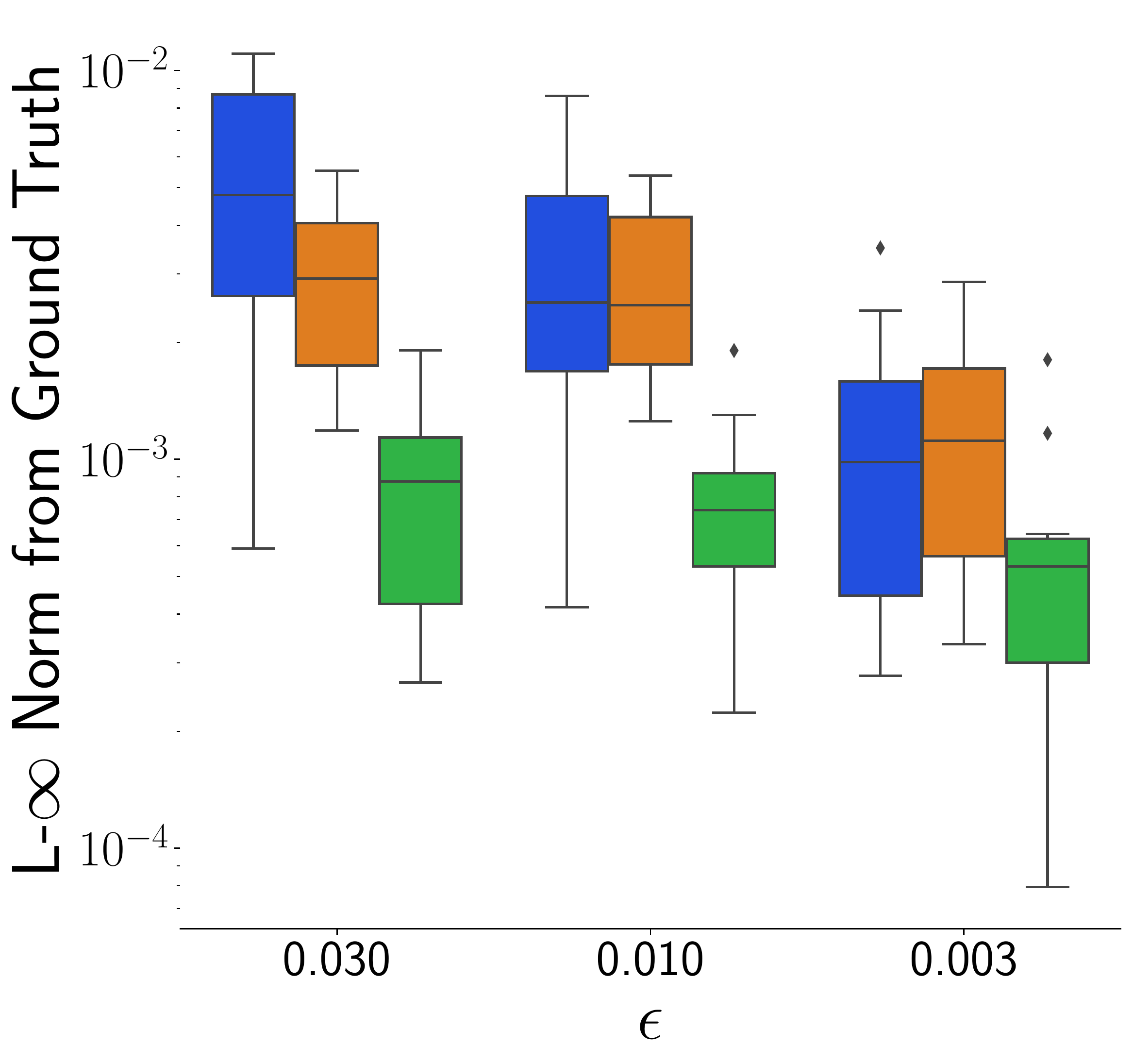}
	\subcaption{Orkut}
\end{minipage}
\caption{Accuracy and convergence analysis for \CIS[5]s using L-$\infty$ norm. 
The $y$-axes show the L-$\infty$ norm between the \name estimate and the ground truth vector $\cC[5]$, containing counts of all possible non-isomorphic subgraph patterns for various settings of $\epsilon$ and $\cI_1$. 
As expected, the accuracy improves as $\epsilon$ decreases and $|\cI_1|$ increases.
Each box and whisker represents $10$ runs.
}
\label{fig:acc-linf-5}
\end{figure}

\begin{figure}[!ht]
\centering
\begin{minipage}[t]{0.29\linewidth}
	\centering
	\includegraphics[width=\linewidth]{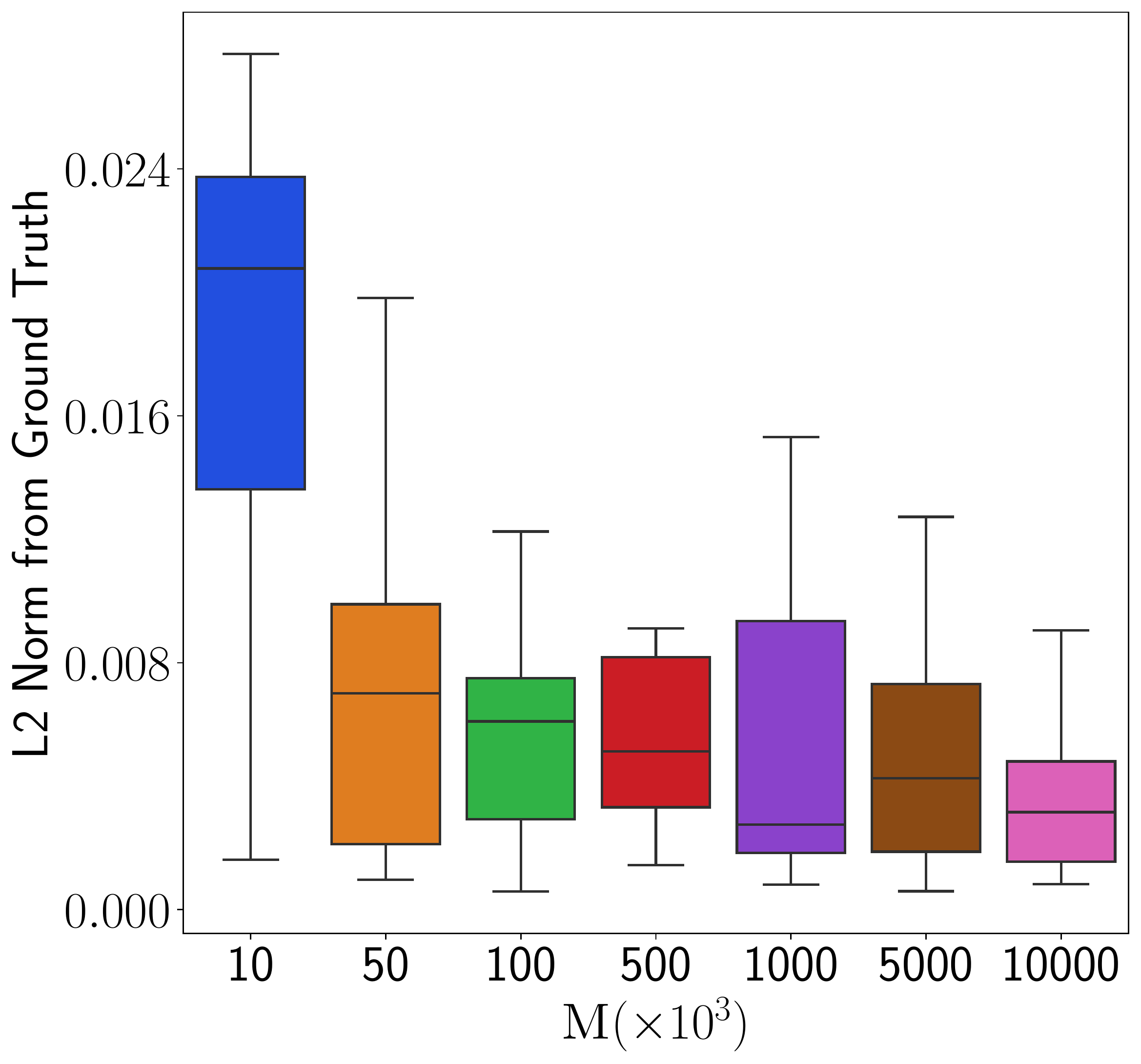}	\subcaption{Amazon}
\end{minipage}
\qquad
\begin{minipage}[t]{0.29\linewidth}
	\centering
	\includegraphics[width=\linewidth]{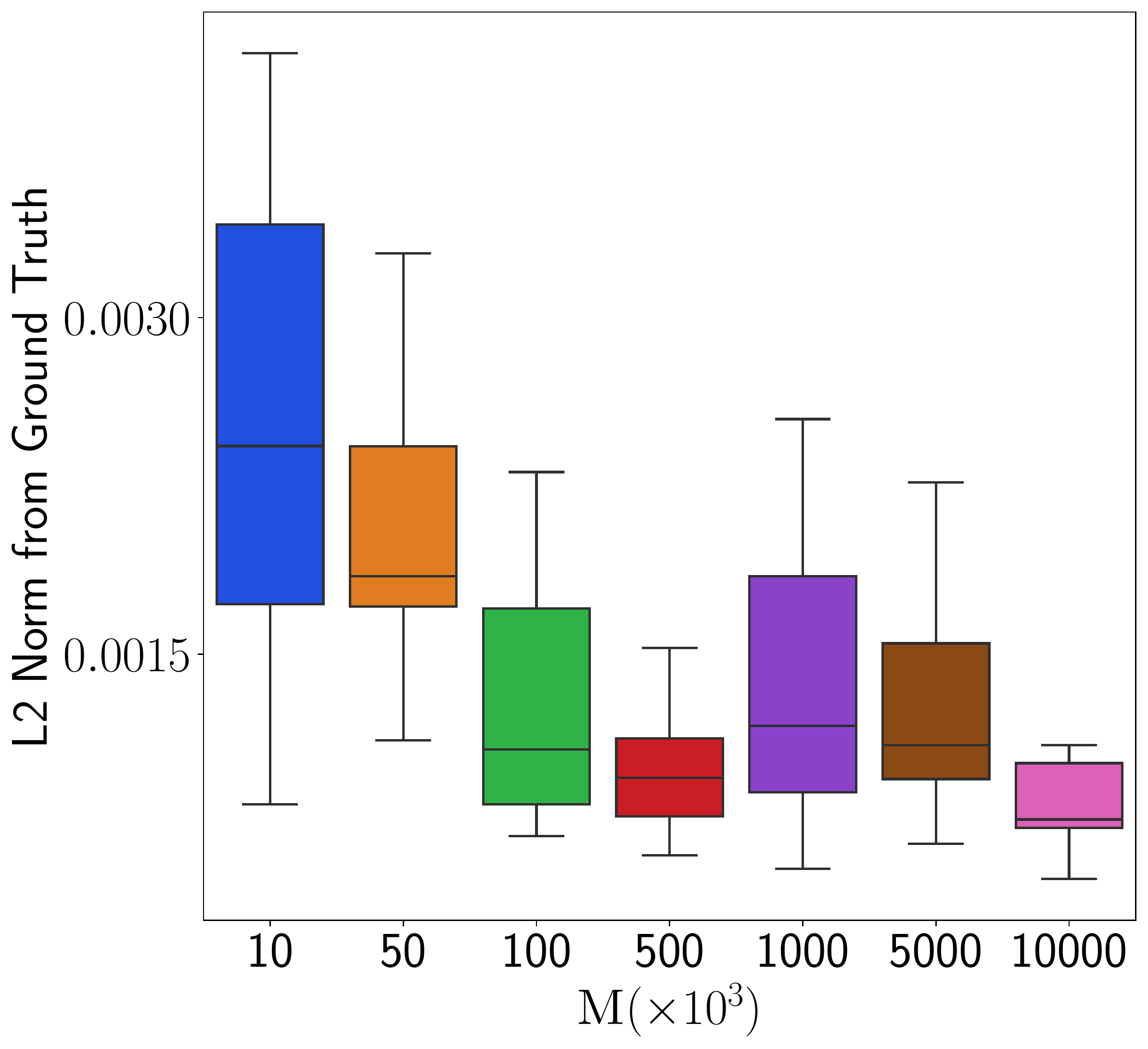}
	\subcaption{DBLP}
\end{minipage}
\qquad
\begin{minipage}[t]{0.29\linewidth}
	\centering
	\includegraphics[width=\linewidth]{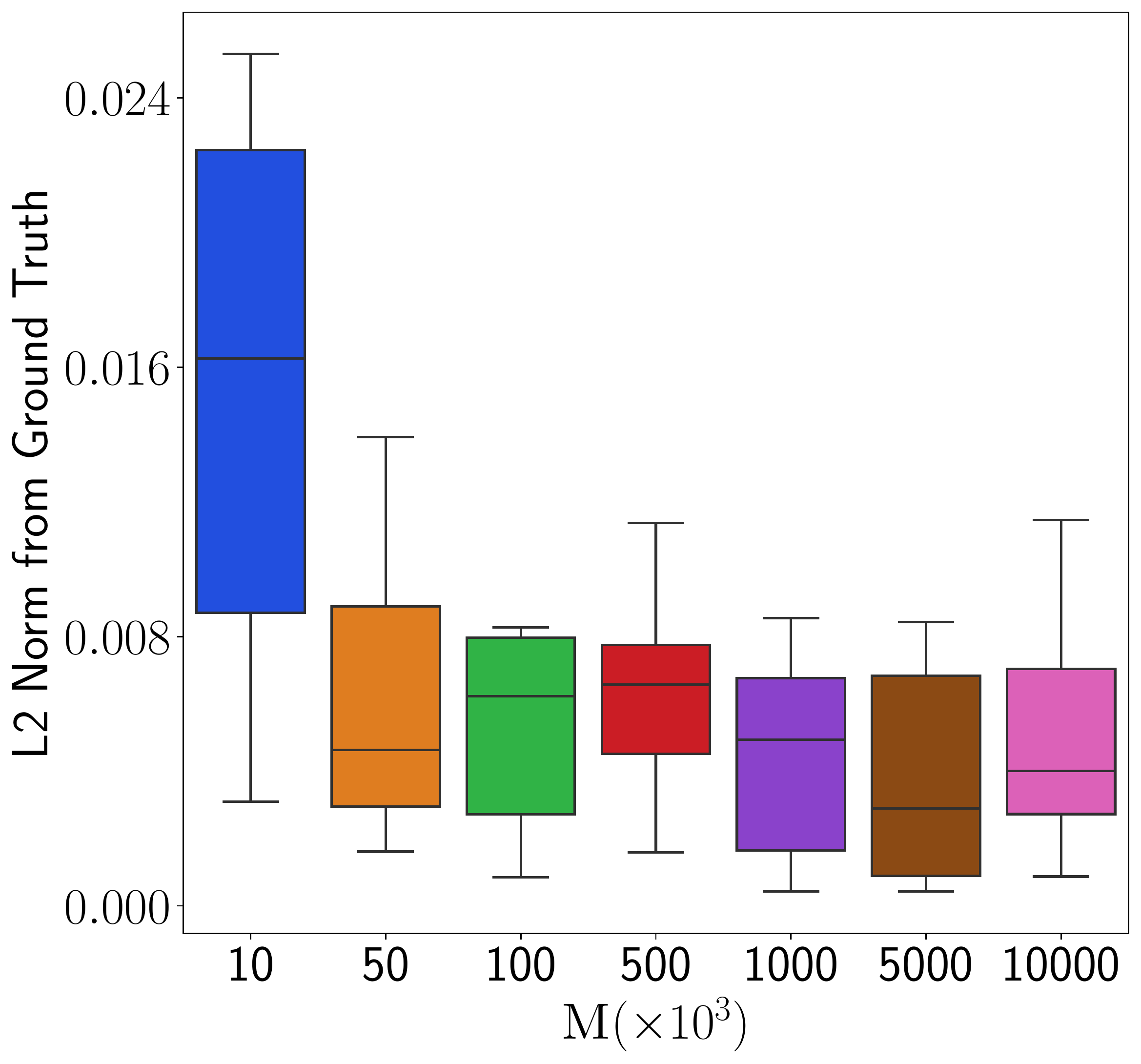}	
	\subcaption{Patents}
\end{minipage}
\qquad
\begin{minipage}[t]{0.29\linewidth}
	\centering
	\includegraphics[width=\linewidth]{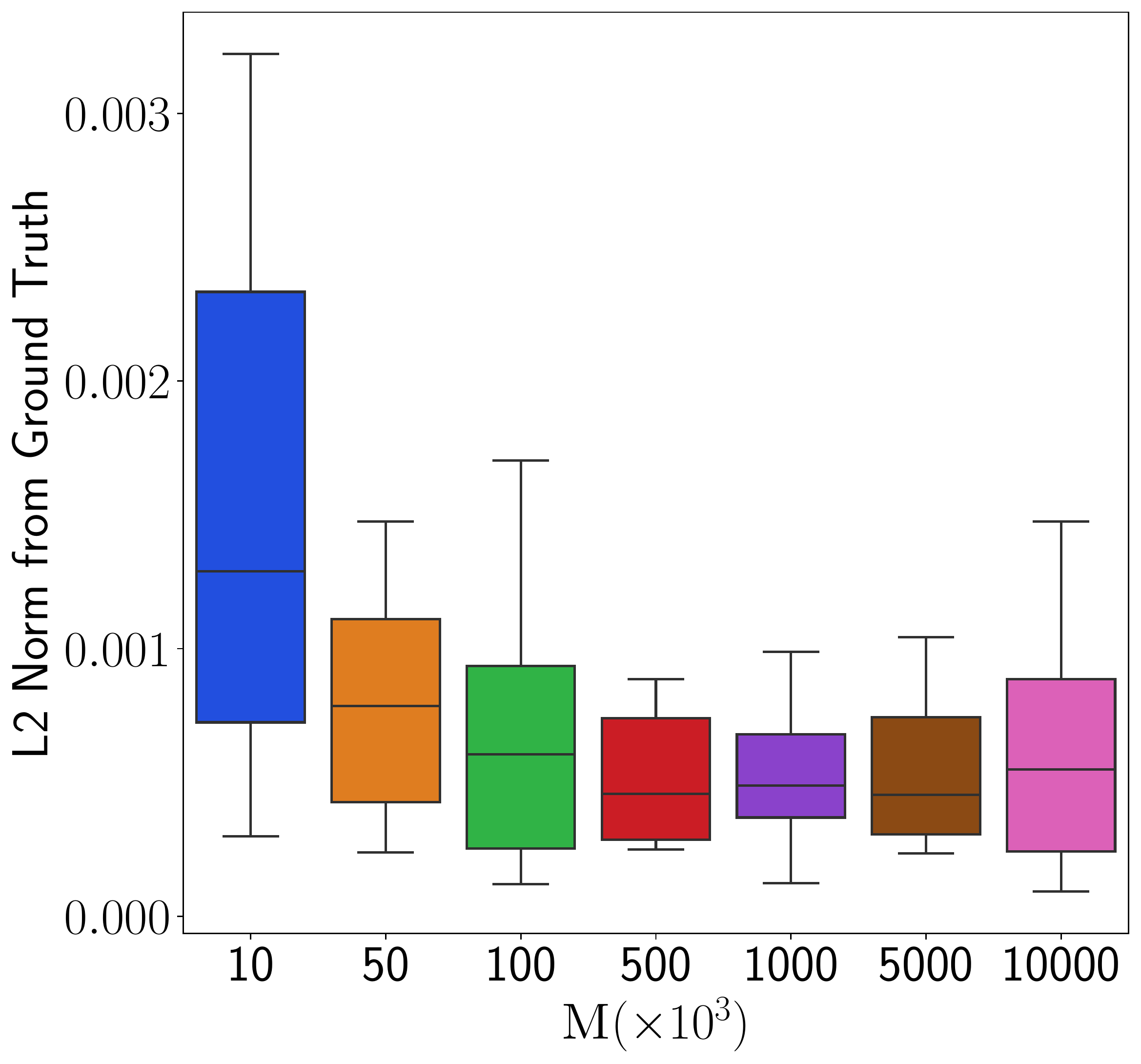}	
    \subcaption{Pokec}
\end{minipage}
\qquad
\begin{minipage}[t]{0.29\linewidth}
	\centering
	\includegraphics[width=\linewidth]{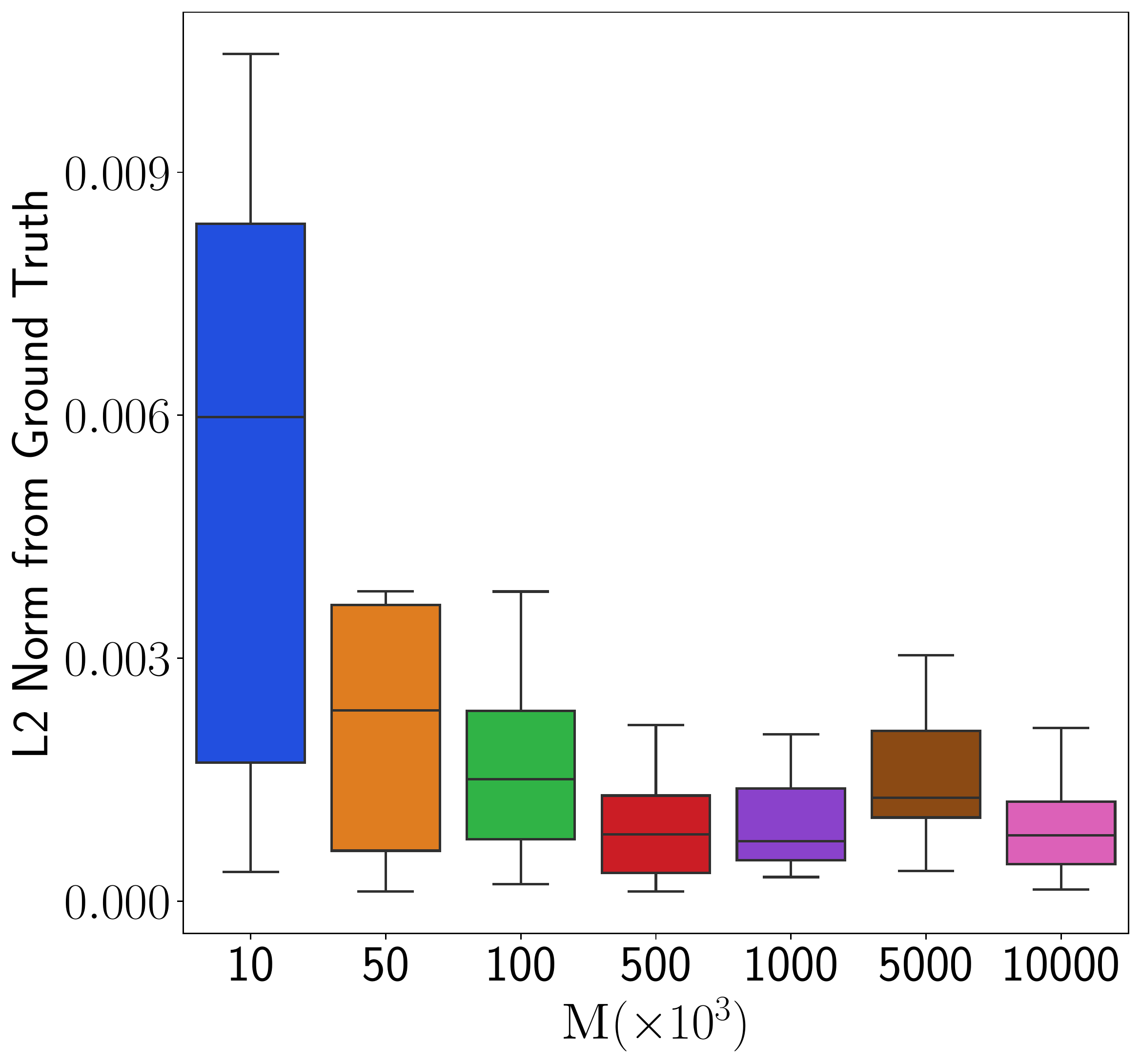}
	\subcaption{Live Journal}
\end{minipage}
\qquad
\begin{minipage}[t]{0.29\linewidth}
	\centering
	\includegraphics[width=\linewidth]{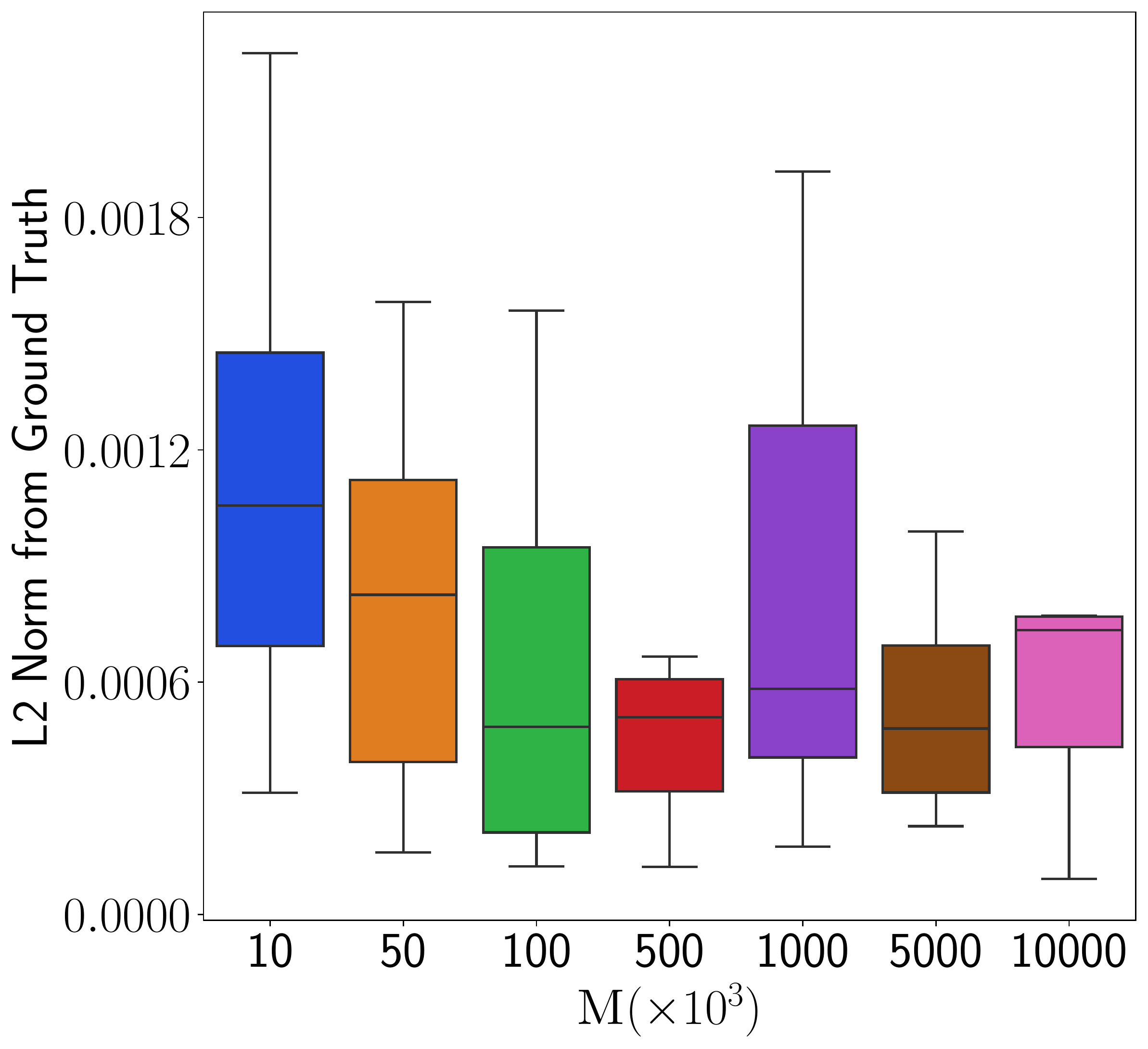}
	\subcaption{Orkut}
\end{minipage}
\caption{Sensitivity of \name to the reservoir capacity $\rvrSize$ for $k=5$. 
We verify that a larger reservoir improves the accuracy of \name estimates in all graphs because it reduces oversampling bias. Each box and whisker plot represents $10$ runs.}
\label{fig:acc-rvr-5}
\end{figure} 
\paragraph{Trade-off between Convergence and Reservoir size.}
Next, we measure the effect of the reservoir capacity $\rvrSize$ on accuracy, as discussed in \Cref{sec.implementation}. 
We vary $\rvrSize$ from $50000$ to $10^7$ while keeping the other parameters fixed as $\epsilon=0.003$ and $|\cI_1|=10^4$ and measure the L2-norm between the \name estimate and the exact value of the count vector $\cC[5]$, such as in \Cref{sec:conv-analysis}.
We see that larger reservoirs reduce oversampling bias and improve the convergence and accuracy in all datasets.

 \begin{figure}[!ht]
\centering
\begin{minipage}[t]{0.29\linewidth}
	\centering
	\includegraphics[width=\linewidth]{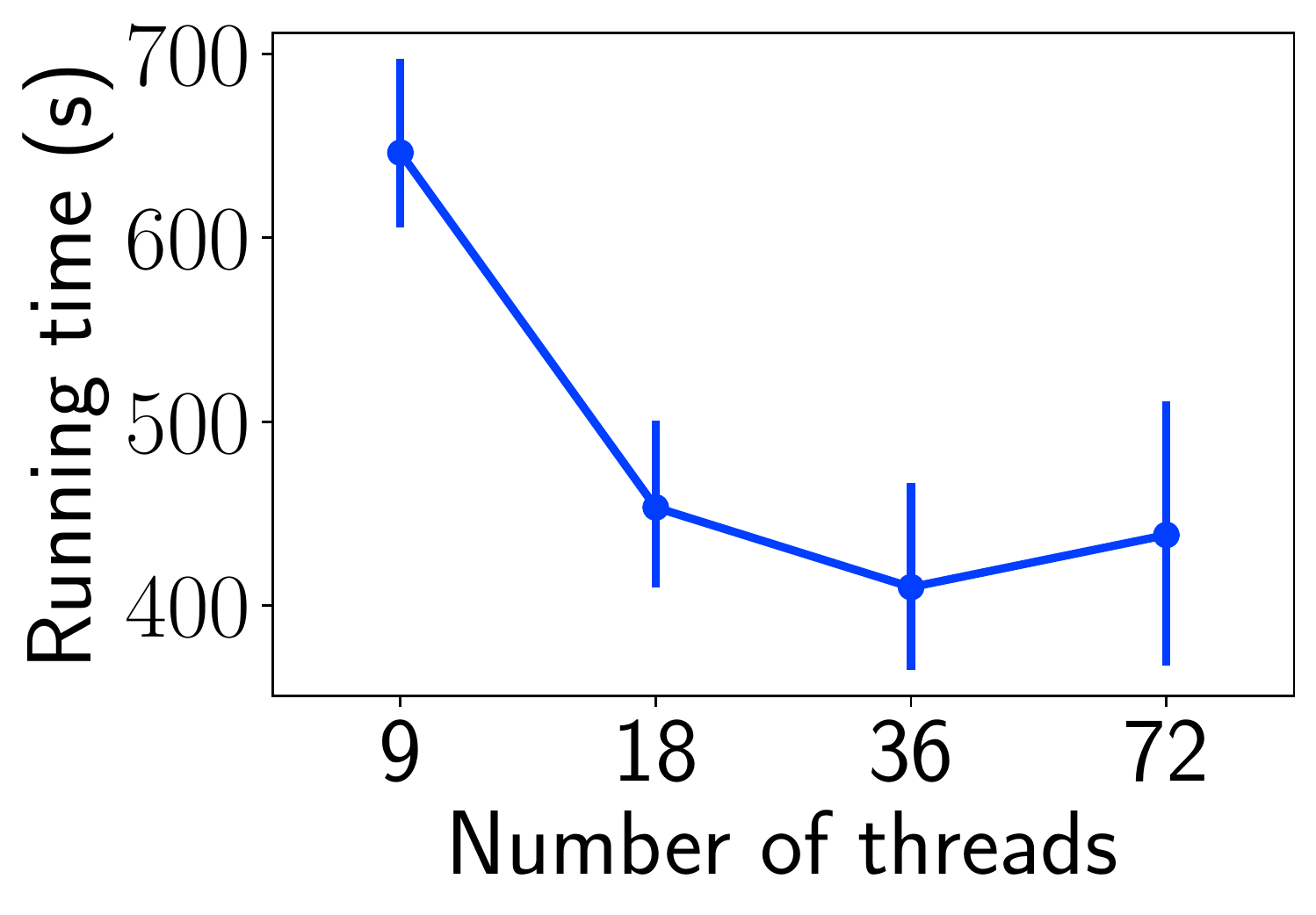}	
	\subcaption{Amazon}
\end{minipage}
\qquad
\begin{minipage}[t]{0.29\linewidth}
	\centering
	\includegraphics[width=\linewidth]{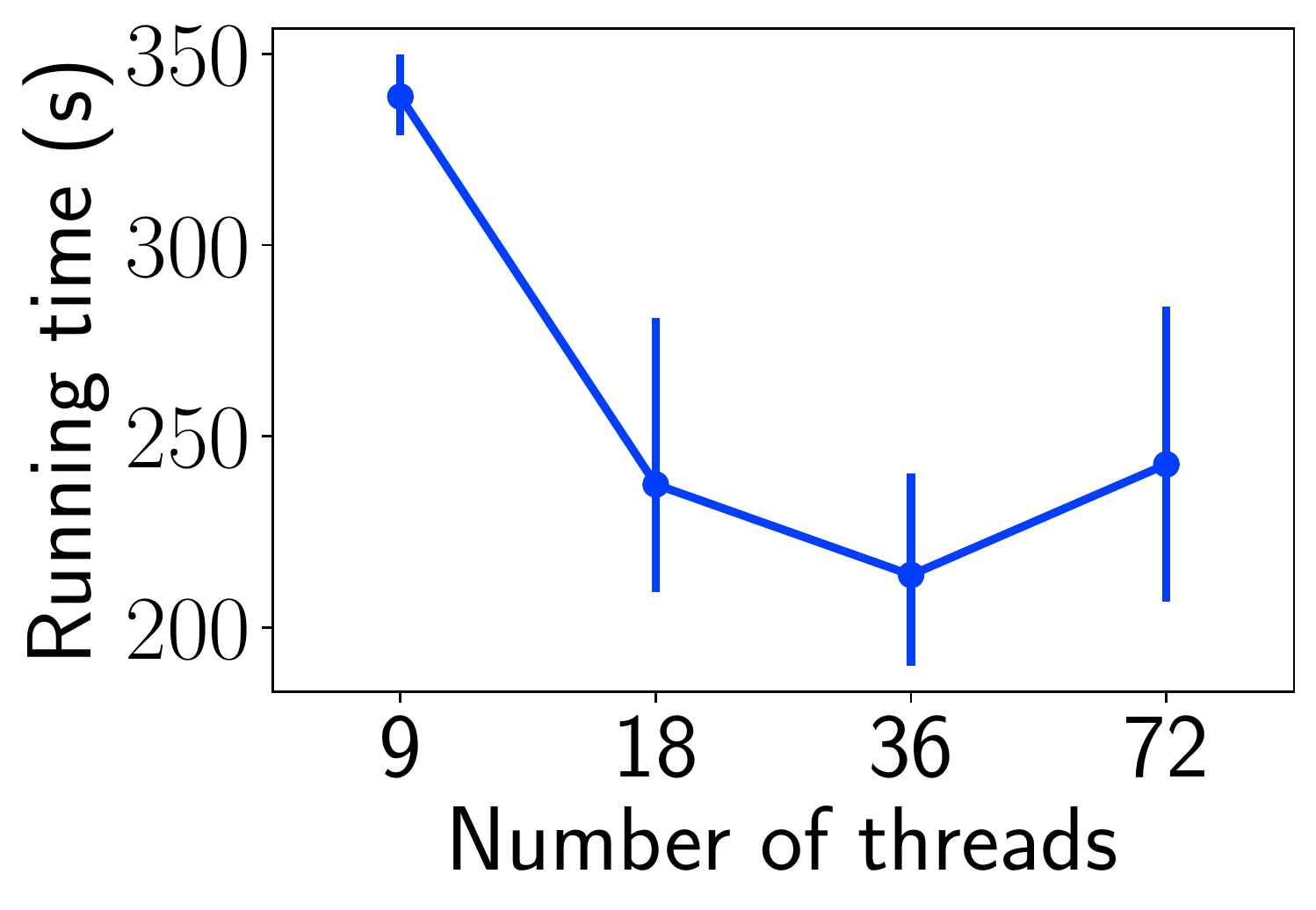}
	\subcaption{DBLP}
\end{minipage}
\qquad
\begin{minipage}[t]{0.29\linewidth}
	\centering
	\includegraphics[width=\linewidth]{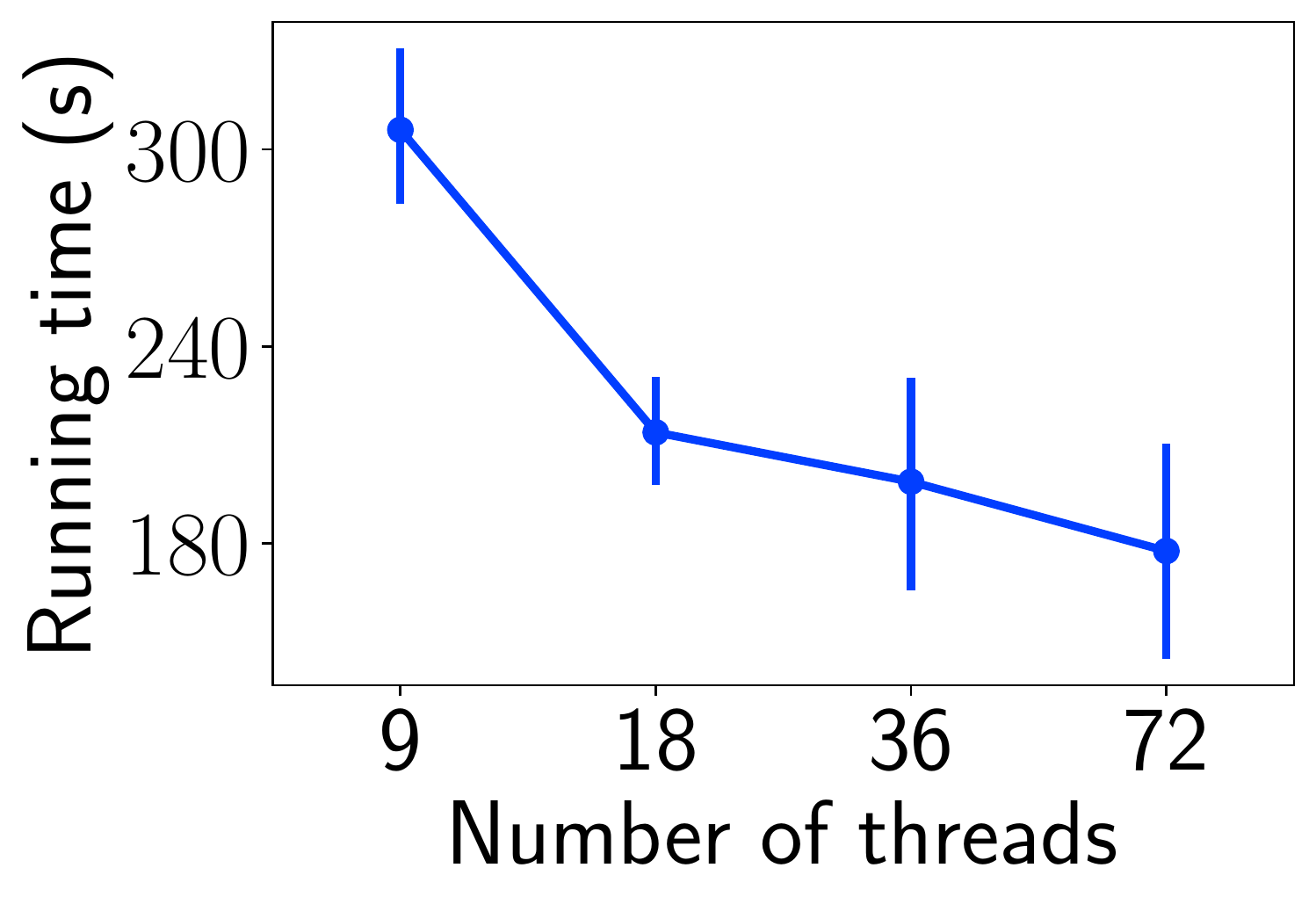}	
	\subcaption{Patents}
\end{minipage}
\qquad
\begin{minipage}[t]{0.29\linewidth}
	\centering
	\includegraphics[width=\linewidth]{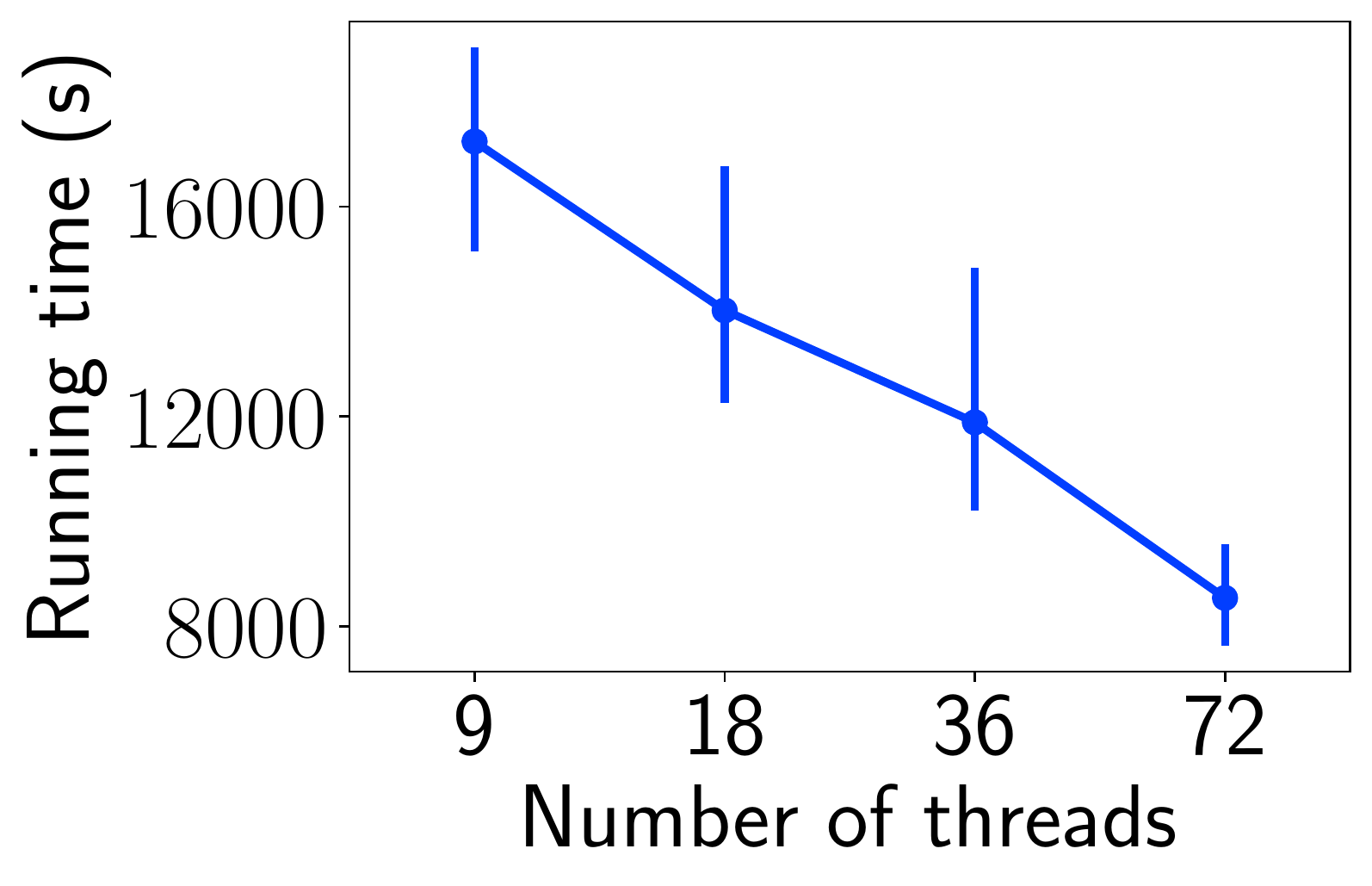}	
    \subcaption{Pokec}
\end{minipage}
\qquad
\begin{minipage}[t]{0.29\linewidth}
	\centering
	\includegraphics[width=\linewidth]{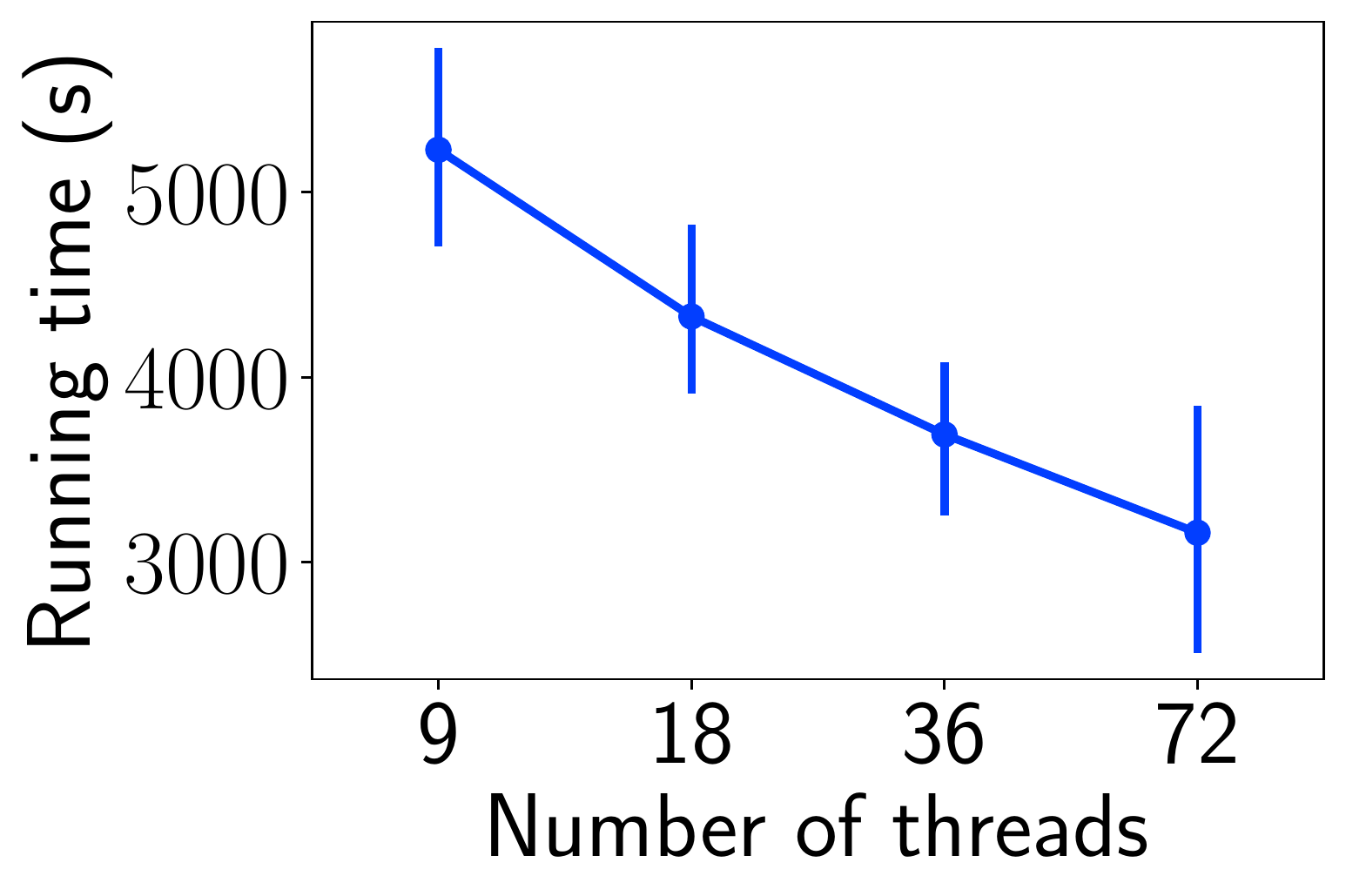}
	\subcaption{Live Journal}
\end{minipage}
\qquad
\begin{minipage}[t]{0.29\linewidth}
	\centering
	\includegraphics[width=\linewidth]{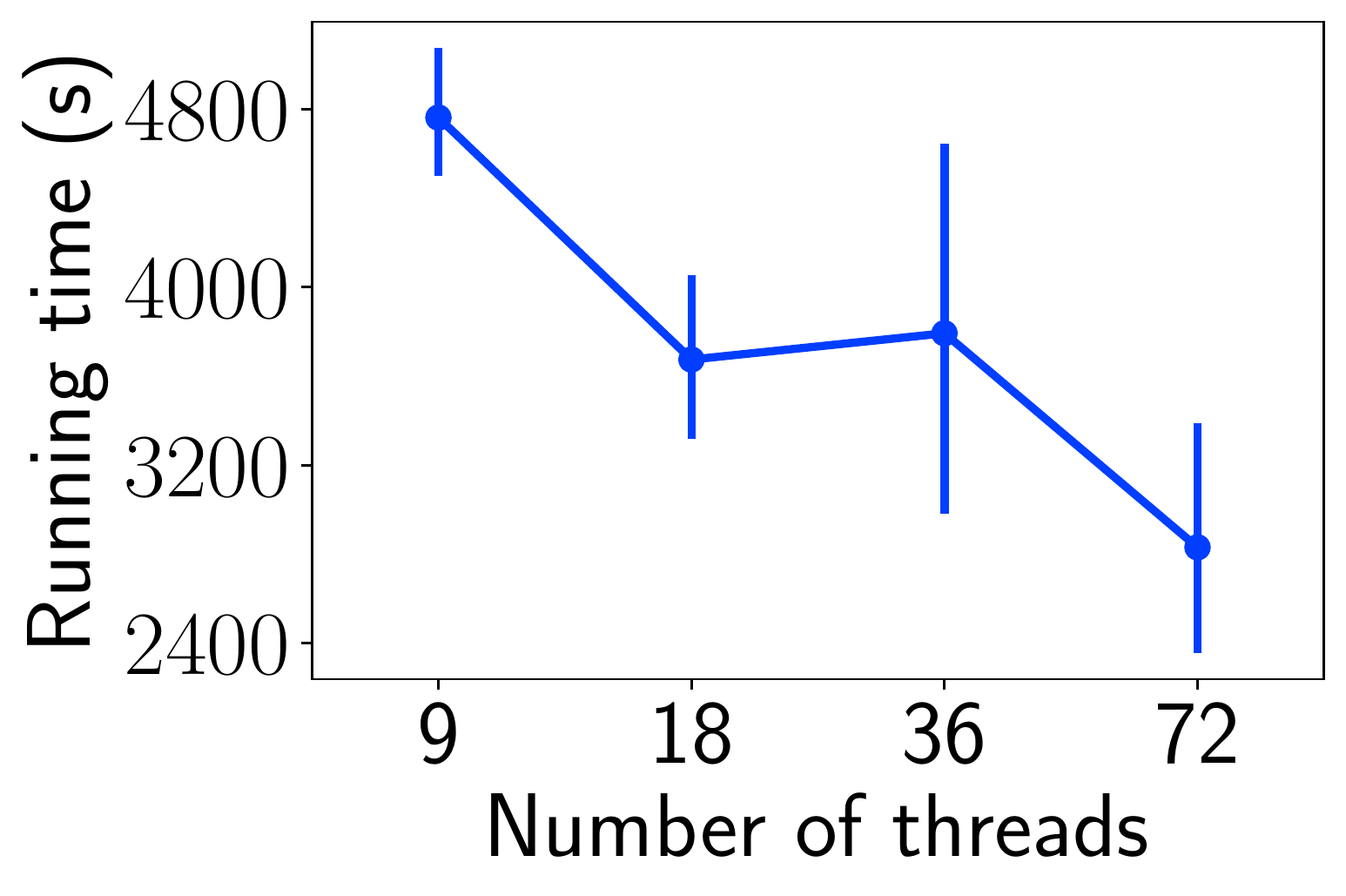}
	\subcaption{Orkut}
\end{minipage}
\caption{Scalability w.r.t.\ the Number of Threads for \CIS[5]. 
Despite a noticeable reduction in running time, the scalability is not linear. 
In fact, as we double the number of cores, the running time decreases by approximately a quarter instead of half, which may be due to the memory bandwidth limit coupled with the lack of memory locality, which is a ubiquitous problem in graph mining algorithms.}
\label{fig:scl-5}
\end{figure} 

\paragraph{Scalability on Number of Threads (\Cref{fig:scl-5}).}
In this experiment, because of \Cref{eq.auto.tours}, we set $\epsilon = 0.001$ to force a larger number of tours, thereby increasing the load per core and ensuring a sufficient workload. 
Further, we fix $k=5$, set $|\cI_1|=10^4$, $\rvrSize=10^7$ and compute running times over 10 executions while excluding the graph read time, which is not parallel.  
We observe that our implementation does not scale linearly: as we double the number of cores, the running time decreases by $\approx \nicefrac{1}{4}$ rather than $\nicefrac{1}{2}$. 
Local profiling using hardware performance counters (Linux's \textit{perf})
suggests that this overhead is an outcome of increased 
random-access patterns of in-memory graph data, which limits the overall use 
of the underlying processing pipeline.
Indeed, sub-optimal access patterns of graph data are a known issue that is currently
handled by dedicated accelerator hardware deploying optimized and specific 
caching mechanisms and memory access policies for workloads dominated by subgraph 
enumeration \citep{Accelerator}. 

\end{document}